\documentclass[12pt, leqno]{article}

\usepackage{amsmath}
\usepackage{amssymb}
\usepackage{geometry}
\geometry{margin=0.9in}
\usepackage{amsthm}
\usepackage{amscd}
\usepackage{graphicx}
\usepackage{tikz-cd}
\usepackage{mathtools}
\usepackage[mathscr]{euscript}
\usepackage{setspace}

\usepackage{stmaryrd}

\usepackage{graphicx}
\usepackage{tikz}

\usepackage{fancyhdr}
\pagestyle{fancy}
\fancyhf{}
\rhead{Filip Dul}
\lhead{General Covariance from the Viewpoint of Stacks}
\rfoot{Page \thepage}

\usepackage[indentafter]{titlesec}
\titleformat{name=\section}{}{\thetitle.}{0.8em}{\centering\scshape}
\titleformat{name=\subsection}[runin]{}{\thetitle.}{0.5em}{\bfseries}[.]
\titleformat{name=\subsubsection}[runin]{}{\thetitle.}{0.5em}{\itshape}[.]
\titleformat{name=\paragraph,numberless}[runin]{}{}{0em}{}[.]
\titlespacing{\paragraph}{0em}{0em}{0.5em}
\titleformat{name=\subparagraph,numberless}[runin]{}{}{0em}{}[.]
\titlespacing{\subparagraph}{0em}{0em}{0.5em}

\usepackage[obeyspaces]{url}
\usepackage{hyperref}
\hypersetup{
    colorlinks=true,
    linkcolor=blue,
    filecolor=magenta,      
    urlcolor=cyan,
    filecolor=red,
    citecolor=cyan,
}

\graphicspath{ {Desktop/} }
 \geometry{letterpaper}

\newtheorem{thm}{Theorem}[section]

\newtheorem{prop}[thm]{Proposition}
\newtheorem{lem}[thm]{Lemma}
\newtheorem{cor}[thm]{Corollary}
\theoremstyle{definition}
\newtheorem{defn}[thm]{Definition}
\newtheorem{ex}[thm]{Example}
\newtheorem{rmk}[thm]{Remark}
\newtheorem{const}[thm]{Construction}

\DeclareMathOperator{\Tr}{Tr}

\DeclareMathOperator{\act}{\ \rotatebox[origin=c]{+90}{$\circlearrowleft$}\ }

\setstretch{1.3}

\makeatletter
\newcommand*{\doublerightarrow}[2]{\mathrel{
  \settowidth{\@tempdima}{$\scriptstyle#1$}
  \settowidth{\@tempdimb}{$\scriptstyle#2$}
  \ifdim\@tempdimb>\@tempdima \@tempdima=\@tempdimb\fi
  \mathop{\vcenter{
    \offinterlineskip\ialign{\hbox to\dimexpr\@tempdima+1em{##}\cr
    \rightarrowfill\cr\noalign{\kern.5ex}
    \rightarrowfill\cr}}}\limits^{\!#1}_{\!#2}}}
\newcommand*{\triplerightarrow}[1]{\mathrel{
  \settowidth{\@tempdima}{$\scriptstyle#1$}
  \mathop{\vcenter{
    \offinterlineskip\ialign{\hbox to\dimexpr\@tempdima+1em{##}\cr
    \rightarrowfill\cr\noalign{\kern.5ex}
    \rightarrowfill\cr\noalign{\kern.5ex}
    \rightarrowfill\cr}}}\limits^{\!#1}}}
\makeatother

\begin{document}

\begin{center}
\large{\textbf{General Covariance from the Viewpoint of Stacks}} \\

Filip Dul\footnote{Department of Mathematics, Rutgers -- New Brunswick, Piscataway NJ 08854, United States of America. \\ \indent \hspace{0.16cm} ORCID: 0000-0001-8623-0293. \\
\indent \hspace{0.16cm} Author email address: fdul.math@gmail.com \\} 

\end{center}

\begin{abstract}
General covariance is a crucial notion in the study of field theories on curved spacetimes. A field theory defined with respect to a semi-Riemannian metric is generally covariant if two metrics on $X$ which are related by a diffeomorphism produce equivalent physics. From a purely mathematical perspective, this suggests that we try to understand the quotient stack of metrics modulo diffeomorphism: we will use the language of groupoids to do this concretely. Then we will inspect the tangent complex of this stack at a fixed metric, which when shifted up by one defines a differential graded Lie algebra. By considering the action of this Lie algebra on the observables for a Batalin-Vilkovisky scalar field theory, we recover a novel expression of the stress-energy tensor for that example, while describing how this works for a general class of theories. We will describe how this construction nicely encapsulates but also broadens the usual presentation in the physics literature and discuss applications of the formalism. 
\end{abstract}

\textbf{Key words:} Stacks, Formal Derived Geometry, Curved Spacetimes, \\ \indent Batalin-Vilkovisky Formalism, Gravitation, Conserved Quantities.

\tableofcontents

\section{Introduction}
\subsection{Overview} Over a hundred years ago, when Albert Einstein and a group of others were laying the foundations of general relativity, general covariance became an essential ingredient in formulating physics in curved spacetimes. Roughly, a field theory coupled to a background metric on a spacetime $X$ is said to be generally covariant if the diffeomorphism group of $X$ is a symmetry of the theory. Physicists usually interpret diffeomorphisms as coordinate changes, so they may say that a theory exhibits general covariance if it is coordinate-invariant: i.e. a theory may superficially change to a distinct one if the coordinates are changed, but if it is generally covariant, then those ``two" theories are equivalent in a way which we will make rigorous. Although general covariance can be understood in the context of \textit{all} field theories, it is often considered in the context of field theories coupled to semi-Riemannian metrics: this is the \textit{particular case} we will focus on in this paper, but we will moreover argue that this particular case is of central importance.

The primary aim of this paper is to package general covariance in the Batalin-Vilkovisky formalism for classical field theories, especially as it is presented in \cite{cosgwill2}. Equivariant vector bundles are an appropriate geometric tool for understanding families of field theories parameterized by a space of Riemannian or Lorentzian metrics. We therefore review the global theory of such bundles, explain its relevance for general covariance in Section \ref{GCsection}, and then cast much of it in the language of stacks in Section \ref{groupoids}. Stacks provide a generalized notion of space which allows us to deal with quotient spaces that may be singular or otherwise forget interesting information about the original space. We will describe how a bundle of generally covariant classical Batalin-Vilkovisky (BV) field theories over the space of metrics on $X$, denoted $\mathscr{M}$, descends to a bundle over the quotient stack $[\mathscr{M}/\mathscr{D}]$ of the metrics modulo the diffeomorphism group of $X$, denoted $\mathscr{D}$.  Our definition for general covariance is equivalent to the following:

\begin{defn}
A bundle $\pi: (\mathscr{F}, \{S,-\}) \to \mathscr{M}$ of Batalin-Vilkovisky field theories on a compact manifold $X$	 is \textbf{generally covariant} if it descends to a bundle of stacks 
$$
\pi: ([\mathscr{F/D}], \{S,-\}) \to [\mathscr{M/D}].
$$
\end{defn}

Before and after introducing this definition, we discuss in detail how both scalar field theory and Yang-Mills theory are generally covariant in our sense as BV field theories. Indeed, one of our central results is Theorem \ref{YMequivariance}, in which we show that Yang-Mills theory is generally covariant, because it serves as a nexus for views on covariance from other sources in the literature, as we discuss in the surrounding commentary. We make remarks on functorial aspects of our work (mostly in Subsection \ref{functors}) which are important in their own right and also serve as points of comparison to the prevailing literature--both to the factorization algebra framework (as seen in \cite{cosgwill1} and \cite{cosgwill2}) the author was trained in, but also to the AQFT framework (as seen in \cite{rejzner}) the author desires to better understand.

Much of our concrete computations are in the regime of perturbative field theories, so we consider formal neighborhoods in the quotient stack $[\mathscr{M}/\mathscr{D}]$, which in turn lead us to understanding the field theories as examples in derived deformation theory: in brief, we associate to a generally covariant theory a formal moduli problem, as defined in \cite{lurie}, by pulling back the above bundle of stacks over the formal neighborhood of a metric $[g] \in [\mathscr{M/D}]$. We then compute the function ring for this pullback over a formal stack and show that it gives us a ring of equivariant classical observables, as defined in \cite{cosgwill2}: Proposition \ref{themainthm} is thus one of the primary results of this paper, in that it explicitly links the stacky geometry presented earlier with the usual factorization algebra framework of Noether's Theorem presented in Costello and Gwilliam's books. This perspective is a beautiful fusion of Emmy Noether's foundational work in both homological algebra and symmetries in physics: homological algebra allows us to put external symmetries (perturbations) and internal symmetries (isometries) on equal footing, so that we can state a more fully encompassing form of Noether's Theorem.

In Remark \ref{equivariantfunctional} and Section \ref{stressenergymomentum}, we consider the conservation of the stress-energy tensor $T_{\mu\nu}$, the conserved quantity associated to general covariance via Noether's Theorem, in derived deformation theoretic terms: essentially, $T_{\mu\nu}$ tells us how the aforementioned formal stack acts on the field theories it is coupled to, in the language of $L_{\infty}$ algebras. In Theorem \ref{akeythm}, we expound on the above by computing a perturbative equivalence of observables when the theory is deformed by a vector field, and make a few remarks on how this might be relevant at higher orders in perturbation theory in Appendix \ref{higherorders}. One of the objectives of Section \ref{stressenergymomentum} is to provide a potential pathway for physicists to link their tools with ours.

\subsection{Future Directions} Much of this paper serves as a set-up for a few distinct projects. My primary motivation looking forward is the subject of anomalies in perturbative quantum field theory. Anomalies arise when the quantization of families of field theories over a parameter space with some classical symmetry does not necessarily respect that symmetry. In \cite{rabin19}, Rabinovich computes the BV quantization of families of free fermions parameterized by gauge fields and by this process recovers the axial anomaly. The anomaly is then explicitly quantified cohomologically by viewing the background gauge fields perturbatively (much in the way we consider metrics modulo diffeomorphisms perturbatively in Section \ref{the main result}): computations from BV quantization allow Rabinovich to equate the anomaly with the index of the original Dirac operator, as per the Atiyah-Singer Families Index Theorem. 

Much of the work in the current paper is motivated by the desire to reproduce similar computations to those in \cite{rabin19} when replacing connections modulo gauge by metrics modulo diffeomorphism. Before diving into \textit{quantization}, we must first understand both the global and perturbative nature of the stack $[\mathscr{M/D}]$ of metrics modulo diffeomorphism as it parameterizes \textit{classical} theories. In the case of free fermions parameterized by $[\mathscr{M/D}]$, we hope to reproduce a version of results stated in \cite{rabin20}: there, Rabinovich connects his work in \cite{rabin19} to defining a determinant line bundle $\mathrm{Det}(D)$ (\`{a} la Quillen) over some parameter space $B$ via BV quantization. In our case, we would let $B = [\mathscr{M/D}]$ and then the anomaly would constitute the first Chern class of the determinant line bundle over this.

Another goal is to understand how Wald's results on viewing black hole entropy as Noether charge (as in \cite{waldBH}) might be feasible within the BV framework. The study and physical interpretation of black hole thermodynamics is as popular now as it has ever been; however, it remains elusive in many regards. Wald's work connecting black hole entropy to Noether's Theorem may well serve as a point of connection to our work: in particular, the BV version of Noether's Theorem is put into detail and application in the second half of \cite{cosgwill2} and explicit computations in the case of metrics modulo diffeomorphism (a central object in Wald's paper) are provided in this article. It is also advantageous that Wald focuses on structural and algebraic aspects, so that porting it all over into the BV framework might be somewhat natural.

\section{Bundles of Batalin-Vilkovisky field theories}

We will begin by introducing the Batalin-Vilkovisky (BV) formalism: the purpose of the following narrative is to show how classical field theory is very naturally expressed in this formalism, especially in the context of diffeomorphism equivariance. The basic ingredients required from the outset are a space of fields, which define the \textit{kinematics} of a physical model, and an action functional, which fixes the \textit{dynamics} of that model. The fields on a space (or spacetime) $X$ are sections of some bundle $F \to X$ (usually a vector bundle), denoted $\mathscr{F} := \Gamma(X, F)$. The action functional is a function $S : \mathscr{F} \to \mathbf{R}$ whose critical locus $\mathrm{Crit}(S)$\footnote{This is computed via variational calculus, and described for example in Appendix E of \cite{wald}.} is the set of $\phi \in \mathscr{F}$ that satisfy the Euler Lagrange equations associated to $S$ via functional differentiation. To truly begin a discussion of the BV formalism, we must begin by making precise the notion of a functional.

\begin{rmk}
	So far, our bundle $F \to X$ is not graded: as we unfold what it means to define a BV theory, $F$ will be replaced by a differential graded bundle, but the notation will not change.
\end{rmk}

	\subsection{Introduction to the BV Formalism}\label{BVtheory}
	
The space of fields is usually infinite dimensional, which means we cannot take the usual algebraic symmetric powers of $\mathscr{F}$ to define their space of functions. Thus, we have the following definitions which play an identical role, but for the infinite dimensional case. Much of what follows is from Chapter 5, Section 3 of \cite{costello} and Chapter 3, Section 5 and Appendix B of \cite{cosgwill1}.

	\begin{defn}(Defined in Section 3.5.7 of \cite{cosgwill1})\label{functionals}
		The algebra of \textbf{functionals} on $\mathscr{F}$ is
		$$
		\mathscr{O}(\mathscr{F}) := \prod_{k \geq 0} \mathrm{Hom}(\mathscr{F}^{\otimes k}, \mathbf{R})_{S_{k}}.
		$$
	\end{defn}

We may sometimes denote this ring as $\mathrm{Sym}(\mathscr{F}^{\vee})$.
	
	\begin{rmk}
	To be fully precise, if $X$ is compact, $\mathscr{F} = \Gamma(X, F)$ is a nuclear Fr\'{e}chet space, where $\otimes$ denotes the completed projective tensor product, so that
	$$
	\mathscr{F}^{\otimes k} := \Gamma(X \times \cdots \times X, F \boxtimes \cdots \boxtimes F),
	$$
	meaning each $\mathrm{Hom}(\mathscr{F}^{\otimes k}, \mathbf{R})_{S_{k}}$ is a space of continuous multilinear functionals endowed with the strong dual topology: i.e. a space of distributions. The literature mentioned above defines all of this for a slightly broader class of spaces than Fr\'{e}chet spaces, but that is enough for us. In particular, we have the following fact, from page 1 of \cite{tuschmannwraith}:
	
	\begin{ex}\label{sectionsfrech}
Let $X$ be a smooth, compact, finite dimensional manifold, and let $F \to X$	be a vector bundle with space of sections $\Gamma(X, F) =: \mathscr{F}$. Choose Riemannian metrics and connections on $TX$ and $F$, let $\nabla^{i}\phi$ denote the $i^{th}$ covariant derivative of $\phi \in \mathscr{F}$, and set
$$
|| f ||_{n} := \sum_{i=0}^{n} \mathrm{sup} |\nabla^{i}\phi(x)|.
$$
By means of the topology defined by the sequence of norms $\{||-||_{n} \}$, $\mathscr{F}$ is a Fr\'{e}chet space. Clearly, we can define differential graded Fr\'{e}chet spaces as well, as will soon be relevant.
\end{ex}
	
	\end{rmk}

	\begin{defn}\label{locals}
	The \textbf{space of local functionals}, denoted $\mathscr{O}_{\mathrm{loc}}(\mathscr{F})$, is the linear subspace of $\mathscr{O}(\mathscr{F})$ spanned by elements of the form
$$
F_{k}(\phi) = \int_{X}(D_{1}\phi)(D_{2}\phi)\ldots(D_{k}\phi)\textrm{vol},
$$
for fields $\phi \in \mathscr{F}$ and differential operators $D_{i}$ on $X$.  
	\end{defn}

\begin{lem}(\cite{costello}, Ch. 5, Lemma 6.6.1)
There is an isomorphism of cochain complexes
	$$
	\mathscr{O}_{\mathrm{loc}}(\mathscr{F}) \cong \mathrm{Dens}_{X} \otimes_{D_{X}} \mathscr{O}_{\mathrm{red}}(\mathscr{J}(F)),
	$$	
where $\mathscr{J}(F)$ denotes sections of the $\infty$-jet bundle $\mathrm{Jet}(F) \to X$, and $\mathscr{O}_{\mathrm{red}}(\mathscr{J}(F))$ is the quotient of $\mathscr{O}(\mathscr{J}(F)) = \mathrm{Sym}(\mathscr{J}(F)^{\vee})$ by the constant polynomial functions. 
\end{lem}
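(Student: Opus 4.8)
The plan is to write down an explicit natural map from the right-hand side onto $\mathscr{O}_{\mathrm{loc}}(\mathscr{F})$, for which surjectivity is essentially Definition \ref{locals}, and then to establish injectivity by reducing it to two classical facts from the calculus of variations. First I would recall the structure on the jet bundle that makes the right-hand side meaningful: write $\mathrm{Jet}(F) = \varprojlim_{k}\mathrm{Jet}^{k}(F)$, $\mathscr{J}(F) = \Gamma(X,\mathrm{Jet}(F))$, and let $j^{\infty}\colon \mathscr{F}\to\mathscr{J}(F)$ take a field to its $\infty$-jet. The bundle $\mathrm{Jet}(F)$ carries the canonical flat Grothendieck connection $\nabla^{\mathrm{Gr}}$, whose horizontal sections over a contractible chart are exactly the $\infty$-jets of honest sections of $F$; dualizing and taking symmetric powers over $C^{\infty}_{X}$ makes $\mathscr{O}(\mathscr{J}(F)) = \mathrm{Sym}_{C^{\infty}_{X}}(\mathscr{J}(F)^{\vee})$ a commutative $D_{X}$-algebra, with $\mathscr{O}_{\mathrm{red}}(\mathscr{J}(F))$ its quotient by the constant polynomial functions $C^{\infty}_{X}$. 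Unwound, $\mathrm{Dens}_{X}\otimes_{D_{X}}\mathscr{O}_{\mathrm{red}}(\mathscr{J}(F))$ is the space of density-valued, fibrewise-polynomial functions on $\mathrm{Jet}(F)$ modulo the subspace generated by the horizontal (total) derivatives $\mathcal{L}_{V}(\omega\otimes f) = (\mathcal{L}_{V}\omega)\otimes f + \omega\otimes\nabla^{\mathrm{Gr}}_{V}f$ with $V\in\mathrm{Vect}_{X}$ --- that is, Lagrangian densities modulo total divergences.

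Then I would define
$$
\Phi\colon \mathrm{Dens}_{X}\otimes_{D_{X}}\mathscr{O}_{\mathrm{red}}(\mathscr{J}(F))\longrightarrow\mathscr{O}_{\mathrm{loc}}(\mathscr{F}),\qquad \omega\otimes f\longmapsto\Big(\phi\mapsto\int_{X}\omega\, f(j^{\infty}_{x}\phi)\Big),
$$
and check it is well defined on the $D_{X}$-tensor product because a total-derivative relation is sent to $\phi\mapsto\int_{X}\mathcal{L}_{V}\big(\omega\, f(j^{\infty}\phi)\big) = 0$ by Stokes' theorem on the closed manifold $X$; its image is local since expanding $f$ in the jet coordinates writes $f(j^{\infty}_{x}\phi)$ as a sum of products of derivatives of $\phi$. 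Surjectivity is immediate: each generator $F_{k}(\phi) = \int_{X}(D_{1}\phi)\cdots(D_{k}\phi)\,\mathrm{vol}$ of Definition \ref{locals} is $\Phi(\mathrm{vol}\otimes\overline{f})$, where $f\in\mathscr{O}(\mathscr{J}(F))$ is the fibrewise polynomial whose pullback along $j^{\infty}$ is $(D_{1}\phi)\cdots(D_{k}\phi)$ --- each finite-order operator $D_{i}$ factors through some $\mathrm{Jet}^{N}(F)$ and so defines a linear function on $\mathrm{Jet}(F)$ --- and $\overline{f}$ is its class modulo constants.

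The real content is injectivity. Suppose $\xi = \sum_{\alpha}\omega_{\alpha}\otimes f_{\alpha}$ maps to the zero functional, and put $\lambda = \sum_{\alpha}\omega_{\alpha}f_{\alpha}$, a density-valued function on $\mathrm{Jet}(F)$. Since $\phi\mapsto\int_{X}\lambda(j^{\infty}\phi)$ vanishes identically, so does its first variation; integrating by parts (no boundary term, $X$ closed) rewrites the first variation as $\int_{X}\langle E(\lambda)(j^{\infty}\phi),\delta\phi\rangle$ for the Euler--Lagrange form $E(\lambda)$, and the fundamental lemma of the calculus of variations, applied with $\delta\phi$ ranging over compactly supported bump sections, forces $E(\lambda)(j^{\infty}\phi) = 0$ for all $\phi\in\mathscr{F}$. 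Because on a compact $X$ the jet-evaluation maps $\mathscr{F}\to\mathrm{Jet}^{N}_{x}(F)$ are surjective, this means $E(\lambda)\equiv 0$ on $\mathrm{Jet}(F)$. Then the algebraic Poincar\'{e} lemma for the horizontal complex of the variational bicomplex gives that a Lagrangian density whose Euler--Lagrange form vanishes identically is a total divergence, so $\lambda$ lies in the span of the total-derivative relations and $\xi = 0$. The main obstacle is precisely this injectivity: the passage from ``the functional vanishes on every field'' to ``the Euler--Lagrange form vanishes on $\mathrm{Jet}(F)$'' is where the infinite-dimensional analysis intervenes (continuity of the distributional integral kernels, the Fr\'{e}chet structure of $\mathscr{F}$ from Example \ref{sectionsfrech}, and the existence of enough global sections of $F$ on a compact $X$), whereas the concluding exactness statement is a formal computation in finitely many jet variables.

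Finally I would upgrade the bijection $\Phi$ to an isomorphism of \emph{cochain complexes}. It suffices to treat $F$ as an ordinary vector bundle first; the BV case, where $F$ is a differential graded bundle, follows formally because the jet construction and $\mathscr{O}(-)$, $\mathscr{O}_{\mathrm{loc}}(-)$ are functorial and additive in $F$, so the internal differential on $F$ induces compatible differentials on both sides. Since $\Phi$ is given by the manifestly $F$-natural recipe ``pull back along $j^{\infty}$, then integrate,'' it automatically commutes with those differentials, and this last compatibility check requires no new ideas.
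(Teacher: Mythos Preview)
The paper does not prove this lemma at all; it is stated with a citation to Costello's book and followed immediately by a remark, so there is no in-paper argument to compare against. Your outline is therefore a self-contained proof proposal, and the route you take---build the integration map $\Phi$, get surjectivity from Definition~\ref{locals}, and prove injectivity via the Euler--Lagrange operator together with exactness of the variational bicomplex---is the standard classical approach and is essentially how this result is proved in the calculus-of-variations literature (Takens, Anderson, Olver), as opposed to the more $D_{X}$-module-theoretic packaging in \cite{costello}.

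One point deserves care in your injectivity step. You invoke ``the algebraic Poincar\'e lemma for the horizontal complex'' to conclude that $E(\lambda)\equiv 0$ forces $\lambda$ to be a total divergence. Takens' theorem is a \emph{local} statement: on a contractible chart, a Lagrangian with identically vanishing Euler--Lagrange form is a horizontal exact form plus a top form pulled back from the base. Passing to $\mathscr{O}_{\mathrm{red}}$ kills the base-pullback term, but globalizing the exactness on a compact $X$ with possibly nontrivial $H^{n}_{\mathrm{dR}}(X)$ is not automatic; one must argue that the horizontal complex with coefficients in $\mathrm{Sym}^{k}_{C^{\infty}_{X}}(\mathscr{J}(F)^{\vee})$ for $k\geq 1$ is acyclic in top degree globally (equivalently, that $\mathrm{Dens}_{X}\otimes^{\mathbb{L}}_{D_{X}}\mathscr{O}_{\mathrm{red}}(\mathscr{J}(F))$ is concentrated in degree zero). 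This is true---it follows, for instance, from the fact that $\mathscr{J}(F)^{\vee}$ and its symmetric powers are induced (hence flat) $D_{X}$-modules, or from a partition-of-unity/fine-sheaf argument---but you should say so rather than leave it implicit. Once that is in place, your argument goes through; the functoriality-in-$F$ paragraph at the end is the right way to handle the dg case.
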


\begin{rmk}
Sections of $\mathscr{O}_{\mathrm{loc}}(\mathscr{F})$ are exactly elements of the preceding form, and integration defines a natural inclusion:
$$
\iota : \mathscr{O}_{\mathrm{loc}}(\mathscr{F}) \to \mathscr{O}_{\mathrm{red}}(\mathscr{F}). 
$$
This lemma shows that $\mathscr{O}_{\mathrm{loc}}(\mathscr{F})$ is the space of Lagrangian densities modulo total derivatives: this is desirable because adding a total derivative to a Lagrangian density does not affect the dynamics described in the equations of motion. Local functionals are also more manageable in terms of functional analysis; for example, the action functional $S$ is always an element of $\mathscr{O}_{\mathrm{loc}}(\mathscr{F})$, and local functionals are key in defining the Poisson bracket, as we will see below.
\end{rmk}

\begin{defn}
	For $F \to X$ a graded vector bundle, a constant coefficient \textbf{$k$-shifted symplectic structure} is an isomorphism 
	$$
	F \cong_{\omega} F^{!}[k] := (\mathrm{Dens}_{X} \otimes F^{\vee})[k]
	$$
	of graded vector spaces that is graded antisymmetric. 
\end{defn}

\begin{rmk}

It stands to reason that a symplectic structure on a space defines a Poisson bracket on its space of functions: this is indeed the case for $\mathscr{O}_{\mathrm{loc}}(\mathscr{F}) \subset \mathscr{O}(\mathscr{F})$. This is \textit{not} the case however for all of  $\mathscr{O}(\mathscr{F})$, for functional analytic reasons which are outside the scope of this paper.\footnote{Details about this can be found in Chapter 4 of \cite{cosgwill2}.} We will denote the Poisson (anti-)bracket induced by $\omega$ as $\{-,-\}$.
	
\end{rmk}
	
\begin{defn}\label{BVdefn}
A \textbf{Batalin-Vilkovisky classical field theory} $(\mathscr{F}, \omega, S)$ on a smooth manifold $X$ is a differential $\mathbf{Z}$-graded vector bundle $F \to X$ equipped with a $-1$-shifted symplectic structure $\omega$ and an action functional $S \in \mathscr{O}_{\mathrm{loc}}(\mathscr{F})$ such that: \\
(1) $S$ satisfies the \textbf{classical master equation} (CME): $\{S,S\} = 0$. \\
(2) $S$ is at least quadratic, so that it can be written uniquely as $S(\varphi) = \omega(\varphi, Q\varphi) + I(\varphi)$, where $Q$ is a linear differential operator and $I \in \mathscr{O}_{\mathrm{loc}}(\mathscr{F})$ is at least cubic.  \\
A \textbf{free theory} is one in which $I=0$: i.e. the action functional $S$ is purely quadratic.
\end{defn}

\begin{rmk}
Although $\{-,-\}$ is not a Poisson bracket on $\mathscr{O}(\mathscr{F})$, bracketing with a local functional like $S \in \mathscr{O}_{\mathrm{loc}}(\mathscr{F})$ defines a derivation 
$$
\{S,- \} : \mathscr{O}(\mathscr{F}) \to \mathscr{O}(\mathscr{F})[1]
$$
regardless of whether or not the BV theory is free. For a \textit{free} theory, it can be shown that $\{S,- \} = Q$ on $\mathscr{O}(\mathscr{F})$, where the differential $Q$ on $\mathscr{F}$ is extended to $\mathscr{O}(\mathscr{F})$ as a derivation. For an \textit{interacting} theory, $\{S,-\}$ is prescribed by an $L_{\infty}$ algebra structure on $\mathscr{F}$, which we will describe in Definition \ref{Linftyalg} and provide examples of thereafter. The ellipticity or hyperbolicity of $(\mathscr{F}, Q)$ is sometimes assumed: this will be commented on later.
\end{rmk}

\begin{defn}
		Let $F \to X$ be a differential graded vector bundle with differential $Q$ on its sheaf of sections $\mathscr{F}$. Then the dg commutative ring of \textbf{(global) classical observables} for the theory defined by $(\mathscr{F}, Q)$ is 
		$$
		\mathrm{Obs}^{\mathrm{cl}}(X, \mathscr{F}) := (\mathscr{O}(\mathscr{F}), \{S,-\}).
		$$
	\end{defn}
	
\begin{rmk}\label{derivedcriticallocus}
	We will briefly describe how $\mathrm{Obs}^{\mathrm{cl}}(X, \mathscr{F})$ is to be understood as the dg ring of functions on the derived critical locus of the action functional $S : \mathscr{F}^{0} \to \mathbf{R}$, where the degree zero part of the dg fields is the ``na\"{i}ve" original space of fields, following \cite{cosgwill2}. 
	
	The ordinary critical locus $\mathrm{Crit}(S)$ is the intersection of the graph 
	$$
	\Gamma(dS) \subset T^{\vee}\mathscr{F}^{0}
	$$
	with the zero section $\mathscr{F}^{0} \subset T^{\vee}\mathscr{F}^{0}$. We thus get its commutative algebra of functions to be 
	$$
	\mathscr{O}(\mathrm{Crit}(S)) = \mathscr{O}(	\Gamma(dS)) \otimes_{\mathscr{O}(T^{\vee}\mathscr{F}^{0})} \mathscr{O}(\mathscr{F}^{0}).
	$$
	However, $\mathrm{Crit}(S)$ can be singular (e.g. it may be a non-transverse intersection), so we follow the philosophy of derived (algebraic) geometry and replace the above critical locus with the derived critical locus $\mathrm{Crit}^{h}(S)$, which has ring of functions
	$$
		\mathscr{O}(\mathrm{Crit}^{h}(S)) = \mathscr{O}(	\Gamma(dS)) \otimes_{\mathscr{O}(T^{\vee}\mathscr{F}^{0})}^{\mathbb{L}} \mathscr{O}(\mathscr{F}^{0}).
	$$
	This is now a commutative dg algebra instead of an ordinary commutative algebra, and it can be realized as the complex
	$$
	\mathscr{O}(T^{\vee}[-1]\mathscr{F}^{0}) = \Gamma(\mathscr{F}^{0}, \Lambda^{\bullet}T\mathscr{F}^{0}).
	$$
	Now we see how the Batalin-Vilkovisky dg fields $(\mathscr{F}, Q)$ arise naturally from a derived geometric perspective as $T^{\vee}[-1]\mathscr{F}^{0}$; moreover, the differential on $\Gamma(\mathscr{F}^{0}, \Lambda^{\bullet}T\mathscr{F}^{0})$ is contracting with the 1-form $dS \in \Omega^{1}(\mathscr{F}^{0})$, and this can be shown to be equivalent to $\{S,-\}$, as we would expect. 
	\end{rmk}

\begin{ex}\label{freescalarfield}
	A running example through much of this text will be scalar field theory. We will consider the \textit{free} case first. Fix a semi-Riemannian manifold $(X,g)$ and consider its space of smooth functions $\Gamma(X, \underline{\mathbf{R}}) = C^{\infty}(X)$: these are the a priori fields. The action functional is 
	\begin{equation}\label{scalfunctional}
	S_{g}(\varphi) = \frac{-1}{2}\int_{X} \varphi \Delta_{g}\varphi \mathrm{vol}_{g},
	\end{equation}
	where $\varphi \in C^{\infty}(X)$, $\mathrm{vol}_{g}$ is the volume form associated to the metric $g$, written in coordinates as $\sqrt{\det g} dx_{1} \wedge \ldots \wedge dx_{n}$, and the Laplace-Beltrami operator $\Delta_{g}$ associated to $g$ should not be mistaken for the BV Laplacian discussed in related literature. The Euler-Lagrange equation here is Laplace's equation, $\Delta_{g}\varphi = 0$, so that  $\mathrm{Crit}(S)$ is the set of harmonic functions. 
	
By the above, the derived critical locus is then
	\begin{equation}\label{BVfreescalarfield}
		\mathscr{F}_{g} = C^{\infty}(X) \xrightarrow{Q_{g}} \mathrm{Dens}(X)[-1],
	\end{equation}
	where $\mathrm{Dens}(X)$ is the appropriate dual to $C^{\infty}(X)$ and $Q_{g}\varphi = \Delta_{g}\varphi \mathrm{vol}_{g}$ is the differential, which imposes the Euler-Lagrange equations: it is written so as to take values in $\mathrm{Dens}(X)$ but also to capture all of the dependence on $g \in \mathrm{Met}(X)$ in the action functional. The symplectic structure $\omega$ on $\mathscr{F}_{g} = C^{\infty}(X) \xrightarrow{Q_{g}} \textrm{Dens}(X)[-1]$ is 
	$$
	\omega(\varphi, \mu) = \int_{X}  \varphi \mu,
	$$
	for $\varphi$ and $\mu$ in degrees 0 and 1, respectively. We can thus write $S_{g}(\varphi)$ as $\omega(\varphi, Q_{g}\varphi)$. 
	
	For $\mathscr{F}_{g} = C^{\infty}(X) \xrightarrow{Q_{g}} \mathrm{Dens}(X)[-1]$, the underlying graded ring of $\mathrm{Obs}^{\mathrm{cl}}(X, \mathscr{F}_{g})$ is $\mathscr{O}(\mathscr{F}_{g})$, so that it is concentrated in nonpositive degrees, as Definition \ref{functionals} implies. The action functional $S_{g}(\varphi)$ defined in Equation (\ref{scalfunctional}) is a degree 0 element of $\mathscr{O}(\mathscr{F}_{g})$, but also defines a degree 1 differential on $\mathscr{O}(\mathscr{F}_{g})$ as $\{S_{g}, - \}$: thus, $\{S_{g}, S_{g}\}$ must be a degree 1 element of $\mathscr{O}(\mathscr{F}_{g})$. Since in this example $\mathscr{O}(\mathscr{F}_{g})$ is concentrated in nonpositive degrees, the classical master equation holds vacuously. Thus, the free massless scalar field with metric background $g$ defines a free BV classical field theory, since the other requirements are easily satisfied.

\begin{rmk}
It is implied here that $g$ is a Riemannian metric, because the associated partial differential operator is the elliptic Laplace-Beltrami operator. If $g$ were Lorentzian, then we would instead have the hyperbolic d'Alembertian $\Box_{g}$. For further details comparing these two regimes for the free scalar field, one should consult the thorough reference \cite{gwillrejz}.
\end{rmk}

\begin{rmk}\label{bundleoffields}
An advantage to shifting from the ``ordinary" fields $C^{\infty}(X)$ to the derived critical locus $\mathscr{F}_{g}$ is that there now is an \textit{explicit dependence} in the fields on the metric $g \in \mathrm{Met}(X) =: \mathscr{M}$.\footnote{We will denote $\mathrm{Met}(X)$ as $\mathscr{M}$ when $X$ is implicit.} This will allow us to define a \textit{differential graded vector bundle} $\pi : \mathscr{F} \to \mathscr{M}$: the base space is the space of all (semi-)Riemannian metrics on $X$ and the fibers $\pi^{-1}(g) = \mathscr{F}_{g}$ are field theories depending on the fixed $g$. This opens up the possibility of seeing how varying the background metric effects the field theory. We have such a dg vector bundle only when the theory is free (i.e. $S$ is quadratic in $\varphi$): for an interacting theory, we will require the notion of an $L_{\infty}$ algebra and bundles thereof.
\end{rmk}
\end{ex}

\begin{defn}\label{Linftyalg}
An \textbf{$L_{\infty}$ algebra} over $R$ is a $\mathbf{Z}$-graded, projective $R$-module $\mathfrak{g}$ with a sequence of multilinear maps of cohomological degree $2-n$: 
$$
\ell_{n} : \mathfrak{g} \otimes_{R} \ldots \otimes_{R} \mathfrak{g} \to \mathfrak{g},
$$
where $n \in \mathbf{N}$, such that all $\ell_{n}$ are (1) graded antisymmetric and (2) satisfy the $n$-Jacobi rule.\footnote{We are sweeping details for this rule under the rug: Definition A.1.2 in \cite{cosgwill2} is the whole megillah.}
\end{defn}

\begin{ex}\label{phi4}
	The most natural example of an $L_{\infty}$ algebra for us comes from encoding nonlinear partial differential equations: i.e. those associated to an \textit{interacting} field theory, with a degree three or higher action functional. 
	
	For example, say we want to encode $\Delta_{g}\varphi + \frac{1}{3!}\varphi^{3} = 0$, the Euler Lagrange equation associated to the action functional
	$$
	S_{g}(\varphi) = \frac{-1}{2}\int_{X}\varphi \Delta_{g}\varphi \mathrm{vol}_{g} + \frac{1}{4!}\int_{X}\varphi^{4} \mathrm{vol}_{g}.
	$$
	The pertinent $L_{\infty}$ algebra has underlying cochain complex
	$$
	\mathscr{L} = C^{\infty}(X)[-1] \to \mathrm{Dens}(X)[-2],\footnote{Details explaining the necessity of this shift up by 1 can be found in 4.2 of \cite{cosgwill2}.}
	$$
	where the differential is $Q_{g}\varphi = \Delta_{g}\varphi \mathrm{vol}_{g}$ and the only higher bracket is $\ell_{3} : C^{\infty}(X)^{\otimes 3} \to \mathrm{Dens}(X)$, defined as $\ell_{3} : \varphi_{1} \otimes \varphi_{2} \otimes \varphi_{3} \mapsto \varphi_{1}\varphi_{2}\varphi_{3}\mathrm{vol}_{g}$. Letting $(R, \mathfrak{m}_{R})$ be a nilpotent Artinian ring in degree $0$, we get that $\varphi \in C^{\infty}(X) \otimes \mathfrak{m}_{R}$ satisfies the Maurer-Cartan equation $\mathscr{L}$ if and only if 
	$$
	Q_{g}\varphi + \frac{1}{3!}\varphi^{3}\mathrm{vol}_{g} = 0,
	$$
which recovers the desired partial differential equation (with values in densities).   Thus we see how an $L_{\infty}$ algebra quantifies how a given equation fails to be linear (a free theory has only nontrivial $\ell_{1}$, and so only requires a dg structure to be described). Moreover, $\mathscr{L}$ is an even more particular object, which we now define.
\end{ex}

\begin{defn}
A \textbf{local $L_{\infty}$ algebra} on a manifold $X$ is: \\
(1) A graded vector bundle $L \to X$, where we denote the sections as $\mathscr{L}$, \\
(2) a differential operator $d : \mathscr{L} \to \mathscr{L}$ of cohomological degree 1 such that $d^{2} = 0$, and \\
(3) a collection of polydifferential operators $\ell_{n} : \mathscr{L}^{\otimes n} \to \mathscr{L}$ for $n \geq 2$ which are alternating, of cohomological degree $2-n$, and which make $\mathscr{L}$ an $L_{\infty}$ algebra. \\
If the local $L_{\infty}$ algebra $(\mathscr{L},d)$ is an elliptic complex, we call it an \textbf{elliptic $L_{\infty}$ algebra}.
\end{defn}

The $L_{\infty}$ algebra $\mathscr{L}$ of our ongoing example is an elliptic $L_{\infty}$ algebra. One advantage of introducing this notion is the following definition of observables for perturbative field theories, which we will employ in Section \ref{the main result} when discussing formal computations:

\begin{defn}[Definition 5.1.1 in \cite{cosgwill2}]\label{obsonaset}
	The \textbf{observables with support in the open subset $U$} are the commutative dg algebra 
	\begin{equation}
	\mathrm{Obs}^{\mathrm{cl}}(U) := C^{\bullet}(\mathscr{L}(U)),	
	\end{equation}
	where $C^{\bullet}(\mathscr{L})$ denotes Chevalley-Eilenberg cochains. The \textbf{factorization algebra of observables} for this classical field theory, denoted $\mathrm{Obs^{cl}}$, assigns $\mathrm{Obs}^{\mathrm{cl}}(U)$ to each open $U \subset X$. 
\end{defn}

\begin{rmk}
The computations in Example \ref{phi4} work just fine if we replace the elliptic operator $\Delta_{g}$ with the hyperbolic wave operator $\Box_{g}$, so it would be convenient to specify the dynamics of the Lorentzian analogue of $\varphi^{4}$ theory with an $L_{\infty}$ algebra, too. However, as commentary in Gwilliam and Rejzner's paper \cite{gwillrejz} suggests, comparisons between the Lorentzian and Riemannian settings get stickier when considering interacting theories. A precise definition of the correct notion of a hyperbolic  $L_{\infty}$ algebra is presented in the recent paper \cite{bms}.
\end{rmk}

\begin{rmk}\label{scalarcovariance}
	In Example \ref{freescalarfield} of the free scalar field, both components of the graded space of fields have an action by the diffeomorphism group of $X$--denoted $\mathscr{D}$ when unambiguous--via pullback: for $f \in \mathscr{D}, \varphi \in C^{\infty}(X),$ and $\mu \in \textrm{Dens}(X)$, $f \cdot \varphi = f^{*}\varphi = \varphi \circ f$ and $f \cdot \mu = f^{*}\mu$. Additionally, $\mathscr{D}$ acts on $\textrm{Met}(X)$ via pullback: $f \cdot g = f^{*}g$. What is special about this example is that the differential $Q_{g}$ commutes with diffeomorphisms in the following sense: $f^{*}(Q_{g}\varphi) = f^{*}(\Delta_{g}\varphi \textrm{vol}_{g}) = \Delta_{f^{*}g} (f^{*}\varphi) \textrm{vol}_{f^{*}g} = Q_{f^{*}g}(f^{*}\varphi)$. This result is equivalent to the fact that the Laplacian commutes with diffeomorphisms\footnote{This is computed by expressing $\Delta_{g} = \mathrm{div}_{g}\mathrm{grad}$, and is done explicitly in notes by Y. Canzani, available at: \url{https://www.math.mcgill.ca/toth/spectral geometry.pdf}.}. This suggests that if we parameterize families of free scalar BV theories by $\textrm{Met}(X)$, the result will be a ``$\mathscr{D}$-equivariant bundle". In fact, we can show how this can work for interacting theories: but first, we must make precise the idea of a differential graded equivariant bundle.
\end{rmk}

\subsection{Equivariant Vector Bundles}

To discuss general covariance, we must first understand what an equivariant vector bundle is and how to use one to specify a family of field theories parameterized by semi-Riemannian metrics. Once that is done and we make the connection with general covariance, we will see how groupoids and stacks naturally arise in this context and provide additional advantages.

\begin{defn}[Definition 1.5 of \cite{bgv}]\label{equivvect}
Let $G$ be a Lie group. A smooth fiber bundle $\pi : E \to M$ is said to be \textbf{$G$-equivariant} if: (i) both $E$ and $M$ are left $G$-spaces, and (ii) $\pi : E \to M$ is a $G$-equivariant map. If $E=V$ is a vector bundle, we also require that for all $g \in G$ and $p \in M$, $g : V_{p} \to V_{g \cdot p}$ is a linear transformation, where $V_{p} := \pi^{-1}(p)$. 
\end{defn}

One must be mindful that within the vector bundle part of this definition is packaged the information that the \textit{fibers} $V_{p}$ of $V \to M$ could themselves be $G$-spaces over fixed point sets; however, if there are no fixed points of $G$ acting on $M$, we immediately get the following.

\begin{thm}\label{vectequiv}
For $M$ a smooth $G$-space on which the Lie group $G$ acts freely, there is an equivalence of categories between vector bundles on $M/G$ and $G$-equivariant vector bundles:
\begin{equation}
\textup{VectBun}(M/G) \xrightarrow{\cong} \textup{VectBun}_{G}(M).
\end{equation}
\end{thm}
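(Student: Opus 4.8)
The plan is to construct the equivalence explicitly in both directions and check that the two constructions are mutually inverse up to natural isomorphism. In the direction $\textup{VectBun}(M/G) \to \textup{VectBun}_G(M)$, I would send a vector bundle $W \to M/G$ to its pullback $q^*W \to M$ along the quotient map $q : M \to M/G$. Since $q$ is $G$-invariant ($q \circ (g\cdot -) = q$), the bundle $q^*W$ carries a canonical $G$-equivariant structure: the fiber $(q^*W)_p = W_{q(p)}$ equals $(q^*W)_{g\cdot p} = W_{q(g\cdot p)} = W_{q(p)}$, so one lets $g$ act as the identity under this identification, and this is manifestly linear on fibers and compatible with $\pi$. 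A morphism of bundles over $M/G$ pulls back to a $G$-equivariant morphism, so this is a functor. The first thing to justify here is that $M/G$ is a genuine smooth manifold and $q$ a smooth submersion: this uses that $G$ acts freely, and — at least implicitly — properly, which in the intended application ($G = \mathscr{D}$, $M = \mathscr{M}$) is a standard slice-theorem fact one should flag.

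For the reverse functor $\textup{VectBun}_G(M) \to \textup{VectBun}(M/G)$, given a $G$-equivariant vector bundle $V \to M$ I would form the quotient $V/G$ and equip it with the projection $\bar\pi : V/G \to M/G$ induced by $\pi$. The key point is that freeness of the $G$-action on $M$ forces the $G$-action on the total space $V$ to be free as well (if $g \cdot v = v$ then $g \cdot \pi(v) = \pi(v)$, hence $g = e$), and properness likewise lifts, so $V/G$ is again a smooth manifold. One then checks that $\bar\pi$ is a locally trivial vector bundle: over a $G$-slice $S \subset M$ with $q(S) \cong S$ open in $M/G$ and $G \times S \cong G\cdot S$ equivariantly, the bundle $V|_{G\cdot S}$ is isomorphic to $G \times (V|_S)$ with $G$ acting only on the first factor, so $(V|_{G\cdot S})/G \cong V|_S$, which is an honest vector bundle over $S \cong q(S)$. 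The linear structure on the fibers of $V/G$ descends because $G$ acts linearly fiberwise by hypothesis. Functoriality on morphisms is routine: a $G$-equivariant bundle map descends to the quotients.

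Finally I would exhibit the natural isomorphisms witnessing that these functors are inverse. For $W \to M/G$, one needs $(q^*W)/G \cong W$ over $M/G$; this is the standard fact that pulling back along a surjective submersion and then quotienting by the fibers recovers the original bundle, and the isomorphism is natural in $W$. For $V \to M$, one needs $q^*(V/G) \cong V$ as $G$-equivariant bundles over $M$; the map sends $(p, [v]) \mapsto$ the unique representative of $[v]$ lying over $p$ (well-defined and smooth by freeness), and it is straightforwardly $G$-equivariant and natural in $V$. Assembling these gives the claimed equivalence of categories.

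The main obstacle is not any single computation but the local-triviality verification for $V/G$: showing that the quotient of the total space is not merely a manifold but carries a vector bundle structure locally modeled on the slices. This is where one genuinely uses the slice theorem for free (and proper) actions, and where the hypothesis that $G$ acts with no fixed points on $M$ does its essential work — as the remark preceding the theorem emphasizes, fixed points would let the fibers $V_p$ themselves be nontrivial $G$-representations, obstructing descent. I would therefore spend most of the proof carefully setting up an equivariant tubular-neighborhood / slice decomposition of $M$ and transporting the bundle structure across it, and treat the rest as formal.
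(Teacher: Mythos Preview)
Your proof is correct and follows the standard argument for this classical result. However, the paper does not actually supply a proof of this theorem: it is stated as a known fact and immediately followed by a remark explaining its significance and limitations, with the stacky generalization (Theorem~\ref{vectbunequiv}) later cited from \cite{heinloth}. So there is no proof in the paper to compare your approach against.

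That said, one point worth flagging: you correctly observe that freeness alone is not quite enough---properness of the action is implicitly needed for $M/G$ to be a smooth manifold and for the slice theorem to apply. The paper's statement omits this hypothesis, and your proposal rightly surfaces it. In the paper's motivating example ($\mathscr{D}$ acting on $\mathscr{M}$) the action is in fact \emph{not} free, which is precisely why the paper moves on to the stacky version; your remark about the intended application being a ``standard slice-theorem fact'' is therefore slightly off, since the paper's whole point is that Theorem~\ref{vectequiv} does \emph{not} apply there and one needs Theorem~\ref{vectbunequiv} instead.
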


\begin{rmk}
This theorem nicely encapsulates how we might keep track of linear data parameterized by an underlying space which has some symmetries: we simply quotient the underlying space out by its symmetries and look at the vector bundle over that. However, this is where the problem becomes apparent: if there are any points in $M$ which are stabilized by $G$ or any of its nontrivial subgroups, then $M/G$ is no longer a smooth manifold at those points. This makes it more difficult to associate to it any structures \textit{which depended on the differentiability of} $M$, like its ring of smooth functions or sections of certain bundles. Stacks deal with those issues nicely, and provide an analogous theorem in the case the action $M \act G$ is not free.
\end{rmk}

There is one more distinction which is significant for this work, which we will define here: the fibers of the vector bundles we want to consider are differential graded as well as equivariant.

\begin{defn}
	A \textbf{differential graded vector bundle} is a vector bundle $V \to M$ whose fibers $V_{p}$ are $\mathbf{Z}$-graded vector spaces with a smoothly varying differential  $Q_{p}^{i} : V_{p}^{i} \to V_{p}^{i+1}$. 
\end{defn}

We will usually abbreviate ``differential graded" as ``dg", and such a vector bundle may sometimes be denoted $V^{\bullet} \to M$ or $(V^{\bullet} \to M, Q)$, depending on what we would like to emphasize within certain contexts. We have a similar definition when the fibers are $L_{\infty}$ algebras:

\begin{defn}
		A \textbf{bundle of (elliptic) $L_{\infty}$ algebras} is a $\mathbf{Z}$-graded vector bundle $\pi : (V, \ell) \to M$\footnote{We may sometimes omit this notation as a pair if the $L_{\infty}$ structure is implicit.} whose fibers $(V_{p}, \ell^{p}) := \pi^{-1}(p)$ are (elliptic) $L_{\infty}$ algebras, such that the $L_{\infty}$ structure varies smoothly over $M$.
		
	\end{defn}

\begin{defn}\label{dgequiv}
	A dg vector bundle $(V^{\bullet} \to M, Q)$ is \textbf{$G$-equivariant} if: (i) each of the $V^{i} \to M$ is $G$-equivariant in the usual sense and (ii) the action by $G$ induces a cochain map between fibers, i.e. for $g \in G$ and for $i \in \mathbf{Z}$, the following square commutes:
	
	\begin{center}
\begin{tikzcd}
V^{i}_{p} \arrow[r, "Q_{p}^{i}"] \arrow[d, "g \cdot"]
& V^{i+1}_{p}  \arrow[d, "g \cdot"] \\
V^{i}_{g \cdot p}  \arrow[r, "Q_{g \cdot p}^{i}"]
& V^{i+1}_{g \cdot p}.
\end{tikzcd}
\end{center}
\end{defn}
A totally analogous definition holds for $G$-equivariant bundles of $L_{\infty}$ algebras. 

\begin{rmk}

It might be the case that the differential (or $L_{\infty}$ structure) does not depend on $p \in M$ for trivial $V \to M$: in this case, the bundle is still $G$-equivariant, in a rather trivial way. However, it is easy to find examples in which there is such a dependence, as this the case for our version of general covariance. 

\end{rmk}

\subsection{General Covariance}\label{GCsection}

To state general covariance rigorously in our sense, we must first introduce a few facts about Fr\'{e}chet manifolds. The space of metrics on a smooth manifold and its group of diffeomorphisms are infinite dimensional, so defining vector bundles or other structures which depend on the differentiability of $\textrm{Met}(X)$ will require us to consider a special class of manifolds called \textit{Fr\'{e}chet manifolds}, which we now define (much of what we state is adapted from  \cite{tuschmannwraith}\footnote{Another helpful reference is \url{https://ncatlab.org/nlab/show/Fr\%C3\%A9chet+manifold}.}).

\begin{defn}[Adapted from Definition 1.3 in \cite{tuschmannwraith}]
A \textbf{Fr\'{e}chet manifold} is a Hausdorff topological space with an atlas of coordinate charts taking values in Fr\'{e}chet spaces (i.e. complete, Hausdorff, metrizable, locally convex vector spaces) such that the transition functions are smooth maps between Fr\'{e}chet spaces. 
\end{defn}

\begin{ex}
$\Gamma(X, \mathrm{Sym}^{2}(T^{\vee}_{X}))$, of which $\mathrm{Met}(X)$ is an open submanifold, is a Fr\'{e}chet manifold if $X$ is compact, and similarly, the diffeomorphism group $\mathrm{Diff}(X)$ is a Fr\'{e}chet Lie group as long as $X$ is compact \cite{tuschmannwraith}. Thus, we will usually assume that $X$ is compact or even closed in much of what follows, even though in the Lorentzian case, $X$ is usually not compact. However, many physically relevant Lorentzian manifolds are assumed to have the simple form $\Sigma \times \mathbf{R}$, for $\Sigma$ a spacelike compact submanifold and $\mathbf{R}$ the time direction. This is the path through which many Riemannian results are translated into the Lorentzian regime. 
\end{ex}

\begin{defn}[Adapted from Definition 1.7 of \cite{tuschmannwraith}]
	The space $\mathrm{Met}(X) = \mathscr{M}$ of all Riemannian metrics on a compact $X$ is the Fr\'{e}chet manifold defined as the subspace of $\Gamma(X, \mathrm{Sym}^{2}(T^{\vee}_{X}))$ consisting of all sections which are Riemannian metrics on $X$, equipped with the smooth topology of uniform convergence on compact subsets.
\end{defn}

Since $\mathscr{M}$ is a Fr\'{e}chet manifold and since any space of fields $\mathscr{F}$ on a compact $X$ for a BV classical field theory is a dg Fr\'{e}chet manifold by means of being a dg Fr\'{e}chet space (by Definition \ref{BVdefn} and Example \ref{sectionsfrech}), we can bring to fruition Remark \ref{bundleoffields}:

\begin{prop}\label{fieldsasfrechetbundle}
Any BV classical field theory for which the action functional $S$ depends on the metric $g \in \mathscr{M}$ defines a dg Fr\'{e}chet vector bundle $\pi :(\mathscr{F}, \{ S,- \}) \to \mathscr{M}$ for a free theory or a dg Fr\'{e}chet bundle of $L_{\infty}$ algebras for an interacting theory, with fibers $\pi^{-1}(g) = (\mathscr{F}_{g}, \{S_{g},-\})$. 
\end{prop}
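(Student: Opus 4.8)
The plan is to present $\pi:\mathscr{F}\to\mathscr{M}$ as the trivial $\mathbf{Z}$-graded Fréchet vector bundle $\mathscr{M}\times\mathscr{F}^{\mathrm{gr}}$ carrying a fiberwise differential (in the free case) or a tower of fiberwise brackets (in the interacting case) that varies smoothly over the base, and then to read off the fibers. The first step is to isolate the underlying graded object: by Definition~\ref{BVdefn} a BV theory coupled to a metric consists of a fixed graded vector bundle $F\to X$ on $X$, a fixed $(-1)$-shifted symplectic form $\omega$, and an action $S$, and in this setup only $S$ is permitted to depend on $g$ — hence only the operator $Q$ obtained from the quadratic part of $S$, and the polydifferential operators $\ell_{n}$ read off from its higher Taylor components, carry $g$-dependence. (If the graded bundle of fields itself moved with $g$ one would instead have to produce an honest local trivialization; but for a theory coupled to a \emph{background} metric the kinematics are fixed and only the dynamics vary, which is exactly what makes the bundle trivial as a graded bundle.) Thus $\mathscr{F}^{\mathrm{gr}}:=\Gamma(X,F)$ is a $\mathbf{Z}$-graded Fréchet space by Example~\ref{sectionsfrech}, $\mathscr{F}:=\mathscr{M}\times\mathscr{F}^{\mathrm{gr}}$ with $\pi$ the projection is a Fréchet vector bundle (trivial, hence a fortiori locally trivial with smooth transition maps), and its fiber over $g$ has the correct underlying graded space $\mathscr{F}_{g}$.

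The substance of the proposition is then that $g\mapsto Q_{g}$ (free case) and $g\mapsto\{\ell_{n}^{g}\}_{n\ge 1}$ (interacting case) vary smoothly, i.e.\ that the evaluation maps $(g,\varphi)\mapsto Q_{g}\varphi$ on $\mathscr{M}\times\mathscr{F}^{\mathrm{gr}}$ and $(g,\varphi_{1},\dots,\varphi_{n})\mapsto\ell_{n}^{g}(\varphi_{1},\dots,\varphi_{n})$ on $\mathscr{M}\times(\mathscr{F}^{\mathrm{gr}})^{\otimes n}$ are smooth maps of Fréchet manifolds. Here I would use that $S\in\mathscr{O}_{\mathrm{loc}}(\mathscr{F})$ is local: in any coordinate patch of $X$ it is an integral of an expression built from $g$, finitely many partial derivatives of $g$, the nowhere-vanishing function $\det g$ and its reciprocal, and the fields with their derivatives. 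Differentiating $S$ twice (resp.\ $n$ times) in $\varphi$ and stripping the integral therefore yields a (poly)differential operator $Q_{g}$ (resp.\ $\ell_{n}^{g}$) whose coefficients are smooth — indeed real-analytic on the locus $\det g\neq 0$ — functions of a finite jet of $g$. A family of (poly)differential operators of this ``jet-local with smoothly varying coefficients'' type assembles into a smooth map on a product of Fréchet manifolds: using the chain rule for smooth maps between Fréchet spaces, this reduces to smoothness of the $\infty$-jet prolongations $\mathscr{M}\to\Gamma(X,\mathrm{Jet}(\mathrm{Sym}^{2}(T^{\vee}_{X})))$ and $\varphi\mapsto j^{\infty}\varphi$ (continuous linear, hence smooth), of pointwise multiplication of sections, and of post-composition with a fiberwise-smooth bundle map — all standard in the chosen calculus on Fréchet manifolds (e.g.\ convenient calculus). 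For the theories of interest only finitely many $\ell_{n}$ are nonzero (in Examples~\ref{freescalarfield}--\ref{phi4}, only $\ell_{3}$), so the full $L_{\infty}$ structure varies smoothly; the relations $Q_{g}^{2}=0$ and the $n$-Jacobi identities hold in each fiber because $\{S_{g},S_{g}\}=0$ for every $g\in\mathscr{M}$. Identifying $\pi^{-1}(g)$ with $(\mathscr{F}_{g},\{S_{g},-\})$ is then immediate from the construction.

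I expect the main obstacle to be precisely the smoothness claim of the second paragraph, and within it the bookkeeping needed to make ``the $L_{\infty}$ structure varies smoothly over $\mathscr{M}$'' a precise statement: one must fix a notion of calculus on Fréchet manifolds and verify that jet prolongation, multiplication of sections, and substitution into a fiberwise-smooth symbol are all smooth and can be composed, and — when infinitely many $\ell_{n}$ are nonzero — that the resulting family still defines a smooth section of the bundle of $L_{\infty}$ structures (for instance by arguing order by order in the natural pro-topology). For the running examples this is routine; the version of the statement quantified over \emph{all} BV theories coupled to a metric is where care is needed, and one may wish either to impose a mild locality/polynomiality hypothesis on the $g$-dependence of $S$ or to organize the argument through the identification $\mathscr{O}_{\mathrm{loc}}(\mathscr{F})\cong\mathrm{Dens}_{X}\otimes_{D_{X}}\mathscr{O}_{\mathrm{red}}(\mathscr{J}(F))$ recalled earlier.
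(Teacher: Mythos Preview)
Your proposal is correct and follows the same overall architecture as the paper's proof: exhibit the bundle as the trivial graded Fr\'{e}chet bundle $\mathscr{M}\times\mathscr{F}^{\mathrm{gr}}$ and then assert or argue that the fiberwise differential/$L_{\infty}$ structure varies smoothly. There are two mild differences worth noting. First, the paper obtains triviality of the underlying graded bundle from the contractibility of $\mathscr{M}$, whereas you obtain it more directly from the hypothesis that only $S$ (hence only $Q_{g}$ and the $\ell_{n}^{g}$) depends on $g$ while $F\to X$ does not; your route is arguably cleaner, since contractibility only gives trivializability rather than a canonical trivialization. Second, the paper simply \emph{asserts} that ``the assignment of a dg or $L_{\infty}$ structure is smooth,'' while you actually sketch why---via locality of $S$, jet prolongation, and the fact that the coefficients of the resulting polydifferential operators are smooth functions of a finite jet of $g$. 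Your added rigor is welcome and your caveats about the general case (infinitely many $\ell_{n}$, choice of calculus) are apt, but for the purposes of this proposition the paper is content with the bare assertion.
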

\begin{proof}
	Because $\mathscr{M}$ is always contractible, the underlying graded vector bundle is $\mathscr{F} \times \mathscr{M}$, where $\mathscr{F}$ is Fr\'{e}chet by Example \ref{sectionsfrech}. A product of Fr\'{e}chet manifolds is once again Fr\'{e}chet, and the assignment of a dg or $L_{\infty}$ structure is smooth. 
\end{proof}

The computation in Remark \ref{scalarcovariance} along with Proposition \ref{fieldsasfrechetbundle} allow us to state the following:

\begin{lem}\label{scalequiv}
For the free scalar BV theory defined in Example \ref{freescalarfield}, any diffeomorphism $f \in \mathscr{D}$ defines a cochain map between fibers of the dg Fr\'{e}chet vector bundle $(\mathscr{F}, Q) \to \mathscr{M}$: 
\begin{center}
\begin{tikzcd}
\mathscr{F}_{g} = C^{\infty}(X) \arrow[r, "Q_{g}"] \arrow[d, "f^{*}"]
& \mathrm{Dens}(X)[-1] \arrow[d, "f^{*}"] \\
\mathscr{F}_{f^{*}g} = C^{\infty}(X) \arrow[r, "Q_{f^{*}g}"]
& \mathrm{Dens}(X)[-1],
\end{tikzcd}
\end{center}
which implies that $(\mathscr{F}, Q) \to \mathscr{M}$ is a $\mathscr{D}$-equivariant differential graded vector bundle.
\end{lem}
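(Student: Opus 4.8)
The plan is to verify the two conditions of Definition \ref{dgequiv} for $G = \mathscr{D}$ acting on $\mathscr{M}$ by pullback, taking the pointwise computation already recorded in Remark \ref{scalarcovariance} as the essential input. First I would pin down the action: $\mathscr{D}$ acts on $\mathscr{M}$ by $f \cdot g := (f^{-1})^* g$ (the inverse makes this a genuine \emph{left} action, since $g \mapsto f^* g$ is only an anti-homomorphism), and on the total spaces $\mathscr{F}^0 = C^\infty(X) \times \mathscr{M}$ and $\mathscr{F}^1 = \mathrm{Dens}(X) \times \mathscr{M}$ diagonally, $f \cdot (\varphi, g) = ((f^{-1})^*\varphi, (f^{-1})^* g)$ and $f \cdot (\mu, g) = ((f^{-1})^*\mu, (f^{-1})^* g)$; to keep the notation of Remark \ref{scalarcovariance} I would continue to write $f^*$ with this variance understood. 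Condition (i) of Definition \ref{dgequiv}, namely that $\mathscr{F}^0 \to \mathscr{M}$ and $\mathscr{F}^1 \to \mathscr{M}$ are $\mathscr{D}$-equivariant vector bundles in the sense of Definition \ref{equivvect}, is then routine bookkeeping: pullback along a fixed diffeomorphism is a linear isomorphism of $C^\infty(X)$ (resp.\ of $\mathrm{Dens}(X)$), it intertwines $\pi$ with the action on $\mathscr{M}$ by construction, and the joint action map $\mathscr{D} \times C^\infty(X) \to C^\infty(X)$ (resp.\ for densities) is smooth in the Fr\'{e}chet sense, which is a standard property of the action of $\mathrm{Diff}(X)$ on section spaces of a compact $X$ compatible with the Fr\'{e}chet structure of Example \ref{sectionsfrech} and \cite{tuschmannwraith}.

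The actual content is condition (ii): for each $f$ the square in the statement commutes, i.e.\ $f^* \circ Q_g = Q_{f^* g} \circ f^*$ as maps $C^\infty(X) \to \mathrm{Dens}(X)$. Since $Q_g \varphi = \Delta_g \varphi \, \mathrm{vol}_g$, this rests on two naturality facts: (a) the Laplace--Beltrami operator is diffeomorphism-natural, $\Delta_{f^* g}(f^* \varphi) = f^*(\Delta_g \varphi)$, which one sees by writing $\Delta_g = \mathrm{div}_g \circ \mathrm{grad}_g$ and invoking naturality of the metric, the musical isomorphisms, and the divergence (equivalently, on functions $\Delta_g = \delta_g d$ with $d$ natural on the nose and the codifferential $\delta_g$ transforming under $f$ by the replacement $g \mapsto f^* g$); and (b) the Riemannian density is natural, $\mathrm{vol}_{f^* g} = f^*(\mathrm{vol}_g)$, where using densities rather than top forms dispenses with any orientation hypothesis. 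Then, since $f^*(\psi \mu) = (f^* \psi)(f^* \mu)$ for $\psi \in C^\infty(X)$ and $\mu \in \mathrm{Dens}(X)$, one gets
\[
f^*(Q_g \varphi) = f^*(\Delta_g \varphi)\, f^*(\mathrm{vol}_g) = \Delta_{f^* g}(f^* \varphi)\, \mathrm{vol}_{f^* g} = Q_{f^* g}(f^* \varphi),
\]
which is exactly commutativity of the square, i.e.\ $f^*$ is a cochain map $(\mathscr{F}_g, Q_g) \to (\mathscr{F}_{f^* g}, Q_{f^* g})$. This is nothing but the chain of equalities displayed in Remark \ref{scalarcovariance}, reorganized to fit Definition \ref{dgequiv}.

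Granting (i) and (ii), Definition \ref{dgequiv} immediately gives that $(\mathscr{F}, Q) \to \mathscr{M}$ is a $\mathscr{D}$-equivariant dg Fr\'{e}chet vector bundle (it is already a dg Fr\'{e}chet vector bundle by Proposition \ref{fieldsasfrechetbundle}), which is the claim. I expect the only genuine subtlety to be the naturality of $\Delta_g$ in step (a); everything else is either a matter of fixing conventions for the group action or a direct appeal to Remark \ref{scalarcovariance} and Proposition \ref{fieldsasfrechetbundle}, and a one-line invariant derivation of (a) from $\Delta_g = \delta_g d$ can be inserted if a fully self-contained argument is wanted.
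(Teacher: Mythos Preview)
Your proposal is correct and follows essentially the same approach as the paper: the paper does not give a separate proof of the lemma but simply says it follows from the computation in Remark~\ref{scalarcovariance} together with Proposition~\ref{fieldsasfrechetbundle}, and your write-up is a careful unpacking of exactly those two ingredients against Definition~\ref{dgequiv}. Your added care about the left-versus-right action convention and the explicit decomposition of $Q_g$-naturality into Laplace--Beltrami naturality plus volume-density naturality goes slightly beyond what the paper spells out, but the argument is the same.
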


For the remainder of this article, we will sometimes drop the term ``Fr\'{e}chet" when it is contextually implied, unless attention is otherwise drawn to it. 
This result also implies a significant and useful corollary:

\begin{cor}\label{representations}
If $g \in \mathscr{M}$ is a fixed point of $f \in \mathscr{D}$ (i.e. if $f$ is an isometry of $g$) and if $Q_{g}\varphi = 0$, then $Q_{g}(f^{*}\varphi) = 0$. In other words, isometries of the metric $g$ act on the space of solutions to $\Delta_{g}\varphi = 0$ (Laplace's equation). 
\end{cor}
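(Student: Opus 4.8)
The plan is to read this off Lemma \ref{scalequiv} with essentially no extra work. That lemma provides, for every $f \in \mathscr{D}$, a commuting square in which the horizontal arrows are $Q_g$ and $Q_{f^{*}g}$ and the vertical arrows are $f^{*}$; commutativity is precisely the identity $f^{*}(Q_g \varphi) = Q_{f^{*}g}(f^{*}\varphi)$ for all $\varphi \in C^{\infty}(X)$. So the only thing to do is specialize this identity to the hypothesis at hand.

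First I would unwind what "$g$ is a fixed point of $f$" means: by Remark \ref{scalarcovariance} the action of $\mathscr{D}$ on $\mathscr{M}$ is by pullback, $f \cdot g = f^{*}g$, so $g$ being fixed by $f$ says exactly $f^{*}g = g$, i.e.\ $f$ is an isometry of $(X,g)$. Substituting $f^{*}g = g$ into the commuting square of Lemma \ref{scalequiv} makes the bottom row coincide with the top row: both become $Q_g : \mathscr{F}_g \to \mathrm{Dens}(X)[-1]$, the square becomes an endomorphism square of the fiber $(\mathscr{F}_g, Q_g)$, and its commutativity reads $f^{*}(Q_g \varphi) = Q_g(f^{*}\varphi)$.

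Then I would feed in the hypothesis $Q_g \varphi = 0$: the left-hand side is $f^{*}(0) = 0$ since $f^{*}$ is linear, hence $Q_g(f^{*}\varphi) = 0$, which is the claim. Finally, to phrase it as a statement about Laplace's equation, I would note that in Example \ref{freescalarfield} we have $Q_g \varphi = \Delta_g \varphi\,\mathrm{vol}_g$ with $\mathrm{vol}_g$ nowhere vanishing, so $Q_g \varphi = 0 \iff \Delta_g \varphi = 0$; thus the corollary says exactly that $f^{*}$ preserves $\ker \Delta_g$, i.e.\ the isometry group of $(X,g)$ acts on the space of harmonic functions.

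I do not anticipate any real obstacle: all of the content sits upstream in Remark \ref{scalarcovariance} and Lemma \ref{scalequiv} (ultimately in the naturality of $\Delta_g$ under isometries), and this statement is just the restriction of that naturality to the stabilizer subgroup of $g$ in $\mathscr{D}$. The only point that warrants an explicit sentence is the identification of "fixed point of $f$ on $\mathscr{M}$" with "$f^{*}g = g$", since that is what collapses the two rows of the square and turns the equivariance identity into a genuine symmetry of a single complex.
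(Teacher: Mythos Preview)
Your proposal is correct and matches the paper's approach exactly: the paper states this as an immediate corollary of Lemma \ref{scalequiv} without writing out a separate proof, and your argument---specializing the commuting square to the case $f^{*}g = g$ and applying linearity of $f^{*}$---is precisely the intended one-line deduction.
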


Of course, this corollary holds for \textit{any} generally covariant BV field theory: we bring special attention to it in this case because it is a ``gold standard" result when learning PDE for the first time, and thus serves as a touchstone for the value of the preceding perspective.

\begin{rmk}
The differential $Q_{g}$ of the differential graded scalar fields has a very clear dependence on the base space $\mathscr{M}$. In fact, as a topological space, the bundle is trivial, as it is $(C^{\infty}(X) \oplus \textrm{Dens}(X)[-1]) \times \mathscr{M}$: the differential $Q_{g}$ defines any nontriviality as a \textit{differential graded} vector bundle.
\end{rmk}

\begin{ex}\label{pertcon1}
Lemma \ref{scalequiv} holds for a particular case in which the BV theory is both \textit{free} and \textit{non-perturbative}: i.e. the Euler-Lagrange equations are linear in the fields $\phi \in \mathscr{F}_{g}$ and we are \textit{not} choosing a fixed solution to perturb around, so that the observables are polynomial functions of the fields as opposed to Taylor series.  

We will now consider an example of an interacting theory. The bundle $(\mathscr{L}, \{S,- \} ) \to \mathscr{M}$\footnote{Note that the notation has changed since the perturbative space of fields is $\mathscr{L} = \mathscr{F}[-1]$.} representing the family of theories is no longer just a dg vector bundle, but a bundle of elliptic $L_{\infty}$ algebras over $\mathscr{M}$. Heuristically speaking, we will no longer view the family as a collection of vector spaces varying over $\mathscr{M}$, but rather as a collection of formal neighborhoods varying over $\mathscr{M}$: although the underlying graded structure is still a vector bundle, the geometry encoded in the $L_{\infty}$ structures on distinct fibers implies this shift in perspective.

Let us return to Example \ref{phi4}. Recall that the equation of motion in that instance is:
$$
Q_{g}\varphi + \frac{1}{3!}\varphi^{3}\mathrm{vol}_{g} = 0. 	
$$
If we fix a diffeomorphism $f \in \mathscr{D}$, we see that the Euler-Lagrange form satisfies:
\begin{equation}\label{GCphi4}
	f^{*}(Q_{g}\varphi + \frac{1}{3!}\varphi^{3}\mathrm{vol}_{g}) = Q_{f^{*}g}(f^{*}\varphi) + \frac{1}{3!} (f^{*}\varphi)^{3}\mathrm{vol}_{f^{*}g}. 
\end{equation}
	The equivariance property for the first summand is precisely what is shown in Lemma \ref{scalequiv}, and the second summand (the interaction term) is equivariant because polynomial functions of the fields are patently equivariant in this way. Equation (\ref{GCphi4}) can then be reformulated in terms of the brackets on the elliptic $L_{\infty}$ algebra of Example \ref{phi4} as: 
\begin{equation}
	f^{*}\big( \ell_{1}^{g}(\varphi) + \frac{1}{3!}\ell_{3}^{g}(\varphi, \varphi, \varphi) \big) = \ell_{1}^{f^{*}g}(f^{*}\varphi) + \frac{1}{3!}\ell_{3}^{f^{*}g}(f^{*}\varphi, f^{*}\varphi, f^{*}\varphi),
\end{equation}
where we have included the dependence of the brackets $\ell_{k}$ on the underlying metric $g \in \mathscr{M}$ as a superscript. The above equation is the $\mathscr{D}$-equivariance property we desire in the Euler-Lagrange term which implies that the family of theories defined by $\varphi^{4}$ theory as in Example \ref{phi4} is generally covariant. 

This generalizes naturally to the case in which the interaction term is any polynomial in $\varphi$ times $\mathrm{vol}_{g}$. In that case, $\ell_{1} = Q_{g}$ and $\ell_{n} : C^{\infty}(X)[-1]^{\otimes n} \to \mathrm{Dens}(X)[-2]$ for $n \geq 2$ is:
$$
\ell_{n} : \varphi_{1} \otimes \ldots \otimes \varphi_{n} \mapsto  \lambda_{n} \varphi_{1}\ldots \varphi_{n}\mathrm{vol}_{g},
$$
where the $\lambda_{n}$ are constants. Similarly to Equation (\ref{GCphi4}), it is quick to show that:
\begin{equation}
f^{*} \big(\ell_{1}^{g}(\varphi) + \sum_{n \geq 2}\frac{\lambda_{n}}{n!} \ell_{n}^{g}(\varphi, \ldots, \varphi) \big)	 = \ell_{1}^{f^{*}g}(f^{*}\varphi) + \sum_{n \geq 2}\frac{\lambda_{n}}{n!} \ell_{n}^{f^{*}g}(f^{*}\varphi, \ldots, f^{*}\varphi). \footnote{The $n!$ is traditionally included to simplify computations with the $L_{\infty}$ structure.}
\end{equation}
Thus, any scalar field theory with action functional 
$$
S_{g}(\varphi) = \int_{X} (\frac{-1}{2}\varphi \Delta_{g}\varphi + V(\varphi))\mathrm{vol}_{g},
$$
where $V(\varphi)$ is a polynomial ``potential" in $\varphi$, is generally covariant. 
\end{ex}

\begin{lem}\label{polynomialinteraction}
Let $\pi : (\mathscr{L}, \{S,-\}) \to \mathscr{M}$ be a family of perturbative Batalin-Vilkovisky classical scalar field theories with polynomial potential. Any $f \in \mathscr{D}$ defines an $L_{\infty}$ map between fibers of $\pi : (\mathscr{L}, \{S,-\}) \to \mathscr{M}$: 
\begin{center}
\begin{tikzcd}
\mathscr{L}_{g} = C^{\infty}(X)[-1] \arrow[r, " \ell^{g} "] \arrow[d, "f^{*}"]
& \mathrm{Dens}(X)[-2] \arrow[d, "f^{*}"] \\
\mathscr{L}_{f^{*}g} = C^{\infty}(X)[-1] \arrow[r, " \ell^{f^{*}g} "]
& \mathrm{Dens}(X)[-2].
\end{tikzcd}
\end{center}
In other words, $\pi : (\mathscr{L}, \{S,-\}) \to \mathscr{M}$ is a $\mathscr{D}$-equivariant bundle of $L_{\infty}$ algebras. 
\end{lem}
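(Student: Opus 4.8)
The plan is to exhibit, for each $f \in \mathscr{D}$, a \emph{strict} $L_\infty$ morphism $f^* : \mathscr{L}_g \to \mathscr{L}_{f^*g}$ between the fibers --- that is, a morphism whose only nonzero structure map is the linear map $f^*$ acting componentwise as pullback on $C^\infty(X)[-1]$ and on $\mathrm{Dens}(X)[-2]$ --- and then to note that strictness collapses the full $L_\infty$-morphism axioms down to the single family of intertwining identities $f^* \circ \ell_n^{g} = \ell_n^{f^*g} \circ (f^*)^{\otimes n}$ for all $n \geq 1$. Once these hold, the linear isomorphisms $f^*$ assemble into exactly the data required by the $L_\infty$ analogue of Definition \ref{dgequiv}, which proves the claim. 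Most of the genuine content has already appeared in Example \ref{pertcon1}, so the proof is in large part a repackaging of that computation.

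First I would dispatch $n = 1$: the identity $f^*(Q_g\varphi) = Q_{f^*g}(f^*\varphi)$ is precisely Lemma \ref{scalequiv}, which rests on the Laplace--Beltrami operator commuting with diffeomorphisms together with the naturality $f^*\mathrm{vol}_g = \mathrm{vol}_{f^*g}$ of the Riemannian volume form. For $n \geq 2$, recall the bracket is $\ell_n^{g}(\varphi_1,\ldots,\varphi_n) = \lambda_n\,\varphi_1\cdots\varphi_n\,\mathrm{vol}_g$; applying $f^*$ and using both that pullback of smooth functions is a ring homomorphism --- so it distributes over the pointwise product --- and again that $f^*\mathrm{vol}_g = \mathrm{vol}_{f^*g}$, one obtains
\begin{equation}
f^*\bigl(\lambda_n\,\varphi_1\cdots\varphi_n\,\mathrm{vol}_g\bigr) = \lambda_n\,(f^*\varphi_1)\cdots(f^*\varphi_n)\,\mathrm{vol}_{f^*g} = \ell_n^{f^*g}(f^*\varphi_1,\ldots,f^*\varphi_n).
\end{equation}
Since the only nontrivial brackets of $\mathscr{L}_g$ are $\ell_1 = Q_g$ and these multiplicative higher brackets, this establishes that $f^*$ strictly intertwines the entire $L_\infty$ structure. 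I would then record that these isomorphisms vary smoothly in $f$ (the underlying graded bundle is trivial by Proposition \ref{fieldsasfrechetbundle}) and are compatible with the pullback action of $\mathscr{D}$ on $\mathscr{M}$, so that they genuinely furnish a $\mathscr{D}$-equivariant bundle of $L_\infty$ algebras.

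I do not anticipate a serious obstacle here: the argument is powered entirely by the naturality of $\mathrm{vol}_g$ and the multiplicativity of pullback on functions. The one point worth spelling out carefully is the bookkeeping observation that no higher morphism components $F_n$ with $n \geq 2$ are needed --- the $\mathscr{D}$-equivariance is realized on the nose, not merely up to coherent homotopy --- and this is exactly because scalar-field interactions with polynomial potential are built from the operations of pointwise multiplication and integration against $\mathrm{vol}_g$, each of which is honestly diffeomorphism-equivariant. The natural direction in which to push this lemma would be to allow potentials involving covariant derivatives of $\varphi$, where the same proof goes through once one additionally invokes the naturality of covariant differentiation under pullback.
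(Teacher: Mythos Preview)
Your proposal is correct and follows essentially the same route as the paper: the lemma is stated as a summary of the computation in Example~\ref{pertcon1}, which verifies $f^{*}\circ\ell_{n}^{g}=\ell_{n}^{f^{*}g}\circ(f^{*})^{\otimes n}$ by invoking Lemma~\ref{scalequiv} for $n=1$ and the multiplicativity of pullback together with $f^{*}\mathrm{vol}_{g}=\mathrm{vol}_{f^{*}g}$ for $n\geq 2$. Your explicit remark that the $L_{\infty}$ morphism is strict (no higher components $F_{n}$ are needed) is a helpful clarification the paper leaves implicit.
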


An analogous version of Corollary \ref{representations} holds here, and follows by a nearly identical computation. We can now state a first definition for general covariance:

\begin{defn}\label{gencov}
	Let $\pi: (\mathscr{F}, \{S,-\}) \to \mathscr{M}$ define a family of BV field theories on $X$ parameterized by the space of metrics on $X$. If it is $\textrm{Diff}(X) =: \mathscr{D}$-equivariant as a differential graded vector bundle or as a bundle of $L_{\infty}$ algebras (depending on whether the theories are free or perturbative/interacting), we call the theory \textbf{generally covariant}. 
\end{defn}


\begin{rmk}
	Field theories which satisfy general covariance are therefore not sensitive to \textit{all} of $\mathscr{M}$, but only to the moduli space of metrics modulo diffeomorphism, $\mathscr{M}/\mathscr{D}$. Although many physically relevant metrics have many isometries, the coarse quotient $\mathscr{M}/\mathscr{D}$ ``forgets them": the need for a more general concept of a space which remembers them is desirable, and this is where stacks will become useful.
\end{rmk}

\begin{ex}
A tangible example of $\mathscr{M}/\mathscr{D}$ with such a singular point is the Riemannian manifold $X = \mathbf{R}^{n}$ along with the flat metric $\eta$. It is well known that the isometry group of $(\mathbf{R}^{n}, \eta)$	 is $O(n) \ltimes \mathbf{R}^{n}$, where the $\mathbf{R}^{n}$ in the semidirect product is the additive group of translations of $\mathbf{R}^{n}$. In particular, $O(n) \ltimes \mathbf{R}^{n}$ is a subgroup of $\mathrm{Diff}(\mathbf{R}^{n})$ which stabilizes $\eta \in \mathscr{M}(\mathbf{R}^{n})$, meaning that the corresponding point in the quotient is singular. Moreover, $O(n) \ltimes \mathbf{R}^{n}$ acts on the space of solutions to any generally covariant theory on $(\mathbf{R}^{n}, \eta)$	. Definition \ref{gencov} therefore ``enlarges" our usual idea of equivalence beyond isometries.
\end{ex}

\subsubsection{An important remark on functoriality}\label{functors}

	Strictly speaking, the claims of Lemmas \ref{scalequiv} and \ref{polynomialinteraction} are true in a broader sense. Instead of assuming that $f : X \to X$ is a diffeomorphism, we let $f : U \to X$ be an isometric embedding. In other words, consider the category $\mathbf{Riem}_{n}$ whose objects are Riemannian $n$-folds and whose morphisms are isometric embeddings: $f : (U,g') \to (X,g)$ so that $f^{*}g = g'$. In the case of the free scalar field in Lemma \ref{scalequiv}, the commutative square is replaced by
\begin{center}
	\begin{tikzcd}
	{C^{\infty}(X)} && {\mathrm{Dens}(X)[-1]} \\
	\\
	{C^{\infty}(U)} && {\mathrm{Dens}(U)[-1]}
	\arrow["{f^{*}}", from=1-1, to=3-1]
	\arrow["{Q_{g'} = Q_{f^{*}g}}", from=3-1, to=3-3]
	\arrow["{Q_{g}}", from=1-1, to=1-3]
	\arrow["{f^{*}}", from=1-3, to=3-3],
\end{tikzcd}
\end{center}
which commutes by the very same computation. This implies that the assignment of a free BV theory is a contravariant functor $ (\mathscr{F},Q) : \mathbf{Riem}_{n} \to \mathbf{dgVect}$ from the site of Riemannian $n$-folds to the category of cochain complexes. We can call this more general notion ``very general covariance" or keep it simply as ``general covariance". The computation from Lemma \ref{polynomialinteraction} implies that the above works out just as well for interacting theories: in that case, the target must be $L_{\infty}\mathbf{Alg}$, the category of $L_{\infty}$ algebras.\footnote{We will stick with the broader category of $L_{\infty}$ algebras for the rest of this section.} This suggests something deeper about the physics: not only are the computations invariant with respect to coordinates choices, but also ``manifold choices" more broadly.

We can compose the preceding functor with the functor $L_{\infty}\mathbf{Alg} \to \mathbf{dgAlg}$\footnote{As it stands, the target category can be $\mathbf{dgCAlg}$ (dg commutative algebras), but we leave it as is because we may lose commutativity after quantization.} which takes an $L_{\infty}$ algebra $\mathscr{L}$ and outputs its Chevalley-Eilenberg cochains $C^{\bullet}(\mathscr{L})$. Then the composite functor 
\begin{equation}\label{obsasfunctor}
\mathrm{Obs}^{\mathrm{cl}} : \mathbf{Riem}_{n} \to \mathbf{dgAlg}
\end{equation}
 is covariant, as indeed it should be if we would like to make a factorization algebra from it (as is done in \cite{cosgwill1} and \cite{cosgwill2}). This is a point of connection with the definition of covariance presented in \cite{fewster}. In that work, Fewster outlines a broad framework to understand the idea of ``same physics in all spacetimes" (SPAS) in which he defines (Definition 3.1) a locally covariant theory to be a covariant functor $\mathfrak{A} : \mathbf{BkGrnd} \to \mathbf{Phys}$ from some category of ``background geometries" to an appropriate category of ``physical quantities", like observables. Our preceding construction clearly falls into this class of objects. 
 
In the study of Algebraic Quantum Field Theory (AQFT), a common choice for $\mathfrak{A}$ is 
\begin{equation}
\mathfrak{A} : \mathbf{Loc}_{n} \to C^{*}\text{-}\mathbf{Alg}.
\end{equation}
$\mathbf{Loc}_{n}$ is the category of oriented, time-oriented, and globally hyperbolic $n$-dimensional Lorentzian manifolds whose morphisms are isometric embeddings which respect orientations and time-orientations, and $C^{*}$-$\mathbf{Alg}$ is the category of $C^{*}$ algebras, to which the observables of a quantum field theory (usually) belong. The work of this article pertains to classical observables, and much of the focus is on the first part of the composite functor $\mathrm{Obs^{cl}}$: but once the full composition is made, the comparison with AQFT is apparent. Further details concerning this subject are provided in great detail in \cite{rejzner} (in particular Section 2.5). 
 
\begin{rmk}
	To summarize, the contents of this paper are presented for a \textit{fixed} smooth manifold $X$, its space of metrics, its diffeomorphism group, and various fields defined on it because focusing on the ``smaller problem" made it easier to manage the functional analytic constructions presented earlier and invoked later on.
	
Because the aforementioned fields and groups are sheaves on $X$, it is already apparent that all of the work lifts to the level of the slice category $\mathbf{Riem}_{n}/X$,\footnote{This is a ``little site" built from the site $\mathbf{Riem}_{n}$. Note that once $X$ is fixed, $\mathrm{Diff}(X)$ acts on this site.} whose objects isometrically embed into $X$ and whose morphisms $f$ are specified by commutative triangles
	\begin{center}
		\begin{tikzcd}
	U && V \\
	& X.
	\arrow["f", hook, from=1-1, to=1-3]
	\arrow["{\iota_{V}}",  from=1-3, to=2-2]
	\arrow["{\iota_{U}}"',  from=1-1, to=2-2]
\end{tikzcd}
	\end{center}
	From here, it is not a stretch to see that our constructions lift to $\mathbf{Riem}_{n}$. In particular, this means we have the composition of functors
	\begin{equation}
		\mathbf{Riem}_{n} \xrightarrow{\mathscr{F}} L_{\infty}\mathbf{Alg} \xrightarrow{\mathrm{Obs^{cl}}} \mathbf{dgAlg}.\footnote{Letting the target category be $\mathbf{dgVect}$ is also an acceptable viewpoint.}
	\end{equation}
Even better, since we take for granted Costello and Gwilliam's result that $\mathrm{Obs^{cl}} : \mathbf{Disj}_{X} \to \mathbf{dgAlg}$ defines a factorization algebra for a fixed $X$, the above composition ultimately allows us to state the following:

\begin{prop}\label{functorial}
	Any generally covariant BV field theory $(\mathscr{F}, \omega, S)$ defines a functor
	\begin{equation}
		\mathrm{Obs^{cl}}(-, \mathscr{F}) : \mathbf{Riem}_{n} \to \mathbf{dgAlg}
	\end{equation}
	which constitutes a factorization algebra on the site $\mathbf{Riem}_{n}$. 
\end{prop}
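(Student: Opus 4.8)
The plan is to produce the functor first and then verify the factorization-algebra axioms by reducing everything, object by object, to the results of Costello and Gwilliam for a fixed manifold. For the functor: general covariance (Definition \ref{gencov}), in the functorial form discussed above---the extension of Lemmas \ref{scalequiv} and \ref{polynomialinteraction} to isometric embeddings $f : (U,g') \hookrightarrow (X,g)$ with $f^{*}g = g'$---says precisely that pullback of fields $f^{*} : \mathscr{F}(X) \to \mathscr{F}(U)$ is a morphism of (elliptic) $L_{\infty}$ algebras. Checking $\mathrm{id}^{*} = \mathrm{id}$ and $(f \circ h)^{*} = h^{*} \circ f^{*}$ is routine, so $\mathscr{F} : \mathbf{Riem}_{n} \to L_{\infty}\mathbf{Alg}$ is a (contravariant) functor. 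Composing with the Chevalley--Eilenberg cochains functor $C^{\bullet} : L_{\infty}\mathbf{Alg} \to \mathbf{dgAlg}$ of Definition \ref{obsonaset}, which is itself contravariant, yields the covariant composite $\mathrm{Obs^{cl}}(-,\mathscr{F}) = C^{\bullet} \circ \mathscr{F}$ of Equation (\ref{obsasfunctor}); this is the underlying functor of the putative factorization algebra.

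Next I would equip this functor with its prefactorization structure. Given finitely many mutually disjoint isometric embeddings $\iota_{i} : (U_{i}, g_{i}) \hookrightarrow (X,g)$, locality of the space of fields gives a restriction map $\mathscr{F}(X) \to \bigoplus_{i} \mathscr{F}(U_{i})$ compatible with the $L_{\infty}$ structures---the brackets are polydifferential operators, hence restrict---and since $C^{\bullet}$ sends finite direct sums of $L_{\infty}$ algebras to tensor products of dg algebras, dualizing produces the structure map $\bigotimes_{i} \mathrm{Obs^{cl}}(U_{i}) \to \mathrm{Obs^{cl}}(X)$. Associativity and $S_{k}$-equivariance of these maps, and their compatibility with composition of embeddings, would follow from the corresponding properties of restriction of sections together with the symmetric monoidal behaviour of $C^{\bullet}$; general covariance is what guarantees all of this is natural in the isometric-embedding morphisms. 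This exhibits $\mathrm{Obs^{cl}}(-,\mathscr{F})$ as a prefactorization algebra on $\mathbf{Riem}_{n}$.

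Finally I would verify the local-to-global (Weiss-codescent) axiom. The point is that this is a condition imposed on Weiss covers, and a Weiss cover of an object $(X,g)$ of $\mathbf{Riem}_{n}$ is nothing but a Weiss cover of $X$ by open submanifolds carrying the induced metrics; that is, the slice $\mathbf{Riem}_{n}/X$---equivalently $\mathbf{Disj}_{X}$ with its Weiss topology---sits inside $\mathbf{Riem}_{n}$, and the restriction of $\mathrm{Obs^{cl}}(-,\mathscr{F})$ to it is exactly the observables prefactorization algebra of the fixed theory $(\mathscr{F}_{g}, \{S_{g},-\})$ on $X$. By Costello and Gwilliam's theorem that classical observables form a factorization algebra (\cite{cosgwill1}, \cite{cosgwill2}), this restriction satisfies codescent for every $(X,g)$; since the axiom is checked one object at a time, it would then hold on all of $\mathbf{Riem}_{n}$, giving the claim.

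The main obstacle I anticipate is not the objectwise input---that is handed to us by Costello--Gwilliam---but making precise, and then checking, that the factorization structure is genuinely \emph{coherent across the different objects} of $\mathbf{Riem}_{n}$: one must pin down the Grothendieck (Weiss) topology on $\mathbf{Riem}_{n}$ so that the phrase ``factorization algebra on the site $\mathbf{Riem}_{n}$'' has content, confirm that its restriction to each little site $\mathbf{Riem}_{n}/X$ recovers the usual Weiss topology, and verify that the structure maps built from $f^{*}$ for embeddings and the $\bigoplus \to \bigotimes$ maps for disjoint families satisfy all the compatibilities simultaneously. Threaded through this is the functional-analytic bookkeeping---that restriction of nuclear Fr\'{e}chet sections, completed tensor products, and the polydifferential $L_{\infty}$ brackets all behave well under the embeddings---which is routine in the references but is the place where a careless argument would break down.
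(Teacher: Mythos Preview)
Your proposal is correct and follows the same line as the paper, which in fact does not give a formal proof at all: the proposition is stated as a consequence of the preceding discussion, invoking (i) the composite functor $\mathbf{Riem}_{n} \xrightarrow{\mathscr{F}} L_{\infty}\mathbf{Alg} \xrightarrow{C^{\bullet}} \mathbf{dgAlg}$, (ii) the slice-category observation that $\mathbf{Riem}_{n}/X$ controls the fixed-$X$ situation, and (iii) Costello--Gwilliam's result that $\mathrm{Obs^{cl}}$ is a factorization algebra on each fixed $X$. Your write-up supplies considerably more detail than the paper does---in particular the explicit construction of the prefactorization structure maps and the reduction of Weiss codescent to the objectwise case---and your closing caveat about pinning down the Weiss topology on $\mathbf{Riem}_{n}$ and checking cross-object coherence is well taken, as the paper leaves exactly these points implicit.
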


\begin{rmk}
Concretely, once we input some $X \in \mathbf{Riem}_{n}$, the output is a factorization algebra on $X$. Roughly, a prefactorization algebra $\mathcal{F}$ on $X$ is a functor which takes disjoint opens $U_{i}$ as subsets of some larger open $V \subseteq X$ and outputs multiplication maps $\bigotimes_{i} \mathcal{F}(U_{i}) \to \mathcal{F}(V)$. A factorization algebra is a prefactorization algebra which satisfies a particular (co)descent axiom. Further details can be found in Sections 3.1 and 6.1 of \cite{cosgwill1}.
\end{rmk}

For the rest of the paper, we make constructions and compute results within the framework of stacks and formal stacks: one of the ultimate motivations is to step back and notice that all of the results hold at the level of generality specified in this subsection. An eventual goal is to make connections with other literature on functorial perspectives in field theory. An example of such literature linking AQFTs and factorization algebras is \cite{bps} (Theorem 4.7 in particular).

\end{rmk}


\subsection{Groupoids and Stacks}\label{groupoids}

Stacks provide a wonderful packaging of a quotient space, but before diving into them, we must quickly review \textit{groupoids}, which are the cornerstone of the theory of stacks and allow us to do concrete computations with them. For the most part, we follow the constructions outlined in \cite{carchedi} and \cite{heinloth}.

\begin{defn}
A \textbf{groupoid} $\mathcal{G}$ is a small category in which all arrows are invertible. Common notation is $\mathcal{G} = \mathcal{G}_{1} 
\doublerightarrow{s}{t} \mathcal{G}_{0}$, where $\mathcal{G}_{1}$ is the set of arrows and  $\mathcal{G}_{0}$ the set of objects; $s$ sends an arrow to its source object, and $t$ sends it to its target. Every such $\mathcal{G}$ has an identity map $e : \mathcal{G}_{0} \to \mathcal{G}_{1}$ sending an object to its identity arrow, an inverse map $i : \mathcal{G}_{1} \to \mathcal{G}_{1}$ sending an arrow to its inverse, and a multiplication map $m : \mathcal{G}_{1} \times_{\mathcal{G}_{0}} \mathcal{G}_{1} \to \mathcal{G}_{1}$ that concatenates arrows. $s, t, e, i,$ and $m$ are called the \textbf{structure maps} of $\mathcal{G}$.
\end{defn}

\begin{ex}\label{actiongpd}
A premier example of a groupoid is the \textbf{action groupoid} which can be associated to any smooth $G$-space $M$. Its set of objects is $\mathcal{G}_{0} = M$ and its set of arrows is $\mathcal{G}_{1} = M \times G$, so that we can write it as 
$$
M \times G \doublerightarrow{}{} M =: \mathcal{M}_{\mathcal{G}}
$$ 
In this case, $s(p,g) = p$ and $t(p,g) = g 
\cdot p$. Common notation for the action groupoid is $M//G$: the action groupoid is defined as a first step toward understanding coarse quotients which forget stabilizers or may not even be smooth, in the sense that the action of $G$ could fix certain points in $M$ and so $M/G$ could be singular.

\begin{defn}
A \textbf{Lie groupoid} $\mathcal{G} = \mathcal{G}_{1} \doublerightarrow{s}{t} \mathcal{G}_{0}$ is a groupoid such that both the space of arrows $\mathcal{G}_{1}$ and space of objects $\mathcal{G}_{0}$ are smooth manifolds, all structure maps are smooth, and the source and target maps $s,t : \mathcal{G}_{1} \to \mathcal{G}_{0}$ are surjective submersions. (In other words, a Lie groupoid is a groupoid internal to the category of smooth manifolds.)
\end{defn}

\begin{rmk}
Moreover, if $\pi: V \to M$ is a $G$-equivariant vector bundle, then we could also define \textit{its} action groupoid $V \times G \doublerightarrow{}{} V =: \mathcal{V}_{\mathcal{G}}$. Both $\mathcal{V}_{\mathcal{G}}$ and $\mathcal{M}_{\mathcal{G}}$ are in fact Lie groupoids, and $\mathcal{V}_{\mathcal{G}}$ is a vector space object over $\mathcal{M}_{\mathcal{G}}$ in the category of Lie groupoids. Thus, by some abuse of notation, we can view $\pi: \mathcal{V}_{\mathcal{G}} \to \mathcal{M}_{\mathcal{G}}$ as a vector bundle. 
\end{rmk}

\end{ex}

To consider the above definitions for infinite dimensional manifolds, we need to choose the right category: in our case, it is the category of Fr\'{e}chet manifolds. 

\begin{defn}
	A \textbf{Fr\'{e}chet Lie groupoid} is a groupoid internal to the category of Fr\'{e}chet manifolds: in other words, $\mathcal{G}_{1}$ and $\mathcal{G}_{0}$ are Fr\'{e}chet manifolds and $s$ and $t$ are smooth maps of Fr\'{e}chet manifolds. We denote their associated category as $\mathrm{FrLieGpd}$.
\end{defn}

\begin{ex}\label{metdiffgroupoid}
Combining Proposition \ref{fieldsasfrechetbundle} with the above definition implies that for a compact smooth manifold $X$,
\begin{equation}
	\mathrm{Met}(X) \times \mathrm{Diff}(X) \doublerightarrow{}{} \mathrm{Met}(X) =: \mathscr{M}//\mathscr{D}
\end{equation}
is a Fr\'{e}chet Lie groupoid. By Definition \ref{gencov} and the preceding remark, any generally covariant BV field theory constitutes a dg vector bundle or $L_{\infty}$ bundle of Fr\'{e}chet Lie groupoids:
\begin{equation}
	\pi : (\mathscr{F}//\mathscr{D}, \{S,-\}) \to \mathscr{M}//\mathscr{D}.
\end{equation}

\end{ex}

\begin{rmk}
Groupoids are a cornerstone in the definition of stacks, which are the spaces which we eventually would like to replace ordinary manifolds with for the purpose of including quotient data in the space. The soul of the matter lies within the definition of a prestack, which is motivated by the functor of points perspective. The difference is that instead of a functor from $\mathbf{Mfd}^{op}$ (or $\mathbf{CRing}$ for an algebraic geometer) to $\mathbf{Set}$, the functor lands in $\mathbf{Gpd}$, which contains any ``equivalence data" specific to the model at hand. We will define a stack and then quickly move onto the key example, to avoid unnecessary generalities.
\end{rmk}

\begin{defn}[Definition 1.1 in \cite{heinloth}]
A \textbf{prestack} is a 2-functor $\mathfrak{X} : \mathbf{Mfd}^{op} \to \mathbf{Gpd}$. \\
A prestack $\mathfrak{X}$ over $\mathbf{Mfd}^{op}$ is a \textbf{stack} if for any $N \in \mathbf{Mfd}^{op}$ and open cover $\{U_{i}\}$ of $N$, it \textbf{satisfies descent}, in other words: 
(1) Given objects $P_{i} \in \mathfrak{X}(U_{i})$ and isomorphisms $\varphi_{ij} : P_{i}\vert_{U_{i} \cap U_{j}} \to P_{j}\vert_{U_{i} \cap U_{j}}$ such that $\varphi_{jk} \circ \varphi_{ij} = \varphi_{ik}\vert_{U_{i} \cap U_{j} \cap U_{k}}$, there is an object $P \in \mathfrak{X}(N)$ and isomorphisms $\varphi_{i} : P\vert_{U_{i}} \to P_{i}$ such that $\varphi_{ij} = \varphi_{j} \circ \varphi_{i}^{-1}$. This is called \textbf{effective} descent data. \\
(2) Given $P, P' \in \mathfrak{X}(N)$ and isomorphisms $\varphi_{i} : P\vert_{U_{i}} \to P'\vert_{U_{i}}$ such that $\varphi_{i}\vert_{U_{i} \cap U_{j}} = \varphi_{j}\vert_{U_{i} \cap U_{j}}$, there is a unique map $\varphi : P \to P'$ such that $\varphi_{i} = \varphi\vert_{U_{i}}$.
\end{defn}

\begin{rmk}
As we have defined it, the above is actually a \textit{differentiable stack}; however, because this is the only kind of stack we need, we usually drop the adjective. 
\end{rmk}

\begin{ex}
A fundamental example of a stack over $\mathbf{Mfd}^{op}$ is an ordinary manifold. For such a manifold $M$, we can define the stack $\underline{M}$ as $\underline{M}(N) := \mathrm{Map}(N,M) = C^{\infty}(N, M)$ for $N \in \mathbf{Mfd}$. This stack is presented by the ``discrete groupoid" $M \times \{1\} \doublerightarrow{}{} M$, whose objects are the points of $M$ and the only morphisms are the identities for those points. This embeds $\mathbf{Mfd}$ into the category $\mathbf{Stk}$ of differentiable stacks. We also get the following essential lemma.
\end{ex}

\begin{thm}[Yoneda Lemma for Stacks: Lemma 1.3 in \cite{heinloth}]\label{stackyoneda}
	Let $\mathfrak{X}$ be a stack and let $M$ be a manifold. We have the following equivalence of categories:
	$$
	\mathfrak{X}(M) \cong \mathrm{Mor}_{\mathbf{Stk}}(\underline{M}, \mathfrak{X}). 
	$$
\end{thm}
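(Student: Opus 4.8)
The plan is to prove the Yoneda Lemma for stacks by constructing the equivalence functor explicitly and then verifying it is fully faithful and essentially surjective, exactly mimicking the classical Yoneda argument but carried out in the 2-categorical setting of stacks. First I would construct a functor $\Phi : \mathrm{Mor}_{\mathbf{Stk}}(\underline{M}, \mathfrak{X}) \to \mathfrak{X}(M)$ by evaluating on the ``universal point'': given a morphism of stacks $\eta : \underline{M} \to \mathfrak{X}$, note that $\mathrm{id}_M \in C^\infty(M,M) = \underline{M}(M)$, so $\eta_M(\mathrm{id}_M) \in \mathfrak{X}(M)$, and we set $\Phi(\eta) := \eta_M(\mathrm{id}_M)$. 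On morphisms (2-morphisms / natural transformations between the $\eta$'s) $\Phi$ acts by evaluating the transformation at $\mathrm{id}_M$; this is clearly functorial.

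Next I would construct the inverse $\Psi : \mathfrak{X}(M) \to \mathrm{Mor}_{\mathbf{Stk}}(\underline{M}, \mathfrak{X})$. Given $P \in \mathfrak{X}(M)$, define the morphism of stacks $\Psi(P) : \underline{M} \to \mathfrak{X}$ by declaring, for each test manifold $N$, the functor $\Psi(P)_N : \underline{M}(N) = C^\infty(N,M) \to \mathfrak{X}(N)$ to send $f \mapsto f^* P$, where $f^* = \mathfrak{X}(f)$ is the pullback functor built into the 2-functor $\mathfrak{X}$. Naturality in $N$ follows from the (pseudo)functoriality of $\mathfrak{X}$, i.e. the coherence isomorphisms $(g \circ f)^* \cong f^* \circ g^*$. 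On morphisms of $\mathfrak{X}(M)$, $\Psi$ sends an isomorphism $P \to P'$ to the induced modification $f^*P \to f^*P'$.

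Then I would check that $\Phi$ and $\Psi$ are quasi-inverse, i.e. that $\Phi \circ \Psi$ and $\Psi \circ \Phi$ are naturally isomorphic to the respective identity functors. The composite $\Phi(\Psi(P)) = \Psi(P)_M(\mathrm{id}_M) = \mathrm{id}_M^* P \cong P$, using the unit coherence isomorphism of the 2-functor $\mathfrak{X}$, and these isomorphisms are natural in $P$. For the other direction, given $\eta : \underline{M} \to \mathfrak{X}$ one must produce a natural isomorphism $\Psi(\eta_M(\mathrm{id}_M)) \Rightarrow \eta$; on a test manifold $N$ and a point $f \in C^\infty(N,M)$ this amounts to comparing $f^*(\eta_M(\mathrm{id}_M))$ with $\eta_N(f)$, and the naturality square for $\eta$ applied to the morphism $f : N \to M$ in $\mathbf{Mfd}$ (together with $f = f^*(\mathrm{id}_M)$ in $\underline{M}$) supplies precisely such an isomorphism; one then checks it is compatible with morphisms in $N$.

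The main obstacle — and the only place real care is needed — is the bookkeeping of coherence data: because $\mathfrak{X}$ is a (pseudo)2-functor rather than a strict one, every ``equality'' like $(g\circ f)^* = f^* g^*$ or $\mathrm{id}^* = \mathrm{id}$ is only a specified natural isomorphism, so all the diagrams above commute only up to coherent $2$-cells, and verifying that the constructed natural isomorphisms genuinely satisfy the pentagon/triangle-type compatibilities (so that $\Phi, \Psi$ really are functors and really are quasi-inverse as $1$-morphisms in the $2$-category of groupoids) is the substantive content. In practice this is routine ``Yoneda for (pre)stacks'' and the cleanest exposition is to either invoke the bicategorical Yoneda lemma directly or to reduce to it, observing that $\underline{M}$ represents the presheaf $N \mapsto \underline{M}(N)$ and that $\mathfrak{X}$, being a stack, in particular is a prestack for which the bicategorical Yoneda embedding applies; the descent (stack) condition itself plays no role in the statement and is only relevant for identifying which $2$-functors $\mathfrak{X}$ arise geometrically.
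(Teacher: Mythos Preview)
Your proposal is correct and is the standard bicategorical Yoneda argument. Note, however, that the paper does not actually supply a proof of this statement: it is simply stated with a citation to Lemma 1.3 of \cite{heinloth}, so there is no in-paper argument to compare against. Your observation that the descent condition plays no role (the result holds already for prestacks) is accurate and worth retaining.
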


Since stacks are designed to generalize the notion of an ordinary manifold, there should be an analogous notion of an atlas for stacks.\footnote{One could even define stacks by first constructing atlases, as is done for manifolds.} We may sometimes denote $\underline{M}$  simply as $M$, when it's implicit in context.

\begin{defn}
	An \textbf{atlas} (or covering) for a stack $\mathfrak{X}$ is a manifold $X$ and map $p : X \to \mathfrak{X}$ such that (1) for any manifold $Y$ and $Y \to \mathfrak{X}$, the stack $X \times_{\mathfrak{X}} Y$ is a manifold, and (2) $p$ is a submersion, i.e. for all $Y \to \mathfrak{X}$, the projection $Y \times_{\mathfrak{X}} X \to Y$ is a submersion.
\end{defn}

We now specify the most important example of a stack for this paper.

\begin{defn}\label{quotientstackdef}
Given a smooth $G$-manifold $M$, the associated \textbf{quotient stack} is the functor $[M/G] : \mathbf{Mfd}^{op} \to \mathbf{Gpd}$ such that the objects of $[M/G](N)$ are pairs $(P \xrightarrow{\pi} N, P \xrightarrow{\alpha} M)$, $\pi$ being a principal $G$-bundle over $N$ and $\alpha$ being a $G$-equivariant map, and the arrows are isomorphisms of principal $G$-bundles over $N$ commuting with the $G$-equivariant maps to $M$.
\end{defn}

Note that $[M/G]$ evaluated on a point recovers $M//G$, so that $[M/G]$ rightly gives a natural generalization of $M//G$. An atlas for $[M/G]$ is $M \to [M/G]$. Much like how we use atlases to define principal and vector bundles over an ordinary manifold, we use atlases to define such bundles over stacks, as follows.

\begin{defn}
	A principal $G$-bundle $\mathscr{P} \to \mathfrak{X}$ is given by a $G$-bundle $\mathscr{P}_{X}$ over an atlas $X \to \mathscr{P}$ with an isomorphism of the two pullbacks $p_{1}^{*}\mathscr{P}_{X} \xrightarrow{\simeq} p_{2}^{*}\mathscr{P}_{X}$ from $X \times_{\mathfrak{X}} X \to X$ satisfying the cocycle condition on $X \times_{\mathfrak{X}} X \times_{\mathfrak{X}} X$. 
\end{defn}

\begin{rmk}
	The definition of a vector bundle over a stack $\mathfrak{X}$ is completely analogous to this. Of course, one could instead invoke that a vector bundle $V \to \mathfrak{X}$ of rank $n$ is equivalent to a principal $GL(n, \mathbf{R})$-bundle and then use the preceding definition, anyway.
\end{rmk}

\begin{ex}\label{BG}
An essential example derived from the above definitions is that of $[\mathrm{pt}/ G]$, for $\mathrm{pt}$ a point. Applying the definition shows that $[\mathrm{pt}/ G](X)$ is precisely $\textrm{Bun}_{G}(X)$, the category of principal $G$-bundles over $X$ (any morphism of $G$-bundles over the same base space is necessarily an isomorphism). Because of this, it is common to identify $[\mathrm{pt}/ G]$ with $BG$, since $[X, BG]$ is equivalent to $\textrm{Bun}_{G}(X)$ modulo bundle ismomorphisms. 

Notice moreover that defining a vector bundle $V \to [\mathrm{pt}/G]$ amounts to fixing the vector space $V$ (a vector bundle over the point) \textit{as well as} a representation $\rho : G \to \mathrm{End}(V)$. In other words, we have an equivalence of categories: 
$$
\mathrm{VectBun}([\mathrm{pt}/G]) \cong \mathrm{Rep}(G).
$$
\end{ex}

\begin{rmk}
The preceding example is a simple but beautiful illustration of how specifying a vector bundle $V$ over a quotient stack $[M/G]$ is equivalent to specifying a $G$-equivariant vector bundle over a $G$-manifold $M$. We thus get the following fact.

\begin{thm}[Adapted from Example 4.5 in \cite{heinloth}]\label{vectbunequiv}
For $M$ a smooth $G$-space, we have the following equivalence of categories:
$$
\mathrm{VectBun}([M/G]) \cong \mathrm{VectBun}_{G}(M). 
$$	
\end{thm}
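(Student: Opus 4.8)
The plan is to establish the equivalence $\mathrm{VectBun}([M/G]) \cong \mathrm{VectBun}_{G}(M)$ by unwinding the definition of a vector bundle over a stack in terms of the atlas $p : M \to [M/G]$, and then identifying the resulting descent data with the structure of a $G$-equivariant vector bundle. First I would recall from Definition \ref{quotientstackdef} that $M \to [M/G]$ is an atlas, and compute the fiber product $M \times_{[M/G]} M$: an object of $(M \times_{[M/G]} M)(N)$ consists of two maps $N \to M$ together with an isomorphism of the associated trivial principal $G$-bundles compatible with the equivariant maps to $M$, which unwinds to the data of a map $N \to M \times G$ — in other words, $M \times_{[M/G]} M \cong M \times G$, with the two projections $p_1, p_2 : M \times G \to M$ being exactly the source and target maps $(p,g) \mapsto p$ and $(p,g) \mapsto g \cdot p$ of the action groupoid from Example \ref{actiongpd}. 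Similarly, $M \times_{[M/G]} M \times_{[M/G]} M \cong M \times G \times G$. Thus a vector bundle over $[M/G]$, which by definition is a vector bundle $V_M$ on the atlas $M$ together with an isomorphism $\theta : p_1^* V_M \xrightarrow{\simeq} p_2^* V_M$ on $M \times G$ satisfying the cocycle condition on $M \times G \times G$, is precisely the same data as a $G$-equivariant vector bundle on $M$.

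The core of the argument is then to verify this last identification carefully in both directions. Given a $G$-equivariant vector bundle $V \to M$ in the sense of Definition \ref{equivvect}, the action maps $g : V_p \to V_{g \cdot p}$ assemble into a bundle map over $M \times G$ which is exactly an isomorphism $p_1^* V \xrightarrow{\simeq} p_2^* V$; the associativity of the $G$-action on $V$ is precisely the cocycle condition on $M \times G \times G$, and the fact that the identity of $G$ acts trivially corresponds to the normalization of the descent datum. Conversely, given $(V_M, \theta)$ as descent data for a vector bundle over $[M/G]$, the isomorphism $\theta$ restricted to $M \times \{g\}$ gives for each $p$ a linear map $V_{M,p} \to V_{M, g \cdot p}$, and the cocycle condition guarantees these compose correctly, so that they define a $G$-action on the total space of $V_M$ lying over the given $G$-action on $M$, i.e. a $G$-equivariant vector bundle. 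One then checks that morphisms correspond on both sides — a morphism of vector bundles over $[M/G]$ is a morphism of bundles on the atlas commuting with the descent isomorphisms, which translates exactly into a $G$-equivariant bundle map — so the construction is an equivalence (indeed essentially an isomorphism) of categories. Since the paper cites this as adapted from Example 4.5 in \cite{heinloth}, I would keep the verification at the level of identifying the two packages of data rather than belaboring diagram chases.

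The main obstacle, such as it is, is bookkeeping rather than conceptual: one must correctly identify the fiber product $M \times_{[M/G]} M$ with $M \times G$ and, crucially, match up the two projection maps with the source and target of the action groupoid with the right variance, since getting $p_1$ versus $p_2$ backwards flips the direction of the $G$-action. A secondary subtlety is that these are Fr\'{e}chet manifolds in our intended application ($M = \mathscr{M}$, $G = \mathscr{D}$), so one should note that the fiber product computations and the atlas property make sense internally to $\mathrm{FrLieGpd}$; but since the statement as given is for an ordinary smooth $G$-space $M$, this point only matters when the theorem is applied later, and the argument is formally identical. I would also remark that the $[\mathrm{pt}/G]$ case of Example \ref{BG}, where the equivalence reduces to $\mathrm{VectBun}([\mathrm{pt}/G]) \cong \mathrm{Rep}(G)$, is literally the special case $M = \mathrm{pt}$ of this theorem, which serves as a useful sanity check on the direction of the action.
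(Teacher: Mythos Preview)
Your proposal is correct and is the standard descent argument for this result. Note, however, that the paper does not actually supply a proof of this theorem: it is stated as an adaptation of Example 4.5 in \cite{heinloth} and immediately followed by the remark that the statement there is for cartesian sheaves on $[M/G]$, with vector bundles recovered by passing to their sheaves of sections. So in a sense you have written out more than the paper does; your argument via the identification $M \times_{[M/G]} M \cong M \times G$ and the matching of descent data with equivariant structure is precisely the content one would find upon unpacking Heinloth's example, and it is correct.
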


This is stated in \cite{heinloth} for cartesian sheaves on $[M/G]$ and we are representing a vector bundle by its space of sections to deduce the above, so the statement in \cite{heinloth} holds for a larger class of objects. Additionally, although it is outside the scope of this paper, it is worth mentioning that a quotient stack $[M/G]$ contains ``all possible ways" in which it could have been defined starting with an action groupoid: more than one groupoid could \textit{present} a stack (this is described and expounded on in \cite{carchedi}), so it is reassuring that the stack itself contains this data. 
\end{rmk}

A deeper level of care must be taken for our motivating example $[\textrm{Met}(X)/\textrm{Diff}(X)] = [\mathscr{M/D}]$, which is presented by the Fr\'{e}chet Lie groupoid $\mathscr{M//D}$. Smooth maps from an ordinary finite dimensional manifold to a Fr\'{e}chet manifold are well-defined, and so the associated maps from one to the other when viewed as their respective discrete groupoids are also well-defined. This allows us to formulate the following definition: 

\begin{defn}\label{metdiffmodstack}
 For a compact manifold $X$, let $[\textrm{Met}(X)/\textrm{Diff}(X)] = [\mathscr{M/D}] : \mathbf{Mfd}^{op} \to \mathbf{Gpd}$ be the functor such that the objects of $[\mathscr{M/D}](N)$ are pairs $(P \xrightarrow{\pi} N, P \xrightarrow{\alpha} \mathscr{M})$, $\pi$ being a principal $\mathscr{D}$-bundle over $N$ and $\alpha$ being a $\mathscr{D}$-equivariant map, and the arrows are isomorphisms of principal $\mathscr{D}$-bundles over $N$ commuting with the $\mathscr{D}$-equivariant maps to $\mathscr{M}$: $[\mathscr{M/D}]$ is the \textbf{moduli stack of metrics modulo diffeomorphism}.
 \end{defn}

 \begin{lem}\label{M//Dpresentsmodstack}
	For a compact manifold $X$, the Fr\'{e}chet Lie groupoid $\mathscr{M//D}$ presents the Fr\'{e}chet moduli stack $[\mathscr{M/D}]$. 
\end{lem}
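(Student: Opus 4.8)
The plan is to establish the lemma by verifying, essentially unwinding definitions, that the functor $N \mapsto [\mathscr{M/D}](N)$ of Definition \ref{metdiffmodstack} is canonically equivalent to the functor sending $N$ to the groupoid of maps from the discrete groupoid $\underline{N}$ into the Fr\'{e}chet Lie groupoid $\mathscr{M//D}$, suitably stackified; in other words, to recall the standard construction by which a Lie groupoid presents a differentiable stack and to check that each ingredient of that construction makes sense when ``manifold'' is replaced by ``Fr\'{e}chet manifold.'' The key point is that the only exotic feature here is infinite-dimensionality of $\mathscr{M}$ and $\mathscr{D}$; the test objects $N$ remain ordinary finite-dimensional manifolds, so all the maps $N \to \mathscr{M}$, $N \to \mathscr{D}$ that appear are honest smooth maps into Fr\'{e}chet manifolds, which (as remarked just before Definition \ref{metdiffmodstack}) are well-behaved.

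First I would recall the general fact: for a Lie groupoid $\mathcal{G} = \mathcal{G}_1 \doublerightarrow{s}{t} \mathcal{G}_0$, the assignment $N \mapsto \mathcal{G}\text{-}\mathrm{Tors}(N)$ — the groupoid of principal $\mathcal{G}_0$-bundles $P \to N$ equipped with a $\mathcal{G}$-action, equivalently $\mathcal{G}$-torsors — defines a stack, denoted $B\mathcal{G}$, and $\mathcal{G}_0 \to B\mathcal{G}$ is an atlas with $\mathcal{G}_0 \times_{B\mathcal{G}} \mathcal{G}_0 \cong \mathcal{G}_1$. Applied to the action groupoid $M//G$, this recovers exactly $[M/G]$ as in Definition \ref{quotientstackdef}: a $(M//G)$-torsor over $N$ is precisely a principal $G$-bundle $P \to N$ together with a $G$-equivariant map $P \to M$. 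So I would apply this verbatim to $\mathscr{M//D}$: a $(\mathscr{M//D})$-torsor over $N$ is a principal $\mathscr{D}$-bundle $P \to N$ with a $\mathscr{D}$-equivariant $\alpha: P \to \mathscr{M}$, which is exactly an object of $[\mathscr{M/D}](N)$, and arrows match by definition. The remaining content is that $[\mathscr{M/D}]$ so described is genuinely a stack (satisfies descent) and that $\mathscr{M} \to [\mathscr{M/D}]$ is an atlas — i.e. that $\mathscr{M} \times_{[\mathscr{M/D}]} \mathscr{M} \cong \mathscr{M} \times \mathscr{D}$ is a Fr\'{e}chet manifold and the projections are submersions of Fr\'{e}chet manifolds. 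Descent for principal bundles is local and formal, inherited from descent for $\mathscr{D}$-bundles and for maps to $\mathscr{M}$, neither of which is sensitive to infinite-dimensionality of the fibers since the base $N$ is finite-dimensional; and the atlas condition reduces to the fact that $s, t: \mathscr{M} \times \mathscr{D} \to \mathscr{M}$ are surjective submersions of Fr\'{e}chet manifolds, which is part of the content of Example \ref{metdiffgroupoid} that $\mathscr{M//D}$ is a Fr\'{e}chet Lie groupoid.

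The main obstacle — really the only place where one must be careful rather than merely bureaucratic — is the functional-analytic point that the category of Fr\'{e}chet manifolds does not behave as nicely as finite-dimensional manifolds: not every fiber product exists, the inverse function theorem can fail, and ``submersion'' must be interpreted with care (e.g. admitting local sections, or being locally a projection, rather than merely surjective on tangent spaces). I would therefore want to confirm that the specific fiber products entering the definition of a stack and its atlas — namely $N \times_{[\mathscr{M/D}]} \mathscr{M}$ for $N \to [\mathscr{M/D}]$ an arbitrary map from a finite-dimensional $N$, and the iterated products $\mathscr{M} \times_{[\mathscr{M/D}]} \cdots \times_{[\mathscr{M/D}]} \mathscr{M}$ — are representable: the first is a principal $\mathscr{D}$-bundle over the finite-dimensional $N$, hence a Fr\'{e}chet manifold with finite-dimensional structure group's worth of subtlety absorbed into the (still infinite-dimensional but product-type) total space, and the iterated ones are just $\mathscr{M} \times \mathscr{D} \times \cdots \times \mathscr{D}$, manifestly Fr\'{e}chet since finite products of Fr\'{e}chet manifolds are Fr\'{e}chet (as used in the proof of Proposition \ref{fieldsasfrechetbundle}). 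With those representability facts in hand, the verification that $\mathscr{M} \to [\mathscr{M/D}]$ satisfies the atlas axioms, and that $[\mathscr{M/D}]$ is presented by $\mathscr{M//D}$ in the sense that $[\mathscr{M/D}] \cong B(\mathscr{M//D})$, is then a formal consequence of the general correspondence between Lie groupoids and differentiable stacks, applied in the Fr\'{e}chet setting. I would close by remarking that this is the Fr\'{e}chet analogue of the finite-dimensional statement that $M//G$ presents $[M/G]$, with the caveat above being the only substantive difference.
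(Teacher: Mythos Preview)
Your proposal is correct and follows the same approach as the paper, namely invoking the standard correspondence between (Fr\'{e}chet) Lie groupoids and the differentiable stacks they present, specialized to the action groupoid $\mathscr{M//D}$. In fact, the paper does not give a detailed proof of this lemma at all: it simply remarks that $[\mathscr{M/D}]$ is obtained via the canonical functor $\mathbf{FrGpd} \to \mathbf{Stk}$ and cites \cite{robertsvozzo} (Sections 2 and 5) for the relevant details in the Fr\'{e}chet setting, so your unpacking of the torsor description, the atlas condition, and the functional-analytic caveats actually goes further than the paper itself.
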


\begin{rmk}
Definition \ref{metdiffmodstack} is stated in the ``ordinary" sense, so that we don't specify the Fr\'{e}chet nature of the manifolds. Then, Lemma \ref{M//Dpresentsmodstack} implies that $[\mathscr{M/D}]$ is represented via the canonical functor $\mathbf{FrGpd} \to \mathbf{Stk}$ sending a Fr\'{e}chet Lie groupoid to its associated differentiable stack. The very detailed paper \cite{robertsvozzo}--in particular Sections 2 and 5--provides additional details and examples for these definitions, and is what we primarily relied on above.
\end{rmk}

In light of Definition \ref{gencov} and Theorem \ref{vectbunequiv} as well as the preceding definition, we have:

\begin{prop}\label{GCforstacks}
	Any generally covariant family $\pi: (\mathscr{F}, \{S,-\}) \to \mathscr{M}$ of BV field theories descends to a Fr\'{e}chet dg vector bundle or $L_{\infty}$ bundle of stacks:
	\begin{equation}
		\pi: ([\mathscr{F/D}], \{S,-\}) \to [\mathscr{M/D}].
	\end{equation}
\end{prop}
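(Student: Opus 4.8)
The plan is to assemble Proposition \ref{GCforstacks} purely from pieces already in place: general covariance in the sense of Definition \ref{gencov} packages the family $\pi : (\mathscr{F},\{S,-\})\to\mathscr{M}$ as a $\mathscr{D}$-equivariant dg vector bundle (or bundle of $L_\infty$ algebras, in the interacting case) over the $\mathscr{D}$-manifold $\mathscr{M}$, and Theorem \ref{vectbunequiv} identifies the category of such equivariant bundles with the category of bundles over the quotient stack $[\mathscr{M}/\mathscr{D}]$. So the proof is essentially: apply the equivalence $\mathrm{VectBun}_{\mathscr{D}}(\mathscr{M})\cong\mathrm{VectBun}([\mathscr{M}/\mathscr{D}])$ to the underlying graded bundle, then check that the extra structure (the differential $Q$ resp. the brackets $\ell_n$, equivalently the operator $\{S,-\}$) descends under this equivalence because it was assumed equivariant.

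Concretely I would proceed in the following steps. First, recall that by Proposition \ref{fieldsasfrechetbundle} the family is a dg Fr\'echet vector bundle (or $L_\infty$ bundle) $(\mathscr{F},\{S,-\})\to\mathscr{M}$, and by Definition \ref{gencov} the $\mathscr{D}$-action makes it $\mathscr{D}$-equivariant in the sense of Definition \ref{dgequiv} — i.e. for each $f\in\mathscr{D}$ the map $f^*$ on fibers is a cochain map (resp.\ $L_\infty$ map), which is exactly the content of Lemmas \ref{scalequiv} and \ref{polynomialinteraction} in the scalar case. Second, forget the differential and apply Theorem \ref{vectbunequiv} to the underlying graded $\mathscr{D}$-equivariant bundle $\mathscr{F}\to\mathscr{M}$ to obtain a graded vector bundle $[\mathscr{F}/\mathscr{D}]\to[\mathscr{M}/\mathscr{D}]$ over the stack, presented via the atlas $\mathscr{M}\to[\mathscr{M}/\mathscr{D}]$ together with the descent (cocycle) isomorphism on $\mathscr{M}\times_{[\mathscr{M}/\mathscr{D}]}\mathscr{M}\cong\mathscr{M}\times\mathscr{D}$ coming from the $\mathscr{D}$-action. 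Third, observe that under this presentation the fiberwise differential $Q$ (resp.\ the brackets $\ell_n$) pulls back to a map on the atlas, and the equivariance square of Definition \ref{dgequiv} says precisely that the two pullbacks of $Q$ to $\mathscr{M}\times_{[\mathscr{M}/\mathscr{D}]}\mathscr{M}$ agree with the descent isomorphism — i.e.\ $Q$ satisfies the descent condition and hence defines a differential on $[\mathscr{F}/\mathscr{D}]$ over $[\mathscr{M}/\mathscr{D}]$. The same argument with the polydifferential operators $\ell_n$ gives the $L_\infty$ case, since each $\ell_n$ is $\mathscr{D}$-equivariant by the analogue of Lemma \ref{polynomialinteraction}; reassembling $\{S,-\}$ from $Q$ and the $\ell_n$ then exhibits the descended operator. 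Fourth, note that all of this takes place in the Fr\'echet-manifold setting, which is legitimate because $\mathscr{M//D}$ is a Fr\'echet Lie groupoid presenting $[\mathscr{M/D}]$ by Lemma \ref{M//Dpresentsmodstack}, and smooth maps from finite-dimensional manifolds into Fr\'echet manifolds behave well enough that Theorem \ref{vectbunequiv} and the descent bookkeeping go through verbatim.

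The only genuine subtlety — and the step I expect to be the main obstacle to state cleanly rather than to prove — is the Fr\'echet functional-analytic one: Theorem \ref{vectbunequiv} as quoted from \cite{heinloth} is a statement about finite-dimensional differentiable stacks, whereas $[\mathscr{M}/\mathscr{D}]$ is built from a Fr\'echet Lie groupoid, and the fibers $\mathscr{F}_g$ are themselves infinite-dimensional (dg) Fr\'echet spaces rather than finite-rank vector spaces. One must therefore either invoke the infinite-dimensional/Fr\'echet version of the equivalence between equivariant bundles and bundles over the quotient stack (as developed in \cite{robertsvozzo}) or argue directly with the groupoid presentation: a bundle over $[\mathscr{M/D}]$ is by definition the data of a bundle over the atlas $\mathscr{M}$ plus a descent isomorphism over $\mathscr{M}\times\mathscr{D}$ satisfying the cocycle condition over $\mathscr{M}\times\mathscr{D}\times\mathscr{D}$, and one checks the $\mathscr{D}$-equivariance data of Definition \ref{dgequiv} assembles into exactly such a cocycle. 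I would spell this last reformulation out as the substance of the proof, since it sidesteps any need for a general Fr\'echet-stack representability theorem and reduces the claim to the bookkeeping already implicit in Example \ref{metdiffgroupoid} and Lemma \ref{M//Dpresentsmodstack}. The contractibility of $\mathscr{M}$ (used in Proposition \ref{fieldsasfrechetbundle}) further trivializes the underlying graded bundle, so the entire nontriviality — and the entire content of the proposition — is carried by the descended differential/$L_\infty$ structure, which is exactly what general covariance supplies.
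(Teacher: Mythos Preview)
Your proposal is correct and follows the same approach as the paper: the paper simply states the proposition as an immediate consequence of Definition \ref{gencov} and Theorem \ref{vectbunequiv} (together with the Fr\'echet Lie groupoid presentation of Definition \ref{metdiffmodstack} and Lemma \ref{M//Dpresentsmodstack}), without writing out a separate proof. Your write-up is considerably more detailed---in particular your explicit unpacking of the descent data over $\mathscr{M}\times_{[\mathscr{M/D}]}\mathscr{M}\cong\mathscr{M}\times\mathscr{D}$ and your flagging of the Fr\'echet/infinite-dimensional caveat---but the underlying logic is identical.
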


\begin{rmk}
Conversely, any such bundle defines a generally covariant theory: in this sense, Proposition \ref{GCforstacks} can be taken as the definition of a generally covariant theory. Moreover, we mindfully dropped the notation of a fixed smooth manifold $X$ in the statement of this proposition: in the long run, we would like to better understand what kind of functor $([\mathscr{F/D}], \{S,-\}) \to [\mathscr{M/D}]$ constitutes from $\mathbf{Riem}_{n}$ to the category of $L_{\infty}$ bundles over stacks. More will be said on this in the following example. 
\end{rmk}

\begin{ex}[Perturbative Yang-Mills Theory]\label{perturbativeYM}
	The advantages of the stacky formulation of general covariance may be more convincing when considering theories which have more data involved; e.g. those with local symmetries. As an example, let us consider Yang-Mills theory: to begin, let $(X,g)$ be an oriented, $n$-dimensional Riemannian manifold, and let $G$ be a compact Lie group whose Lie algebra $\mathfrak{g}$ has a nondegenerate invariant pairing, $\langle -,- \rangle_{\mathfrak{g}}$. To minimize any topological complications, fix a \textit{trivial} principal $G$-bundle $P \to X$. 

In this instance, the fields for Yang-Mills theory are the connection one-forms $A \in \Omega^{1}(X, \mathfrak{g}) = \Omega^{1}(X) \otimes \mathfrak{g}$ associated to $P$, which constitute an affine Fr\'{e}chet space. To such a field, we can associate its curvature form $F_{A} := dA + \frac{1}{2}[A,A] \in \Omega^{2}(X,\mathfrak{g})$. Letting $\langle -,- \rangle$ denote the pairing on $\Omega^{\bullet}(X, \mathfrak{g})$ defined by
\begin{equation} \label{YMpairing}
\langle \omega_{1} \otimes E_{1}, \omega_{2} \otimes E_{2}\rangle := \int_{X}\omega_{1} \wedge \omega_{2} \langle E_{1},E_{2} \rangle_{\mathfrak{g}},	
\end{equation}
the Yang-Mills action functional can be written as 
\begin{equation}
	S_{YM}(A) = \frac{1}{2} \langle F_{A}, \star F_{A} \rangle, \footnote{In the case that the pairing on $\mathfrak{g}$ is the Killing form--as is the case for many physically relevant gauge theories--the action is usually denoted
	$$
	\frac{1}{2}\int_{X}\Tr (F_{A} \wedge \star F_{A}).
	$$}
\end{equation}
where $\star$ denotes the Hodge star operator. The Euler-Lagrange equations for this action are
\begin{equation}\label{YMequations}
	d_{A} \star F_{A} = 0,
\end{equation}
where $d_{A} = d + A$ is the exterior covariant derivative associated to $A$. Alternatively, this can be written as $(d_{A} \star d_{A}) A = 0$. 

To move toward the derived-geometric set up in the BV formalism, we must also consider that there is an action of the \textit{gauge group} $C^{\infty}(X,G)$ on the fields $\Omega^{1}(X,\mathfrak{g})$ defined such that for $g \in C^{\infty}(X,G)$, $g \cdot A$ is $A^{g} := g^{-1}Ag + g^{-1}dg$. $S_{YM}(A)$ is invariant under this action, and so the Yang-Mills equations are covariant with respect to it. Moreover, the infinitesimal gauge action is: for $\alpha \in C^{\infty}(X) \otimes \mathfrak{g} = \Omega^{0}(X,\mathfrak{g})$, $A \mapsto d_{A}\alpha = d\alpha + [A, \alpha] \in T_{A}\Omega^{1}(X, \mathfrak{g}) \cong \Omega^{1}(X, \mathfrak{g})$, the tangent space to the space of connection one-forms at $A$. This action suggests that we consider the tangent complex\footnote{This will be defined precisely later, in Definition \ref{tancpxprop}.} to $A$ as a point in the \textit{stack} of connections modulo gauge:
\begin{equation}
	\mathbf{T}_{A}[\Omega^{1}(X,\mathfrak{g})/\Omega^{0}(X,\mathfrak{g})] \cong \Omega^{0}(X,\mathfrak{g})[1] \xrightarrow{d_{A}} \Omega^{1}(X,\mathfrak{g}).
\end{equation}
We can begin to define a BV theory for perturbative Yang-Mills about a fixed solution $A$ by computing the $-1$-shifted cotangent bundle of the above:
\begin{equation}
	\Omega^{0}(X,\mathfrak{g})[1] \xrightarrow{d_{A}} \Omega^{1}(X,\mathfrak{g}) \xrightarrow{d_{A} \star_{g} d_{A}} \Omega^{n-1}(X, \mathfrak{g})[-1] \xrightarrow{d_{A}} \Omega^{n}(X, \mathfrak{g})[-2] =: \mathscr{E}_{(g,A)}.
\end{equation}
The shifted symplectic pairing comes from (\ref{YMpairing}) and the differential between $\Omega^{1}(X, \mathfrak{g})$ and $\Omega^{n-1}(X, \mathfrak{g})$ comes from the equations of motion (\ref{YMequations}). Also, we are being pedantic in that we are labeling the Hodge star with the metric used to define it.

\begin{rmk}
There is a dependence on the metric in the middle differential (the Yang-Mills term), but we could in principle compute whether or not the entire complex is diffeomorphism equivariant: this amounts to checking whether or not the infinitesimal gauge invariance--described by the differential $d_{A}$ between $\Omega^{0}(X,\mathfrak{g})[1]$ and $\Omega^{1}(X,\mathfrak{g})$ and also between $\Omega^{n-1}(X,\mathfrak{g})[-1]$ and $\Omega^{n}(X,\mathfrak{g})[-2]$--is also diffeomorphism equivariant. Put plainly, showing the diffeomorphism equivariance of $\mathscr{E}_{(g,A)}$ proves that perturbative Yang-Mills theory is generally covariant as a theory which \textit{also} depends on connections modulo (infinitesimal) gauge.
\end{rmk}

To show that $\mathscr{E}_{(g,A)}$ is $\mathscr{D}$-equivariant, we must show that the diagram 
\[\begin{tikzcd}\label{YMdiagram}
	\Omega^{0}(X,\mathfrak{g})[1] && \Omega^{1}(X, \mathfrak{g}) && \Omega^{n-1}(X, \mathfrak{g})[-1] && \Omega^{n}(X, \mathfrak{g})[-2] \\
	\\
\Omega^{0}(X,\mathfrak{g})[1] && \Omega^{1}(X, \mathfrak{g})  && \Omega^{n-1}(X, \mathfrak{g})[-1] && \Omega^{n}(X, \mathfrak{g})[-2]
	\arrow["{d_{A}}", from=1-1, to=1-3]
	\arrow["{d_{A}\star_{g}d_{A}}", from=1-3, to=1-5]
	\arrow["{d_{A}}", from=1-5, to=1-7]
	\arrow["{f^{*}}", from=1-1, to=3-1]
	\arrow["{f^{*}}", from=1-3, to=3-3]
	\arrow["{f^{*}}", from=1-5, to=3-5]
	\arrow["{f^{*}}", from=1-7, to=3-7]
	\arrow["{d_{f^{*}A}}", from=3-1, to=3-3]
	\arrow["{d_{f^{*}A}\star_{f^{*}g}d_{f^{*}A}}", from=3-3, to=3-5]
	\arrow["{d_{f^{*}A}}", from=3-5, to=3-7]
\end{tikzcd}\]
commutes, for any diffeomorphism $f \in \mathscr{D}$. Notice that in the lower complex, the Hodge star is defined by the metric $f^{*}g$ and the fixed connection form is $f^{*}A$.  To begin, let $\alpha \in \Omega^{0}(X,\mathfrak{g})[1]$. We get
\begin{equation}
f^{*}(d_{A}\alpha) = f^{*}(d\alpha + A \wedge \alpha) = d(f^{*}\alpha) + f^{*}A \wedge f^{*}\alpha,
\end{equation}
because the exterior derivative $d$ is manifestly covariant and pullbacks commute with wedge products, even if the forms have $\mathfrak{g}$ coefficients: this equation is then equal to $(d+f^{*}A)(f^{*}\alpha) = d_{f^{*}A}(f^{*}\alpha)$, which proves that the first square commutes. Moreover, this same computation shows that the last square commutes, too. 

Next, let $\omega \in \Omega^{1}(X, \mathfrak{g})$. Then:
\begin{align}
	(d_{A}\star_{g}d_{A})\omega &= (d_{A}\star_{g}(d\omega + A \wedge \omega)) \\
	&= (d+A)(\star_{g}d\omega + \star_{g}(A\wedge\omega)) \\
	&= (d\star_{g}d\omega + d\star_{g}(A\wedge\omega) + A \wedge \star_{g}d\omega + A \wedge \star_{g}(A\wedge\omega)).
\end{align}
Before we consider the diffeomorphism action, notice that the $L_{\infty}$ structure can be read off from the last expression. Pulling the above back along $f \in \mathscr{D}$ results in 
\begin{equation}
	f^{*}((d_{A}\star_{g}d_{A})\omega) = f^{*}(d\star_{g}d\omega) + f^{*}(d\star_{g}(A\wedge\omega)) + f^{*}(A \wedge \star_{g}d\omega) + f^{*}(A \wedge \star_{g}(A\wedge\omega)),
\end{equation}
which, when considering that the pullback commutes with the wedge product and the manifest covariance of the Hodge star, is equal to 
\begin{equation}
d\star_{f^{*}g}d(f^{*}\omega) + d\star_{f^{*}g}(f^{*}A\wedge f^{*}\omega) + f^{*}A \wedge \star_{f^{*}g}d(f^{*}\omega) + f^{*}A \wedge \star_{f^{*}g}(f^{*}A\wedge f^{*}\omega).
\end{equation}
From here, we see that this is equal to $(d+f^{*}A)\star_{f^{*}g}(d+f^{*}A)(f^{*}\omega)$, which is what we wanted. We have therefore shown the following:

\begin{thm}\label{YMequivariance}
	The bundle of $L_{\infty}$ algebras $\mathscr{E}(X) \to \mathrm{Met}(X) \times \Omega^{1}(X, \mathfrak{g})$ with fibers 
	$$
	\mathscr{E}_{(g,A)}(X) = \Omega^{0}(X,\mathfrak{g})[1] \xrightarrow{d_{A}} \Omega^{1}(X,\mathfrak{g}) \xrightarrow{d_{A} \star_{g} d_{A}} \Omega^{n-1}(X, \mathfrak{g})[-1] \xrightarrow{d_{A}} \Omega^{n}(X, \mathfrak{g})[-2],
	$$
	is  $\mathrm{Diff}(X)$-equivariant. 
	In other words, perturbative Yang-Mills theory is generally covariant, and the above bundle descends to a bundle of stacks:
	\begin{equation}\label{YMbundle}
	\mathscr{E}(X) \to [(\mathrm{Met}(X) \times \Omega^{1}(X,\mathfrak{g})) / \mathrm{Diff}(X)].
		\end{equation}
\end{thm}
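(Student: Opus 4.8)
The plan is to verify the two conditions defining $\mathscr{D}$-equivariance for a bundle of $L_{\infty}$ algebras, i.e. the $L_{\infty}$ analogue of Definition \ref{dgequiv}: first that the underlying $\mathbf{Z}$-graded vector bundle $\mathscr{E}(X) \to \mathrm{Met}(X) \times \Omega^{1}(X,\mathfrak{g})$ is $\mathscr{D}$-equivariant in the ordinary sense of Definition \ref{equivvect}, and second that for each $f \in \mathscr{D}$ pullback by $f$ defines an $L_{\infty}$ map between the fiber over $(g,A)$ and the fiber over $(f^{*}g, f^{*}A)$. For the first condition, $\mathscr{D} = \mathrm{Diff}(X)$ acts on the base by $f \cdot (g,A) = (f^{*}g, f^{*}A)$ and on each homogeneous piece $\Omega^{k}(X,\mathfrak{g})$ of the total space by $f \cdot \omega = f^{*}\omega$; since pullback of $\mathfrak{g}$-valued forms is $\mathbf{R}$-linear and functorial in $f$, each $\Omega^{k}(X,\mathfrak{g}) \to \mathrm{Met}(X)\times\Omega^{1}(X,\mathfrak{g})$ is a $\mathscr{D}$-equivariant vector bundle, and — just as in Proposition \ref{fieldsasfrechetbundle} — these are Fr\'{e}chet vector bundles over a Fr\'{e}chet manifold while $\mathscr{D}$ is a Fr\'{e}chet Lie group, so the whole discussion sits in the right category.

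For the second condition, the content is the diagram chase already displayed, which I would organize around the three structural facts that make pullback compatible with every operation entering the fiberwise $L_{\infty}$ structure: (i) the exterior derivative is natural, $f^{*}d = d f^{*}$; (ii) pullback is a morphism of graded algebras, $f^{*}(\alpha \wedge \beta) = f^{*}\alpha \wedge f^{*}\beta$, and it preserves the Lie bracket on the $\mathfrak{g}$-coefficients since it acts only on the form factor; and (iii) the Hodge star is natural with respect to the metric, $f^{*}(\star_{g}\omega) = \star_{f^{*}g}(f^{*}\omega)$. From (i) and (ii) one gets $f^{*}(d_{A}\alpha) = d_{f^{*}A}(f^{*}\alpha)$, which makes the two outer squares commute; combining all three yields $f^{*}\big((d_{A}\star_{g}d_{A})\omega\big) = (d_{f^{*}A}\star_{f^{*}g}d_{f^{*}A})(f^{*}\omega)$, which makes the middle square commute. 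Since the metric enters the fiber only through $\star_{g}$, this exhibits $f^{*}$ as a cochain map between fibers.

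It then remains to promote this cochain map to a strict $L_{\infty}$ map, i.e. to check that $f^{*}$ intertwines not only the linear differentials $\ell_{1}$ but all of the higher brackets $\ell_{n}$ of $\mathscr{E}_{(g,A)}$. Here I would use, as in the computation above, that the $\ell_{n}$ are read off from the expansion $d_{A}\star_{g}d_{A} = d\star_{g}d + d\star_{g}(A\wedge -) + A\wedge\star_{g}d(-) + A\wedge\star_{g}(A\wedge -)$, together with $d_{A} = d + A\wedge -$ at the two ends: every summand is a polydifferential operator assembled from $d$, $\wedge$, the $\mathfrak{g}$-bracket, and $\star_{g}$, so facts (i)--(iii) apply term by term and give $f^{*}\ell_{n}^{(g,A)} = \ell_{n}^{(f^{*}g,f^{*}A)}\circ(f^{*})^{\otimes n}$ for every $n$. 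This establishes that $\mathscr{E}(X)$ is $\mathrm{Diff}(X)$-equivariant, and the descent to a bundle of stacks over $[(\mathrm{Met}(X)\times\Omega^{1}(X,\mathfrak{g}))/\mathrm{Diff}(X)]$ follows from Theorem \ref{vectbunequiv} and Proposition \ref{GCforstacks} with $\mathscr{M}$ replaced by $\mathrm{Met}(X)\times\Omega^{1}(X,\mathfrak{g})$, the relevant Fr\'{e}chet Lie groupoid presenting this quotient stack exactly as in Lemma \ref{M//Dpresentsmodstack}.

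The main obstacle is bookkeeping rather than depth: one must be careful that the fiber $\mathscr{E}_{(g,A)}$ is genuinely well-defined as an object (that consecutive differentials compose to zero, or fail to only in a way absorbed by the higher brackets and the on-shell condition on $A$), and that it is the full collection of brackets that gets intertwined, not merely the linear part — this is the honest difference between "$L_{\infty}$-equivariance" and "dg-equivariance." The single geometric input carrying all the real weight is the naturality of the Hodge star under diffeomorphisms, $f^{*}\circ\star_{g} = \star_{f^{*}g}\circ f^{*}$, which is where the metric dependence is confined and hence the only place general covariance could conceivably fail.
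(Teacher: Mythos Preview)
Your proposal is correct and follows essentially the same approach as the paper: both verify the commutativity of the three squares in the displayed diagram by invoking naturality of $d$, compatibility of pullback with $\wedge$, and the covariance $f^{*}\circ\star_{g} = \star_{f^{*}g}\circ f^{*}$ of the Hodge star. If anything, you are slightly more explicit than the paper in separating the underlying-graded-bundle equivariance from the $L_{\infty}$-map check, in noting that the higher brackets $\ell_{n}$ (not just $\ell_{1}$) must be intertwined, and in citing Theorem \ref{vectbunequiv} and Proposition \ref{GCforstacks} for the descent step.
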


\begin{rmk}
We would like to remind the reader that in the style of Subsection \ref{functors}, we can abandon the specific choice of $X$ in (\ref{YMbundle}) and the result is a functor from $\mathbf{Riem}_{n}$ to the category of bundles of $L_{\infty}$ algebras over stacks. In this case, what replaces the post-composed functor $\mathrm{Obs^{cl}} : \mathbf{Riem}_{n} \to \mathbf{dgAlg}$ of Proposition \ref{functorial}? To be more specific, what happens when we input a bundle of $L_{\infty}$ algebras over a stack to output something in $\mathbf{dgAlg}$ instead of a lone $L_{\infty}$ algebra? It is of great interest to elaborate more on these functors in future work.
\end{rmk}

\begin{rmk}
	Each of the fibers $\mathscr{E}_{(g,A)}$ describes the formal neighborhood of a solution $A$ to the Yang-Mills equations as an element of the \textit{stack} $[\Omega^{1}(X,\mathfrak{g})/C^{\infty}(X,G)]$ of connections modulo gauge transformation, and with background metric $g \in \mathscr{M}$: in this sense, we can view the preceding bundle as parameterizing formal stacks describing solutions to Yang-Mills modulo gauge living over the stack of metrics modulo diffeomorphism. More on formal neighborhoods in $[\Omega^{1}(X,\mathfrak{g})/C^{\infty}(X,G)]$ can be found in the paper \cite{ellgwill} on spontaneous symmetry breaking. 
	
	However, we should note that the preceding equivariance computations work out perfectly well if we don't treat them perturbatively: after all, the Yang-Mills term in (\ref{YMequations}) is diffeomorphism equivariant with respect to both connection one-forms and metrics. The caveat is that by using $L_{\infty}$ algebras, we are implictly invoking Theorem 2.0.2 in \cite{lurie}, in which the correspondence between $L_{\infty}$ algebras and formal moduli spaces is specified; however, if we formally substitute $\omega = A$ in the above computations, the equivariance property holds for what is evidently the nonperturbative case. I am interested to see how this can be remedied further to have a globalized version of Theorem \ref{YMequivariance}.
\end{rmk}

\begin{rmk}
	Theorem \ref{YMequivariance} states a version of general covariance in which additional physical fields are inextricably linked to $\mathrm{Met}(X)$ in the moduli stack. Indeed, any tensor field (even when taking values in some Lie algebra, for example) is defined with regard to its behavior under diffeomorphisms. So then why do we state general covariance in terms of \textit{metrics} modulo diffeomorphism? The answer is that in the development of the theory of general relativity, a key observation was Einstein's equivalence principle. 
	
	In general relativity, metrics represent the dynamical variables of the gravitational field, but any freely falling observer in a gravitational field can \textit{choose coordinates} so that they are in an inertial frame: in other words, any metric can be altered \textit{by some diffeomorphism} to be locally Euclidean or Lorentzian. In this sense, metrics and diffeomorphisms are intimately related when specifying gravitational dynamics, and so we use the associated stack as a baseline for quantifying general covariance. Further details are provided in Section 3 of \cite{norton}.
\end{rmk}
\end{ex}

\section{Formal Stacks}

In Section \ref{the main result}, we will make a connection to the version of the classical Noether's Theorem as described in Chapter 12 of \cite{cosgwill2}. However, we must first cross the bridge from the world of global stacks as we defined them in Section \ref{groupoids} to the world of formal moduli spaces, which are examples of formal stacks. A key step is to associate to a (differential graded) equivariant vector bundle a vector bundle over a formal moduli space:\footnote{Formal moduli spaces are alternatively named ``formal moduli problems".} in our case, this formal moduli space is a formal neighborhood in a quotient stack. In Section \ref{the main result}, this formal moduli space will be the formal neighborhood of a fixed metric in the moduli stack of metrics modulo diffeomorphism.

\subsection{Tangent Complexes}

The goal of the next portion is to understand what object we can associate to a point in a stack that plays the analogous role of a tangent space to an ordinary manifold. These ``tangent complexes" are necessary to compute function rings on formal neighborhoods in stacks, making them locally ringed spaces.

\begin{const}
Let $M$ be a smooth $G$-space and let $\textrm{Stab}(p) \subseteq G$ be the stabilizer subgroup of $p \in M$. The $G$-orbit of $p$ thus looks like a copy of $G/\textrm{Stab}(p)$ lying in $M$. If we consider the map $t_{p} : G \to M$ defined as $t_{p}(g) = g \cdot p$,\footnote{$t_{p}$ is in fact the target map for the Lie groupoid $M \times G \rightrightarrows M$ with $p \in M$ fixed in $M \times G$.} then its differential $dt_{p}$ can be used to define a 2-term cochain complex of vector spaces:
\begin{equation}
0 \to \mathfrak{g}[1] \xrightarrow{dt_{p}} T_{p}M \to 0 =: \mathbf{T}_{p}[M/G],
\label{tancpx}
\end{equation}
where $\mathfrak{g}$ is in cohomological degree $-1$ and $T_{p}M$ is in degree $0$. Alternative notation is $\mathbf{T}_{p}[M/G] = (\mathfrak{g}[1] \oplus T_{p}M, dt_{p})$. Note that $\textrm{Stab}(p)$ could be discrete here, although that is not seen in $\mathbf{T}_{p}[M/G]$. We can also compute
$$
\textrm{ker}(dt_{p}) = H^{-1}(\mathbf{T}_{p}[M/G]) =  \textrm{Lie(Stab}(p)).
$$
Thus, if $H^{-1}(\mathbf{T}_{p}[M/G]) = 0$, then the coarse quotient $M/G$ is an ordinary manifold at that point, since the action is free nearby it. $H^{0}(\mathbf{T}_{p}[M/G])$ is the quotient of $T_{p}M$ by $\textrm{im}(dt_{p})$: it is the usual tangent space of the coarse quotient at points $p \in M$ where the action is free. As it turns out, this is exactly the tangent object we are looking for, as the notation suggests: a precise statement and further details are wonderfully detailed in \cite{anel}.
\end{const}

\begin{prop}\label{tancpxprop}
The \textbf{tangent complex} to the quotient stack $[M/G]$ at a point $[p]$ is exactly $\mathbf{T}_{p}[M/G]$ as defined in equation (\ref{tancpx}).
\end{prop}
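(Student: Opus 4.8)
The plan is to identify the tangent complex of a stack with the homotopy fiber of the map from an atlas, and then compute that fiber explicitly for the quotient stack $[M/G]$ presented by the action groupoid $M \times G \rightrightarrows M$. Recall that for a stack $\mathfrak{X}$ with atlas $p : X \to \mathfrak{X}$, the tangent complex $\mathbf{T}_{[x]}\mathfrak{X}$ at a point $[x]$ is computed from the simplicial manifold (the nerve of the groupoid) $\cdots X \times_{\mathfrak{X}} X \rightrightarrows X$ by taking the associated (normalized) complex of tangent spaces and shifting appropriately; concretely, $\mathbf{T}_{[x]}\mathfrak{X}$ is the cone, shifted by $-1$, of the map $T_{(x,x)}(X \times_{\mathfrak{X}} X) \to T_x X \oplus T_x X$, or equivalently it fits into a distinguished triangle relating $T_x X$ and the ``tangent space of the relations.'' For a quotient stack the fibre product $X \times_{\mathfrak{X}} X$ over the atlas $X = M$ is exactly $M \times G$ (this is precisely what it means for the action groupoid to present $[M/G]$, cf. Definition \ref{quotientstackdef} and the remark following it), so the combinatorics collapse to a two-term complex.

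First I would make precise the general principle: the tangent complex of a differentiable stack at a point is well-defined (independent of the presenting groupoid, since Morita-equivalent groupoids give quasi-isomorphic complexes) and, for a stack presented by a Lie groupoid $\mathcal{G}_1 \rightrightarrows \mathcal{G}_0$, is represented at an object $x \in \mathcal{G}_0$ by the two-term complex $\mathfrak{g}_x[1] \to T_x \mathcal{G}_0$, where $\mathfrak{g}_x = \ker(ds)_x \subseteq T_{e(x)}\mathcal{G}_1$ is the ``isotropy Lie algebra bundle'' fibre (the Lie algebroid of $\mathcal{G}$ at $x$) and the differential is the anchor map $dt$ restricted there. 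I would cite \cite{anel} for this identification, exactly as the Construction preceding the statement already anticipates. Then I would specialize: for $\mathcal{G} = (M \times G \rightrightarrows M)$ with source $s(p,g) = p$ and target $t(p,g) = g\cdot p$, the object space is $\mathcal{G}_0 = M$, so $T_p \mathcal{G}_0 = T_p M$ in degree $0$; and the kernel of $ds$ at the identity arrow $(p, e)$ is $\{0\} \oplus T_e G = \mathfrak{g}$, placed in degree $-1$, i.e. $\mathfrak{g}[1]$. The anchor is the derivative of $t$ along the arrow directions at the identity, which is precisely $d t_p : \mathfrak{g} \to T_p M$ for the orbit map $t_p(g) = g \cdot p$. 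This reproduces (\ref{tancpx}) verbatim.

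The remaining steps are bookkeeping: check that the differential is the one claimed (differentiate $t_p(g) = g \cdot p$ at $g = e$ to get $d t_p : \mathfrak{g} \to T_p M$, which is the infinitesimal action map), verify the cohomology computation already recorded in the Construction ($H^{-1} = \ker(dt_p) = \operatorname{Lie}(\operatorname{Stab}(p))$ since the kernel of the infinitesimal action is exactly the Lie algebra of the stabilizer, and $H^0 = \operatorname{coker}(dt_p) = T_{[p]}(M/G)$ where the action is locally free), and confirm the degree conventions ($\mathfrak{g}$ in degree $-1$ because it is a shift of the ``arrows'' direction, giving negative cohomology that detects stackiness). One should also note the independence-of-presentation point: a different atlas or a Morita-equivalent groupoid yields a quasi-isomorphic complex, so $\mathbf{T}_{[p]}[M/G]$ is genuinely an invariant of the stack and its point, not of the chosen action groupoid — this is what licenses writing $\mathbf{T}_{[p]}[M/G]$ rather than $\mathbf{T}_{(p)}(M/\!/G)$.

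The main obstacle is not any single computation but making the foundational identification rigorous in the generality needed, especially in the Fréchet setting relevant to $[\mathscr{M}/\mathscr{D}]$: one must know that tangent complexes of (differentiable, and then Fréchet-differentiable) stacks behave functorially, that the homotopy fibre / cotangent-complex formalism applies, and that the nerve-of-the-groupoid computation is legitimate. In a paper at this level I expect this to be handled by citation to \cite{anel} (and the Fréchet groupoid/stack correspondence of Lemma \ref{M//Dpresentsmodstack} together with \cite{robertsvozzo}) rather than proved from scratch; the honest content of the proof is then the short explicit identification of the two-term complex for the action groupoid, which is exactly what the preceding Construction carries out. So the proof will essentially read: ``By \cite{anel}, the tangent complex of a stack presented by a Lie groupoid $\mathcal{G}_1 \rightrightarrows \mathcal{G}_0$ at $x$ is $(\ker ds)_x[1] \to T_x\mathcal{G}_0$ with differential the anchor; applying this to the action groupoid $M \times G \rightrightarrows M$ gives precisely (\ref{tancpx}), and the cohomology is as computed above.''
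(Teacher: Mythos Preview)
Your proposal is correct, and in fact it supplies considerably more than the paper does: the paper gives no proof of this proposition at all, instead treating it as a fact imported from the literature (the preceding Construction ends with ``a precise statement and further details are wonderfully detailed in \cite{anel}'' and the proposition is then stated without argument). Your outline --- identify the tangent complex of a groupoid-presented stack as the two-term complex $(\ker ds)_x[1] \to T_x\mathcal{G}_0$ with the anchor as differential, then specialize to the action groupoid --- is exactly the computation one would extract from \cite{anel}, and your anticipated conclusion (``the proof will essentially read: by \cite{anel}\ldots'') matches precisely how the paper handles it.
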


This inspired the saying that ``smooth stacks are geometric spaces whose tangent spaces are complexes concentrated in nonpositive cohomological degree". In the case of quotient stacks, we're lucky to have a concrete way of realizing their associated tangent complexes.

\begin{rmk}
If we take the union of all the complexes $\mathbf{T}_{p}[M/G]$ over all $p \in M$, we get a complex of vector bundles over $M$:
$$
0 \to \underline{\mathfrak{g}} \xrightarrow{dt} TM \to 0,
$$
where $\underline{\mathfrak{g}} = M \times \mathfrak{g}$, considering that the base space $M$ is implicit. $\underline{\mathfrak{g}}$ is called the \textit{Lie algebroid} associated to the action Lie groupoid $M // G$, and $dt$ is called the \textit{anchor map} of the Lie algebroid. This is a primordial example of a Lie algebroid.
\end{rmk}

\begin{ex}\label{metdgla}
Consider the natural action of the group of diffeomorphisms $\mathscr{D}$ of a manifold $X$ on the space of Riemannian metrics $\mathscr{M}$: $t_{g}(f) = f^{*}g$. According to \cite{krieglmichor}, the Lie algebra of $\mathscr{D}$ at the identity diffeomorphism is $\mathrm{Vect}(X)$, the set of vector fields: this will be the degree $-1$ part of our tangent complex.

We know that $T_{g}\mathscr{M} \cong \Gamma(X, \textrm{Sym}^{2}(T_{X}^{\vee}))$, so that we can compute 
\begin{equation}
\mathbf{T}_{g}[\mathscr{M}/\mathscr{D}] = (\Gamma(X, T_{X})[1] \oplus \Gamma(X, \textrm{Sym}^{2}(T_{X}^{\vee})), dt_{g}).
\end{equation}
Then, given $V \in \Gamma(T_{X})$, $dt_{g}(V) = L_{V}g$, where $L_{V}g$ is the Lie derivative of $g$ along $V$: one can see this by considering the one-parameter family of diffeomorphisms $f = \textrm{exp}(tV)$--i.e. letting $V$ be the infinitesimal generator of $f$--and computing the derivative at $t=0$ of the action of $f$ on $g$. Not all diffeomorphisms can be written this way: after all, $\mathscr{D}$ is not even a simply connected Lie group. Even worse, there are diffeomorphisms which are infinitesimally close to the identity diffeomorphism which cannot be written as $\textrm{exp}(tV)$ for some $V$ \cite{krieglmichor}; however, we need not worry about this in what is to come, as will be explained in Section \ref{the main result}.
\end{ex}

\begin{rmk}
	We will now show the relevance of the above for field theories by introducing notation and a lemma: we will perform the relevant computations for the example of the $\mathscr{D}$-equivariant differential graded vector bundle $(\mathscr{F}, Q) \to \mathscr{M}$ with fibers $\mathscr{F}_{g} = C^{\infty}(X) \xrightarrow{Q_{g}} \textrm{Dens}(X)[-1]$ and differential $Q_{g}\varphi = \Delta_{g}\varphi\mathrm{vol}_{g}$. 
	let us call the actions described in Lemma \ref{scalequiv} $\tau_{\mathscr{M}} : \mathscr{D} \to \mathrm{Diff}(\mathscr{M})$ and $\tau_{\mathscr{F}} : \mathscr{D} \to \mathrm{Diff}(\mathscr{F})$. There is also an ``action" on the differential, sending $Q_{g}$ to $Q_{f^{*}g}$ for $f \in \mathscr{D}$: it clearly comes from the action of $\mathscr{D}$ on $\mathscr{M}$, but also nicely intertwines with the input and output of the differential, as described in general covariance. 
	
	To get the infinitesimal version of these actions we use computations similar to those in Example \ref{metdgla}, keeping in mind that $\mathrm{Lie}(\mathscr{D}) \cong \mathrm{Vect}(X)$. The map $\tau_{\mathscr{M}_{g}} : \mathrm{Vect}(X) \to T_{g}\mathscr{M}$ is what we already considered earlier, namely $V \mapsto L_{V}g$, and the fibers are similar: 
	\begin{lem}\label{pert}
	The action of $\mathscr{D}$ on the underlying graded vector space of $\mathscr{F}_{g}$ defines an action of $\mathrm{Vect}(X)$ on $\mathscr{F}_{g}$. It has a degree 0 part $\tau_{\mathscr{F}_{g}} : \mathrm{Vect}(X) \to T_{\varphi}C^{\infty}(X) \cong C^{\infty}(X)$ and a degree 1 part $\tau_{\mathscr{F}_{g}} : \mathrm{Vect}(X) \to T_{\mu}\mathrm{Dens}(X) \cong \mathrm{Dens}(X)$; they are, respectively, $V \mapsto L_{V}\varphi$ and $V \mapsto L_{V}\mu$. 
	\end{lem}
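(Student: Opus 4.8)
The plan is to produce the infinitesimal action by differentiating, at the identity, the pullback action of $\mathscr{D}$ on the graded vector space underlying $\mathscr{F}_g$, using the identification $\mathrm{Lie}(\mathscr{D}) \cong \mathrm{Vect}(X)$ recalled in Example~\ref{metdgla}. Since the action from Lemma~\ref{scalequiv} is diagonal with respect to the grading (pullback on $C^{\infty}(X)$ in degree $0$ and pullback on $\mathrm{Dens}(X)$ in degree $1$), it suffices to treat the two summands separately, and the differential $Q_g$ plays no role here. Fix $V \in \mathrm{Vect}(X)$. Because $X$ is compact, $V$ is complete, so it generates an honest one-parameter group of diffeomorphisms $f_t = \exp(tV) \in \mathscr{D}$ with $f_0 = \mathrm{id}$ and $\tfrac{d}{dt}\big|_{t=0} f_t = V$; this is exactly the data needed to form the derivative of the $\mathscr{D}$-action. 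As noted in Example~\ref{metdgla}, not every element of $\mathscr{D}$ is of this exponential form, but the infinitesimal action depends only on the flows $f_t$, so that phenomenon is irrelevant for this computation.

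Next I would carry out the two degreewise derivatives. As $C^{\infty}(X)$ and $\mathrm{Dens}(X)$ are (Fr\'{e}chet) vector spaces, one has canonical identifications $T_\varphi C^{\infty}(X) \cong C^{\infty}(X)$ and $T_\mu \mathrm{Dens}(X) \cong \mathrm{Dens}(X)$, and under these the infinitesimal action is
\[
\tau_{\mathscr{F}_g}(V)\varphi = \frac{d}{dt}\Big|_{t=0} f_t^{*}\varphi, \qquad \tau_{\mathscr{F}_g}(V)\mu = \frac{d}{dt}\Big|_{t=0} f_t^{*}\mu .
\]
By the dynamical definition of the Lie derivative of a geometric object (a function, respectively a density) along the flow of a vector field, the right-hand sides are precisely $L_V\varphi$ and $L_V\mu$; for the density term one can double-check in a chart, where $\mu = \rho\,|dx^{1}\wedge\cdots\wedge dx^{n}|$ and $L_V\mu = \big(V(\rho) + \rho\,\mathrm{div}\,V\big)\,|dx^{1}\wedge\cdots\wedge dx^{n}|$. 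This reproduces the formulas claimed in the statement, exactly paralleling the computation $\tau_{\mathscr{M}_g}: V \mapsto L_V g$ of Example~\ref{metdgla}.

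Finally, to justify calling this an ``action'' I would verify that $V \mapsto \tau_{\mathscr{F}_g}(V)$ is a Lie algebra morphism, i.e. $[L_V, L_W] = L_{[V,W]}$ on both $C^{\infty}(X)$ and $\mathrm{Dens}(X)$: on functions this is the defining property of the bracket of vector fields, and on densities it follows since $L_V$ is a derivation of the $C^{\infty}(X)$-module structure on $\mathrm{Dens}(X)$, so the identity propagates from functions to densities. With the left/right convention for $f\cdot(-) = f^{*}(-)$ used in Remark~\ref{scalarcovariance} one may pick up a global sign, which is harmless and consistent with the convention already in force for $\tau_{\mathscr{M}_g}$. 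The step that most deserves care --- and the only real obstacle --- is the smoothness, in the Fr\'{e}chet category, of the action map $\mathscr{D}\times\mathscr{F}_g \to \mathscr{F}_g$, which is what legitimizes differentiating it; this is supplied by the Fr\'{e}chet-manifold structures of Proposition~\ref{fieldsasfrechetbundle} together with the smoothness of pullback under a smoothly varying family of diffeomorphisms, for which I would cite \cite{krieglmichor} and \cite{tuschmannwraith}.
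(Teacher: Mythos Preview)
Your proposal is correct and follows essentially the same approach the paper itself takes: the paper does not give a standalone proof of this lemma, but the surrounding remark explicitly says the infinitesimal action is obtained ``using computations similar to those in Example~\ref{metdgla}'', i.e.\ by writing $f_t=\exp(tV)$ and differentiating the pullback action at $t=0$, which is exactly what you do. Your additional verification that $V\mapsto L_V$ respects brackets and your remark on Fr\'{e}chet smoothness go slightly beyond what the paper spells out, but are consistent with its framework and references.
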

	
Although we provided this example for clarity, such a lemma holds for any generally covariant BV field theory as we defined it in Definition \ref{gencov}, since tangent vectors can be defined for a Fr\'{e}chet manifold by means of it being locally modeled by Fr\'{e}chet spaces. The most interesting and physically relevant detail which must be addressed is what happens to the differential $\{S_{g},-\}$ under this infinitesimal action: this will be the content of Section \ref{the main result}.
	\end{rmk}

\subsection{Chevalley-Eilenberg Cochains as Rings of Functions}

We start with an action of a finite dimensional Lie group $G$ on a finite dimensional manifold $M$, and then specialize to the case of $M = \mathbf{R}^{n}$ to consider some concrete computations. In the example of diffeomorphisms of a manifold $X$ acting on the space of Riemannian metrics on $X$, $\textrm{Met}(X) = \mathscr{M}$ is a convex cone in $\Gamma(X, \textrm{Sym}^{2}(T^{\vee}X))$, so that we will be eventually specializing these constructions to vector spaces or ``nice" subsets thereof anyway.

\begin{const}\label{local}
Let $\widehat{M}_{p}$ denote the formal neighborhood of $p \in M$, defined so that its ring of functions $\mathscr{O}(\widehat{M}_{p})$ is the jets of $\mathscr{O}(M) := C^{\infty}(M)$ at $p$, and denote the inclusion map $\hat{p} : \widehat{M}_{p} \to M$: this is equivalent to the restriction map $\mathscr{O}(M) \to \mathscr{O}(\widehat{M}_{p})$. It is known that $\mathscr{O}(\widehat{M}_{p}) \cong \widehat{\textrm{Sym}}(T_{p}^{\vee}M)$, the Taylor series ring around $p \in M$, although this isomorphism is not canonical. We will use the latter, and call the Taylor series ring $\widehat{\mathscr{O}}_{p}$ when unambiguous.


The action of $G$ on $M$ is defined by a map $P : G \to \textrm{Diff}(M)$. Taking its total derivative gives us a map $\rho : \mathfrak{g} \to \textrm{Vect}(M)$ of Lie algebras, where we choose to view $\textrm{Vect}(M)$ as derivations of $\mathscr{O}(M)$. We then restrict the action of $\textrm{Vect}(M)$ on $C^{\infty}(M)$ to get an action of $\textrm{Vect}(\widehat{M}_{p})$ on $C^{\infty}(\widehat{M}_{p}) \cong \widehat{\mathscr{O}}_{p}$. The differential on $\mathbf{T}_{p}[M/G]$ encodes $\rho : \mathfrak{g} \to \textrm{Vect}(M)$ at the point $p$ and thus on the formal neighborhood $\widehat{M}_{p}$ of $p$ since $\rho$ is a map of Lie algebras: this gives us $\mathfrak{g} \to \textrm{Vect}(\widehat{M}_{p})$. Noting that $\textrm{Vect}(\widehat{M}_{p}) \cong \textrm{Der}(\widehat{\mathscr{O}}_{p})$ recovers the action of $\mathfrak{g}$ on $\widehat{\mathscr{O}}_{p}$ via derivations, this allows us to define $C^{\bullet}(\mathfrak{g}, \widehat{\mathscr{O}}_{p}) \cong C^{\bullet}(\mathfrak{g}) \otimes \widehat{\mathscr{O}}_{p}$ in the traditional way. 
\end{const}

\begin{lem}\label{CECs} 
	Chevalley-Eilenberg (CE) cochains of the differential graded Lie algebra defined by shifting $\mathbf{T}_{p}[M/G]$ up one degree, denoted $C^{\bullet}(\mathfrak{g} \xrightarrow{dt_{p}} T_{p}M[-1])$, and $C^{\bullet}(\mathfrak{g}, \widehat{\mathscr{O}}_{p})$ are isomorphic as differential graded commutative algebras.
	Moreover, $C^{\bullet}(\mathfrak{g}, \widehat{\mathscr{O}}_{p})$ is the ring of functions on the formal neighborhood of $[p] \in [M/G]$.
\end{lem}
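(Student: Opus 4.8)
The plan is to prove the two assertions in turn. First I would match underlying graded algebras: writing $L := \mathbf{T}_p[M/G][1]$, so that $L$ has $\mathfrak{g}$ in degree $0$ and $T_pM$ in degree $1$ with internal differential $dt_p$, the Chevalley-Eilenberg cochains are $C^\bullet(L) = \widehat{\mathrm{Sym}}(L^\vee[-1])$ as a graded commutative algebra. The shifted dual $L^\vee[-1]$ consists of $\mathfrak{g}^\vee$ in degree $1$ together with $T_p^\vee M$ in degree $0$; since the first summand is odd and the second even, $\widehat{\mathrm{Sym}}$ of their sum is $\Lambda^\bullet\mathfrak{g}^\vee \otimes \widehat{\mathrm{Sym}}(T_p^\vee M)$, with $\Lambda^k\mathfrak{g}^\vee$ placed in degree $k$. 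This is exactly $C^\bullet(\mathfrak{g}) \otimes \widehat{\mathscr{O}}_p$, which by Construction \ref{local} is the underlying graded algebra of $C^\bullet(\mathfrak{g},\widehat{\mathscr{O}}_p)$.

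Next I would match the two differentials by comparing them on the generating subspace $\mathfrak{g}^\vee \oplus T_p^\vee M$, using that a derivation of $\widehat{\mathrm{Sym}}(T_p^\vee M)$ is determined by its restriction to generators. On $C^\bullet(\mathfrak{g},\widehat{\mathscr{O}}_p)$ the differential is the sum of the Chevalley differential $d_{\mathfrak{g}}$ on the $C^\bullet(\mathfrak{g})$ factor and a coefficient term which, on $f \in \widehat{\mathscr{O}}_p$, returns the element of $\mathfrak{g}^\vee \otimes \widehat{\mathscr{O}}_p$ dual to $X \mapsto \rho(X)(f)$, where $\rho : \mathfrak{g} \to \mathrm{Der}(\widehat{\mathscr{O}}_p)$ is the action built in Construction \ref{local}; on $C^\bullet(L)$ the differential is the sum of the dual of $dt_p$ and the dual of the bracket of $L$. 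The comparison then amounts to reading $\rho$ off Taylor-degree by Taylor-degree: by definition of the orbit map $t_p$ the value of $\rho(X)$ at $p$ is $dt_p(X)$, so the constant part of $\rho(X)$ acting on the linear functions $T_p^\vee M$ recovers $(dt_p)^\vee$, while the linear (and, if $\rho$ is not formally linearizable, higher) Taylor terms of $\rho$ are precisely the brackets carried by $\mathbf{T}_p[M/G][1]$. Keeping the shift and sign conventions straight here, and being honest about whether the action contributes only a dgla bracket or a genuine $L_\infty$ structure on $\mathbf{T}_p[M/G][1]$, is the step I expect to be the main obstacle; the clean way to package it is to observe that, after the (non-canonical) coordinate choice $\widehat{\mathscr{O}}_p \cong \widehat{\mathrm{Sym}}(T_p^\vee M)$, both complexes are by construction the Chevalley-Eilenberg complex of one and the same action of $\mathfrak{g}$ on the formal disk $\widehat{M}_p$. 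Alternatively one can bypass this bookkeeping entirely: the correspondence between formal moduli problems and (dg/$L_\infty$) Lie algebras of \cite{lurie} gives $\mathscr{O}(\mathfrak{X}) = C^\bullet(L_{\mathfrak{X}})$ for any formal moduli problem $\mathfrak{X}$ with associated Lie algebra $L_{\mathfrak{X}}$, and Proposition \ref{tancpxprop} identifies this Lie algebra, in the case at hand, with $\mathbf{T}_p[M/G][1]$.

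For the final assertion I would present the formal neighborhood of $[p] \in [M/G]$ as the quotient stack $[\widehat{M}_p / \widehat{G}_e]$, where $\widehat{G}_e$ is the formal completion of $G$ at the identity (so $\mathrm{Lie}(\widehat{G}_e) = \mathfrak{g}$) acting on $\widehat{M}_p$ by restriction of the $G$-action; one checks that this formal action is well defined and, by Proposition \ref{tancpxprop}, that the quotient has the expected tangent complex $\mathbf{T}_p[M/G]$. Functions on the quotient of a formal affine scheme by a formal group are computed by homotopy invariants, which for a formal group are Lie algebra cochains: $\mathscr{O}(\widehat{M}_p)^{h\widehat{G}_e} = C^\bullet(\mathfrak{g}, \mathscr{O}(\widehat{M}_p)) = C^\bullet(\mathfrak{g}, \widehat{\mathscr{O}}_p)$. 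Combining this with the isomorphism from the previous paragraph shows that $C^\bullet(\mathfrak{g}, \widehat{\mathscr{O}}_p)$ is the ring of functions on the formal neighborhood of $[p] \in [M/G]$, which completes the argument.
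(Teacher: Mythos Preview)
Your proposal is correct and follows essentially the same approach as the paper: first match the underlying graded commutative algebras via the noncanonical isomorphism $\widehat{\mathscr{O}}_p \cong \widehat{\mathrm{Sym}}(T_p^\vee M)$, then check that the Chevalley--Eilenberg differentials agree on generators. The paper in fact leaves both steps as exercises, so your version is considerably more detailed; your honesty about the dgla versus $L_\infty$ subtlety in the bracket comparison, and your alternative route through Lurie's correspondence and the presentation $[\widehat{M}_p/\widehat{G}_e]$ for the ``moreover'' clause, go beyond what the paper spells out but are entirely in its spirit.
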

\begin{proof}
It is a quick exercise to show that the underlying graded commutative algebras of $C^{\bullet}(\mathfrak{g} \xrightarrow{dt_{p}} T_{p}M[-1])$ and $C^{\bullet}(\mathfrak{g}, \widehat{\mathscr{O}}_{p})$ are identical, as long as one is careful to employ the noncanonical isomorphism $\mathscr{O}(\widehat{M}_{p}) \cong \widehat{\textrm{Sym}}(T_{p}^{\vee}M)$. From there, it is sufficient to show that the Chevalley-Eilenberg differentials are equivalent, which is also left as a brief exercise. 
\end{proof}

\begin{rmk}
This lemma implies that the dg Lie algebra 
\begin{equation}
	\mathfrak{g}_{p} := (\mathfrak{g} \oplus T_{p}M[-1], dt_{p},  [-,-]_{\mathfrak{g}})
\end{equation}
is of utmost importance. To say more about this, we must state a definition: 

\begin{defn}[Definition 3.1.2 in \cite{cosgwill2}]\label{formalmod}
A \textbf{formal (pointed) moduli problem} over $k$ is a functor of simplicially enriched categories
$$
F : \mathbf{dgArt}_{k} \to \mathbf{sSets},
$$	
where $\mathbf{dgArt}_{k}$ is the category of (local) Artinian dg algebras over $k$ and $\mathbf{sSets}$ the category of simplicial sets, which satisfies:
(1) $F(k)$ is contractible. (2) $F$ takes surjective maps in $\mathbf{dgArt}_{k}$ to fibrations in $\mathbf{sSets}$. 
(3) For $A,B,C \in \mathbf{dgArt}_{k}$ and surjections $B \to A$ and $C \to A$ (meaning we can define the fiber product $B \times_{A} C$), we require that the following natural map is a weak equivalence:
$$
F(B \times_{A} C) \to F(B) \times_{F(A)} F(C).
$$
\end{defn}
Clearly, this can be viewed as a ``localization" of the traditional algebro-geometric definition of a stack as a functor $\mathbf{CRing} \to \mathbf{Gpd}$ satisfying descent. What follows in the rest of this section and in Section \ref{the main result} depends on the following theorem, which allows us to connect the above objects to the more concrete dg Lie algebras and $L_{\infty}$ algebras we use for computations:

\begin{thm}[Theorem 2.0.2 in \cite{lurie}]\label{deformationtheory}
	There is an equivalence of $(\infty,1)$-categories between the category $\mathbf{Lie}_{k}$ of differential graded Lie algebras over a characteristic zero field $k$ and the category $\mathbf{Moduli}_{k}$ of formal pointed moduli problems over $k$. 
\end{thm}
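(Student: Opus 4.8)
The plan is to exhibit an explicit comparison functor and then upgrade it to an equivalence by reducing everything to a single generating object whose image is fully understood. First I would construct the functor $\Psi \colon \mathbf{Lie}_{k} \to \mathbf{Moduli}_{k}$ by the Maurer-Cartan construction: to a dg Lie algebra $\mathfrak{g}$ assign the functor
$$
\Psi_{\mathfrak{g}}(A) := \mathrm{MC}_{\bullet}(\mathfrak{g} \otimes_{k} \mathfrak{m}_{A}),
$$
where $\mathfrak{m}_{A}$ is the maximal ideal of the Artinian dg algebra $A$ and $\mathrm{MC}_{\bullet}$ is the simplicial Maurer-Cartan space built, following Hinich and Getzler, by tensoring with the polynomial de Rham forms $\Omega^{\bullet}(\Delta^{n})$ and taking Maurer-Cartan elements levelwise. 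The first task is to verify that $\Psi_{\mathfrak{g}}$ genuinely satisfies the three axioms of Definition \ref{formalmod}: contractibility at $A=k$ is immediate since $\mathfrak{m}_{k}=0$; the fibration condition holds because a surjection $A \to A'$ induces a surjection of nilpotent dg Lie algebras $\mathfrak{g}\otimes\mathfrak{m}_{A} \to \mathfrak{g}\otimes\mathfrak{m}_{A'}$, which $\mathrm{MC}_{\bullet}$ sends to a Kan fibration; and the Schlessinger-type gluing condition, which is the crux of this step, rests on the fact that $\mathrm{MC}_{\bullet}$ preserves the relevant pullbacks of nilpotent dg Lie algebras up to homotopy.

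The second step is to pin down the invariant that controls both sides, namely the tangent complex. Evaluating $\Psi_{\mathfrak{g}}$ on the square-zero extensions $k \oplus k[n]$ and computing $\mathrm{MC}_{\bullet}$ of the abelian (bracket-free) dg Lie algebra $\mathfrak{g}[n]$ shows that the tangent complex of $\Psi_{\mathfrak{g}}$ is $\mathfrak{g}[1]$; hence $\Psi$ is compatible, up to the shift $[1]$, with the forgetful functors to cochain complexes, and in particular $\Psi$ detects equivalences. In parallel I would set up the Koszul duality adjunction between augmented commutative dg algebras and dg Lie algebras given by the Chevalley-Eilenberg cochain functor $C^{\bullet}(-)$ and its adjoint, and observe that the pro-Artinian algebra $C^{\bullet}(\mathfrak{g})$ is precisely the (pro-)ring corepresenting $\Psi_{\mathfrak{g}}$; this is the bridge back to the function-ring description used in Lemma \ref{CECs}.

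The third step is to promote the tangent-complex comparison to a full equivalence. Both $\mathbf{Lie}_{k}$ and $\mathbf{Moduli}_{k}$ are presentable $(\infty,1)$-categories, and I would check that $\Psi$ preserves small colimits (this is where the Maurer-Cartan description and the commutation of $\mathrm{MC}_{\bullet}$ with filtered colimits are used) and finite limits. The key structural input is that $\mathbf{Moduli}_{k}$ is generated under colimits by the corepresentables of the square-zero extensions, matched on the Lie side by the abelian generators $k[n]$; since $\Psi$ carries these generators to those generators up to the shift computed above and induces the correct equivalences on their mapping spaces, both full faithfulness and essential surjectivity reduce to a standard generation-by-colimits argument. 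Here mapping spaces are compared by rewriting $\mathrm{Map}_{\mathbf{Moduli}}(\Psi_{\mathfrak{g}}, \Psi_{\mathfrak{h}})$ as a Maurer-Cartan computation in the convolution Lie algebra that reproduces $\mathrm{Map}_{\mathbf{Lie}}(\mathfrak{g}, \mathfrak{h})$.

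The hardest part, and the genuine content of the theorem, is essential surjectivity: showing that an \emph{arbitrary} formal moduli problem $X$ has the form $\Psi_{\mathfrak{g}}$. The strategy is to extract from $X$ its tangent complex $T_{X}$, endow $T_{X}[-1]$ with an $L_{\infty}$ (equivalently, up to quasi-isomorphism, dg Lie) structure built canonically from the higher gluing data of $X$, and then prove that the tautological comparison map $\Psi_{T_{X}[-1]} \to X$ is an equivalence. Verifying that this map is an equivalence is exactly where one must establish that every formal moduli problem is reconstructed from its tangent complex together with this bracket, i.e. that the derived deformation theory is pro-representable; this is also the step where the characteristic-zero hypothesis is indispensable, since it underwrites both the existence of the $\mathrm{MC}_{\bullet}$ model and the Koszul self-duality pairing the commutative and Lie operads that makes the correspondence tight. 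I expect this reconstruction-via-colimits argument, rather than the construction of $\Psi$ itself, to be the main obstacle.
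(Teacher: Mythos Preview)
The paper does not prove this theorem at all: it is quoted verbatim as Theorem 2.0.2 of \cite{lurie} and used as a black box (see the sentence immediately following the statement and Remark \ref{assurance}). There is therefore no ``paper's own proof'' to compare against.

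That said, your outline is a faithful sketch of the standard argument as it appears in Lurie's DAG X (and in Pridham's parallel treatment): construct the Maurer--Cartan/deformation functor $\Psi$, identify the tangent complex via square-zero extensions, set up Koszul duality through $C^{\bullet}(-)$, and then run a presentability/generation argument with essential surjectivity handled by reconstructing an arbitrary $X$ from $T_{X}[-1]$. The emphasis you place on characteristic zero and on the pro-representability step is appropriate. If you want this to stand as a proof rather than a roadmap, the places that need genuine work are (i) the verification that $\mathrm{MC}_{\bullet}$ sends surjections of nilpotent dg Lie algebras to Kan fibrations and preserves the relevant homotopy pullbacks, and (ii) the construction of the $L_{\infty}$ structure on $T_{X}[-1]$ for an abstract formal moduli problem $X$; both are substantial and are exactly where Lurie's paper spends its effort.
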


The homotopy category of $L_{\infty}$ algebras is equivalent to the homotopy category of dg Lie algebras, so that the above remains true in that case (as is relevant for us). Theorem \ref{deformationtheory} confirms that the dg Lie algebra $\mathfrak{g}_{p}$ completely defines the data of the formal neighborhood of $[p]$ in $[M/G]$, as we suspected from Lemma \ref{CECs}.
\end{rmk}

\begin{rmk}
Much like how a quotient stack ``builds in" group action data into its definition, functions on a formal neighborhood $\widehat{[M/G]}_{p}$ in the stack, namely $C^{\bullet}(\mathfrak{g}, \hat{\mathscr{O}}_{p})$, have ``built into" them all of the $\mathfrak{g}$-invariant data. Concretely, $C^{\bullet}(\mathfrak{g}, \hat{\mathscr{O}}_{p})$ has the usual ring of functions $\hat{\mathscr{O}}_{p}$ as a subset: tensoring with $\textrm{Sym}(\mathfrak{g}^{\vee}[-1])$ and imposing the Chevalley-Eilenberg differential remembers the data of $\mathfrak{g}$ acting on $\widehat{M}_{p}$, and therefore on $\mathscr{O}(\widehat{M}_{p}) \cong \hat{\mathscr{O}}_{p}$ as well. To see how these ideas unfold in action, we refer the reader to Appendix \ref{findim}. 
\end{rmk}

\begin{ex}\label{metdgla2}
	In light of this lemma and Example \ref{metdgla}, the  (Fr\'{e}chet) dg Lie algebra we must consider in the context of general covariance is thus
\begin{equation}\label{metdgla3}
	\mathfrak{g}_{g} := \mathfrak{g}_{g}(X) :=  \Gamma(X, T_{X}) \xrightarrow{L_{\bullet}g} \Gamma(X, \mathrm{Sym}^{2}(T^{\vee}_{X}))[-1].
\end{equation}
Note that if don't specify evaluation of $\mathfrak{g}_{g}$ on all of $X$, it becomes a sheaf on the site $\mathbf{Riem}_{n}$ introduced  in Subsection \ref{functors}. The metric $g$ should also not be specified in that case, but we leave it in the notation to distinguish the above from a generic dg Lie algebra. 

By applying Lemma \ref{CECs}, we see that the ring of functions on the formal neighborhood of $[g] \in [\mathscr{M}/\mathscr{D}]$ is $C^{\bullet}(\mathrm{Vect}(X), \mathscr{O}(\widehat{\mathscr{M}}_{g})) \cong C^{\bullet}(\mathrm{Vect}(X)) \otimes  \widehat{\mathrm{Sym}}(T_{g}^{\vee}\mathscr{M})$, which we interpret as the derived invariants of $\mathscr{O}(\widehat{\mathscr{M}}_{g})$ with respect to the $\mathscr{D}$-action. Our definition of general covariance from earlier when properly ``localized" would imply that the observables of such a field theory $\mathscr{F}_{g}$ over $g \in \mathscr{M}$ form a module over $C^{\bullet}(\mathrm{Vect}(X), \mathscr{O}(\widehat{\mathscr{M}}_{g})) = C^{\bullet}(\mathfrak{g}_{g})$: this is exactly what is shown in Proposition \ref{themainthm}.
\end{ex}

\begin{rmk}
It should be noted that because $\mathrm{Vect}(X)$ and $T_{g}^{\vee}\mathscr{M}$  are infinite dimensional, the definition of $C^{\bullet}(\mathfrak{g}_{g}) \cong C^{\bullet}(\mathrm{Vect}(X)) \otimes  \widehat{\mathrm{Sym}}(T_{g}^{\vee}\mathscr{M})$ is not precisely the one from the finite dimensional case: in particular, we have shown that $\mathrm{Vect}(X)$ is a Fr\'{e}chet Lie algebra and $T_{g}^{\vee}\mathscr{M}$ is a Fr\'{e}chet vector space, and so the tensor product used for the preceding objects is the completed projective tensor product used in Definition \ref{functionals}. In this way, $C^{\bullet}(\mathfrak{g}_{g})$ represents the same data as it would if its inputs were finite dimensional, but we are just a bit more careful with functional analytic issues to ensure that all of the rings are well defined. 
\end{rmk}

\subsection{Vector Bundles over a Formal Stack}

Now that we have made things concrete with an example, we'd like to understand vector bundles in this context. We're primarily concerned with perturbative computations (those in the style of Construction \ref{local}); however, we will present the global picture first, since general covariance is first presented in such a context.

\begin{const}

Let $V$ be a $G$-equivariant vector bundle over $M$, for which the action $\tau_{M} : G \to \mathrm{Diff}(M)$ is not necessarily free. Call the action on the total space of $V \to M$ $\tau_{V} : G \to \mathrm{Diff}(V)$. Recall from Example \ref{actiongpd} that we get the pair of Lie groupoids $\mathcal{V}_{\mathcal{G}}$ and $\mathcal{M}_{\mathcal{G}}$ with a map $\pi : \mathcal{V}_{\mathcal{G}} \to \mathcal{M}_{\mathcal{G}}$ between them. This information in turn presents a pair of stacks, and the projection gives us a map $\pi: [V/G] \to [M/G]$ between those stacks. Here, $[V/G]$ is a vector space object in the category of stacks over the stack $[M/G]$, much like how $\mathcal{V}_{\mathcal{G}}$ is a vector space object in the category of Lie groupoids over the Lie groupoid $\mathcal{M}_{\mathcal{G}}$.

The action of a finite dimensional Lie group $G$ on a finite dimensional $M$ restricts to an action of the formal group $\widehat{G} \overset{\textrm{exp}}{\cong}
 \mathfrak{g}$ (defined as the formal neighborhood of the identity in $G$) on $\widehat{M}_{p}$, the formal neighborhood of $p \in M$. This defines a formal Lie groupoid which then presents the stack $[\widehat{M}_{p}/\widehat{G}] \cong \widehat{[M/G]}_{p}$, whose rings of functions we computed earlier to be $C^{\bullet}(\mathfrak{g}_{p})$, so that $\mathfrak{g}_{p}$ is the dg Lie algebra associated to the formal moduli problem $\widehat{[M/G]}_{p}$.
 

 We can pull back the $G$-equivariant vector bundle $V \to M$ along $\hat{p} : \widehat{M}_{p} \to M$ to get a $\mathfrak{g}$-equivariant vector bundle $\hat{p}^{*}V \to \widehat{M}_{p}$. Topologically, the total space of $\hat{p}^{*}V$ is the formal neighborhood of the entire fiber $\pi^{-1}(p) = V_{p}$, which we can think of heuristically as $V_{p} \times \widehat{M}_{p}$. Both parts of this product have an action of $\widehat{G}$, even though one of the directions is a formal space and the other a vector space which is not necessarily viewed as formal (i.e. its ring of functions may be polynomials, and not power series). Thus, we can consider the associated formal Lie groupoid here as well, and it presents the stack $[(\hat{p}^{*}V)/\widehat{G}]$. 
 
 The vector bundle which plays the local role of the global stack $[V/G] \to [M/G]$ is therefore 
$$
[(\hat{p}^{*}V)/\widehat{G}] \to [\widehat{M}_{p}/\widehat{G}].
$$
On account of $C^{\bullet}(\mathfrak{g}_{p})$ being the space of functions on $[\widehat{M}_{p}/\widehat{G}]$, we see that a section $\sigma : [\widehat{M}_{p}/\widehat{G}] \to [(\hat{p}^{*}V)/\widehat{G}]$ is an element of $C^{\bullet}(\mathfrak{g}_{p}) \otimes V_{p}$. This is the stackified and deformation-theoretic version of a section of $V \to M$ being an element of $C^{\infty}(M) \otimes V_{p}$ in local coordinates near $p$. Moreover, this reasoning  results in the following lemma.
\end{const}


\begin{lem}
	The ring of functions on $[(\hat{p}^{*}V)/\widehat{G}] \cong [(\hat{p}^{*}V)/\mathfrak{g}]$ is $C^{\bullet}(\mathfrak{g}, \widehat{\mathscr{O}}_{p} \otimes \mathrm{Sym}(V_{p}^{\vee})) \cong C^{\bullet}(\mathfrak{g}_{p}, \mathrm{Sym}(V_{p}^{\vee}))$, which is isomorphic as a graded ring to  $C^{\bullet}(\mathfrak{g}_{p}) \otimes \mathrm{Sym}(V_{p}^{\vee})$.
\end{lem}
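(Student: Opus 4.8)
The plan is to compute the ring of functions on the total space $\hat{p}^{*}V$ directly, to identify the quotient by the formal group $\widehat{G}$ with the quotient by the dg Lie algebra $\mathfrak{g}$ via exponentiation, and then to apply a ``vector bundle'' version of Lemma \ref{CECs}: functions on a formal quotient by $\mathfrak{g}$ are the Chevalley--Eilenberg cochains of $\mathfrak{g}$ with coefficients in the functions on the space being quotiented. Peeling off the fiber factor at the graded level then gives the final isomorphism.

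First I would describe $\mathscr{O}(\hat{p}^{*}V)$. Since $V \to M$ is a vector bundle, a local trivialization near $p$ --- the same noncanonical choice that underlies $\mathscr{O}(\widehat{M}_{p}) \cong \widehat{\textrm{Sym}}(T_{p}^{\vee}M)$ in Construction \ref{local} --- realizes the formal neighborhood $\hat{p}^{*}V$ of the fiber $\pi^{-1}(p) = V_{p}$ as the formal product $\widehat{M}_{p} \times V_{p}$, completed only in the $\widehat{M}_{p}$ direction and with polynomial functions along $V_{p}$ (as in the preceding Construction). Hence
$$
\mathscr{O}(\hat{p}^{*}V) \;\cong\; \mathscr{O}(\widehat{M}_{p}) \otimes \mathrm{Sym}(V_{p}^{\vee}) \;\cong\; \widehat{\mathscr{O}}_{p} \otimes \mathrm{Sym}(V_{p}^{\vee}),
$$
and the restriction of $\tau_{V}$ to $\widehat{G}$ acts diagonally: on $\widehat{\mathscr{O}}_{p}$ through the map $\mathfrak{g} \to \textrm{Vect}(\widehat{M}_{p}) = \textrm{Der}(\widehat{\mathscr{O}}_{p})$ encoded by $dt_{p}$ exactly as in Construction \ref{local}, and on $\mathrm{Sym}(V_{p}^{\vee})$ through the dual of the representation $\rho : \mathfrak{g} \to \textrm{End}(V_{p})$ defining the equivariant structure, extended as a derivation.

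Next, since $\widehat{G} \cong \mathfrak{g}$ under $\exp$, the stack $[(\hat{p}^{*}V)/\widehat{G}] \cong [(\hat{p}^{*}V)/\mathfrak{g}]$ is presented by the formal action groupoid of $\widehat{G}$ on $\hat{p}^{*}V$, so its ring of functions is the derived $\mathfrak{g}$-invariants of $\mathscr{O}(\hat{p}^{*}V)$, namely $C^{\bullet}(\mathfrak{g}, \widehat{\mathscr{O}}_{p} \otimes \mathrm{Sym}(V_{p}^{\vee}))$ --- this is the same mechanism that produced $\mathscr{O}([\widehat{M}_{p}/\widehat{G}]) = C^{\bullet}(\mathfrak{g}, \widehat{\mathscr{O}}_{p})$ in Lemma \ref{CECs}, now run with the larger coefficient module. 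Exactly as in the proof of Lemma \ref{CECs}, the part of this complex assembled from $\widehat{\mathscr{O}}_{p} = \widehat{\textrm{Sym}}(T_{p}^{\vee}M)$ together with $\mathrm{Sym}(\mathfrak{g}^{\vee}[-1])$ reorganizes, via the noncanonical isomorphism, into $C^{\bullet}$ of the shifted tangent complex $\mathfrak{g}_{p} = (\mathfrak{g} \oplus T_{p}M[-1], dt_{p}, [-,-]_{\mathfrak{g}})$ --- with the three pieces of the Chevalley--Eilenberg differential matching the $\widehat{\mathscr{O}}_{p}$-module action, the anchor $dt_{p}$, and the bracket --- now carrying coefficients in the remaining tensor factor $\mathrm{Sym}(V_{p}^{\vee})$, on which $\mathfrak{g}_{p}$ acts through its $\mathfrak{g}$-summand (with $T_{p}M[-1]$ acting by zero). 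This gives $C^{\bullet}(\mathfrak{g}, \widehat{\mathscr{O}}_{p} \otimes \mathrm{Sym}(V_{p}^{\vee})) \cong C^{\bullet}(\mathfrak{g}_{p}, \mathrm{Sym}(V_{p}^{\vee}))$. Finally, for any dg Lie algebra $\mathfrak{h}$ and $\mathfrak{h}$-module $W$ the underlying graded commutative algebra of $C^{\bullet}(\mathfrak{h}, W)$ is just $C^{\bullet}(\mathfrak{h}) \otimes W$ --- the module structure enters only through the differential --- which yields the last claimed isomorphism of graded rings.

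The main obstacle is not any single calculation but the careful bookkeeping: making precise ``functions on a formal quotient $=$ Chevalley--Eilenberg cochains with coefficients'' in the infinite-dimensional setting (completed projective tensor products, as flagged in the remark after Example \ref{metdgla2}), and staying consistent about where completions live --- power series in the $\widehat{M}_{p}$ and $\mathfrak{g}^{\vee}[-1]$ directions but polynomials in the fiber direction $V_{p}^{\vee}$. Once that is fixed, the content is a repackaging of Lemma \ref{CECs} together with the definition of Chevalley--Eilenberg cochains with coefficients; the only genuine check is that the three differentials (module action, anchor $dt_{p}$, and bracket) assemble correctly, which is the same bracketing exercise left open in Lemma \ref{CECs}.
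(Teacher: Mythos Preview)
Your proposal is correct and follows essentially the same approach as the paper: identify $\mathscr{O}(\hat{p}^{*}V) \cong \widehat{\mathscr{O}}_{p} \otimes \mathrm{Sym}(V_{p}^{\vee})$, recognize functions on the formal quotient as derived $\mathfrak{g}$-invariants (i.e.\ Chevalley--Eilenberg cochains with coefficients), and then invoke Lemma \ref{CECs} to repackage in terms of $\mathfrak{g}_{p}$, checking that the three pieces of the CE differential $[-,-]^{\vee}_{\mathfrak{g}} + \tau_{M_{p}}^{\vee} + \tau_{V_{p}}^{\vee}$ match on both sides. Your write-up is in fact more careful than the paper's about where completions live and how $\mathfrak{g}_{p}$ acts on the fiber factor, but the architecture is identical.
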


\begin{proof}
The definition of $[(\hat{p}^{*}V)/\mathfrak{g}]$ implies that $\mathscr{O}([(\hat{p}^{*}V)/\mathfrak{g}])$ must be the derived $\mathfrak{g}$-invariant functions on the space $\widehat{M}_{p} \times V_{p}$. Given that $\mathscr{O}(\widehat{M}_{p} \times V_{p}) = \widehat{\mathscr{O}}_{p} \otimes \mathrm{Sym}(V_{p}^{\vee})$ and that both parts of this tensor product are $\mathfrak{g}$-modules, we can define the CE cochains $C^{\bullet}(\mathfrak{g}, \widehat{\mathscr{O}}_{p} \otimes \mathrm{Sym}(V_{p}^{\vee}))$. In conjunction with Lemma \ref{CECs}, these are the derived $\mathfrak{g}$-invariant functions we are looking for. To see that the differential graded rings are isomorphic, we simply note that the CE differential on both is
$$
d_{CE} = [-,-]^{\vee}_{\mathfrak{g}} + \tau_{M_{p}}^{\vee} + \tau_{V_{p}}^{\vee},
$$
where $\tau_{M_{p}}^{\vee}$ and $\tau_{V_{p}}^{\vee}$ are the ``duals" (as in Appendix \ref{findim}) to the induced actions $\tau_{M_{p}}$ and $\tau_{V_{p}}$ on $\widehat{\mathscr{O}}_{p}$ and $\mathrm{Sym}(V_{p}^{\vee})$, respectively. 
\end{proof}

\section{Field Theories as Bundles over Formal Stacks}\label{the main result}

\subsection{Equivariant Observables}

Next we shall consider how $Q_{g}$ and more generally $\{S_{g},-\}$ behave under \textit{arbitrary} perturbations $g + \varepsilon h$, for $h \in T_{g} \mathscr{M}$, and then invoke that $h = L_{V}g$ comes from a vector field $V$ to see what the effect is. But we already have the machinery to do this! The preceding sentence amounts to pulling back the dg vector or $L_{\infty}$ bundle $(\mathscr{F},\{S,-\}) \to \mathscr{M}$ to be over the formal neighborhood of $g \in \mathscr{M}$, and seeing what the ``full differential" $\{S_{g + \varepsilon h},-\}$ looks like over this formal neighborhood. 

\begin{rmk}\label{assurance}
Although in finite dimensions we have $\widehat{G} \overset{\textrm{exp}}{\cong}
 \mathfrak{g}$, we mentioned in Example \ref{metdgla} that it was no longer the case that there is a bijection between the formal neighborhood of the identity diffeomorphism in $\mathrm{Diff}(X)$ and its Lie algebra $\mathrm{Vect}(X)$ of vector fields: this could ostensibly be cause for alarm. However, by Lurie's Theorem \ref{deformationtheory} it remains true even in the infinite dimensional case that the dg Lie algebra $\mathfrak{g}_{g}$ introduced earlier is in correspondence with the formal neighborhood of $[g] \in [\mathscr{M/D}]$. This will be an assurance in what follows.
\end{rmk}

\begin{const}
A family $(\mathscr{F}, \{S,-\})$ of BV field theories defined as a dg vector or $L_{\infty}$ bundle over $\mathscr{M}$ pulls back to an appropriate bundle over the formal neighborhood of $g \in \mathscr{M}$, denoted $\widehat{\mathscr{M}}_{g}$, where $\mathscr{O}(\widehat{\mathscr{M}}_{g}) = \widehat{\mathscr{O}}_{g} \cong \widehat{\mathrm{Sym}}(T_{g}^{\vee}\mathscr{M})$. Heuristically, this pullback looks like $\widehat{\mathscr{M}}_{g} \times \mathscr{F}_{g}$. 

We get an analogous pullback of stacks when the theory is generally covariant. In this case, the $\mathscr{D}$-equivariant bundle $(\mathscr{F},\{S,-\}) \to \mathscr{M}$ is equivalent to a bundle of stacks $([\mathscr{F}/\mathscr{D}], \{S,-\}) \to [\mathscr{M}/\mathscr{D}]$. If we consider an equivalence class of metrics $[g] \in [\mathscr{M}/\mathscr{D}]$ and fix its formal neighborhood, we can pull back $([\mathscr{F}/\mathscr{D}],\{S,-\})$ over this formal neighborhood. We denote the total space of this pullback as $\widehat{[\mathscr{F}/\mathscr{D}]}_{g}$. We can then conclude:
\begin{prop}\label{themainthm}
For a generally covariant family $\pi: ([\mathscr{F}/\mathscr{D}], \{S,-\}) \to [\mathscr{M}/\mathscr{D}]$ of BV classical field theories and for a fixed $[g] \in [\mathscr{M}/\mathscr{D}]$, we have
\begin{equation}
\mathscr{O}(\widehat{[\mathscr{F}/\mathscr{D}]}_{g}) \cong C^{\bullet}(\mathfrak{g}_{g}, \mathrm{Obs}^{\mathrm{cl}}(X, \mathscr{F}_{g})). 
\end{equation}
\end{prop}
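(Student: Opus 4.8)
The plan is to combine the previous lemma on vector bundles over a formal stack with the identification, established earlier, of the differential graded fields $\mathscr{F}_{g}$ as the $(-1)$-shifted cotangent bundle $T^{\vee}[-1]\mathscr{F}_{g}^{0}$ and of $\mathrm{Obs}^{\mathrm{cl}}(X,\mathscr{F}_{g})$ with the functions on its derived critical locus. First I would unwind the pullback: since $([\mathscr{F}/\mathscr{D}],\{S,-\}) \to [\mathscr{M}/\mathscr{D}]$ is, by Proposition \ref{GCforstacks}, the descent of the $\mathscr{D}$-equivariant bundle $(\mathscr{F},\{S,-\}) \to \mathscr{M}$, pulling back over the formal neighborhood of $[g]$ is the same as pulling the $\mathscr{D}$-equivariant bundle back over the formal neighborhood $\widehat{\mathscr{M}}_{g}$ and then taking the quotient stack by the formal group. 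Heuristically the total space is $\widehat{\mathscr{M}}_{g} \times \mathscr{F}_{g}$, equipped with the action of $\widehat{\mathscr{D}} \cong \mathrm{Vect}(X)$ coming from $\tau_{\mathscr{M}}$ on the base and $\tau_{\mathscr{F}}$ on the fibers (Lemmas \ref{pert} and \ref{polynomialinteraction}), together with the differential $\{S_{g+\varepsilon h},-\}$ which is itself a section of the bundle varying over $\widehat{\mathscr{M}}_{g}$.

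Next I would apply the preceding lemma on $\mathscr{O}([(\hat{p}^{*}V)/\mathfrak{g}])$, with $p = g$, $G = \mathscr{D}$, $\mathfrak{g} = \mathrm{Vect}(X)$, and $V = \mathscr{F}$. That lemma gives, as a graded ring, $C^{\bullet}(\mathfrak{g}_{g}) \otimes \mathrm{Sym}(\mathscr{F}_{g}^{\vee}) = C^{\bullet}(\mathfrak{g}_{g}) \otimes \mathscr{O}(\mathscr{F}_{g})$, which is exactly $C^{\bullet}(\mathfrak{g}_{g}, \mathscr{O}(\mathscr{F}_{g}))$ as a graded object; here I would invoke the functional-analytic remark that all tensor products are the completed projective ones, so the formula is well posed in the Fréchet setting. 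The content beyond the graded identification is the differential. I would then argue that the total differential on $C^{\bullet}(\mathfrak{g}_{g},\mathscr{O}(\mathscr{F}_{g}))$ decomposes as the sum of three pieces: the Chevalley--Eilenberg differential of $\mathfrak{g}_{g}$ (encoding the bracket on $\mathrm{Vect}(X)$ and the anchor $L_{\bullet}g$, together with the induced action $\tau_{\mathscr{F}_{g}}^{\vee}$ on $\mathscr{O}(\mathscr{F}_{g})$), and the BV differential $\{S_{g},-\}$ on $\mathscr{O}(\mathscr{F}_{g}) = \mathrm{Obs}^{\mathrm{cl}}(X,\mathscr{F}_{g})$. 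The key point is that these assemble precisely into the Chevalley--Eilenberg differential of $\mathfrak{g}_{g}$ with coefficients in the dg module $\mathrm{Obs}^{\mathrm{cl}}(X,\mathscr{F}_{g})$, which is what the right-hand side $C^{\bullet}(\mathfrak{g}_{g}, \mathrm{Obs}^{\mathrm{cl}}(X,\mathscr{F}_{g}))$ denotes.

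The main obstacle, and the step deserving real care, is verifying that $\mathrm{Obs}^{\mathrm{cl}}(X,\mathscr{F}_{g}) = (\mathscr{O}(\mathscr{F}_{g}),\{S_{g},-\})$ really is a dg $\mathfrak{g}_{g}$-module in a way compatible with the pullback differential, i.e. that the $\mathrm{Vect}(X)$-action commutes appropriately with $\{S_{g},-\}$ up to the terms accounting for the $g$-dependence of $S_{g}$. This is exactly the general covariance hypothesis: the equivariance squares of Lemmas \ref{scalequiv} and \ref{polynomialinteraction} say that $f^{*}$ intertwines $Q_{g}$ with $Q_{f^{*}g}$ (and likewise for the higher brackets), and differentiating this relation at the identity produces the statement that the infinitesimal action $\tau_{\mathscr{F}_{g}}$ together with the variation $h \mapsto \partial_{h}\{S_{g},-\}$ along $h = L_{V}g$ forms a compatible pair — precisely the structure needed for the CE differential with coefficients to square to zero. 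I would phrase this as: general covariance is equivalent to $\mathrm{Obs}^{\mathrm{cl}}(X,\mathscr{F}_{g})$ being a dg module over $\mathfrak{g}_{g}$ (in the sense that the action map $\mathfrak{g}_{g} \otimes \mathrm{Obs}^{\mathrm{cl}} \to \mathrm{Obs}^{\mathrm{cl}}$ is a cochain map with respect to $\{S_{g},-\}$ after accounting for the base dependence), and then the claimed isomorphism is the definition of Chevalley--Eilenberg cochains with values in that module, specialized to the formal/Fréchet setting via Construction \ref{local} and the preceding lemma. I would close by remarking that when the theory is merely free (so $\mathscr{F}_{g}$ is a dg vector bundle rather than a bundle of $L_{\infty}$ algebras) the same argument applies verbatim with $\{S_{g},-\} = Q_{g}$.
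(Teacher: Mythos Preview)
Your proposal is correct and follows essentially the same route as the paper: identify the underlying graded ring via the lemma on vector bundles over a formal stack (with $p=g$, $\mathfrak{g}=\mathrm{Vect}(X)$, $V=\mathscr{F}$), and then check that the total differential is $[-,-]^{\vee}_{\mathrm{Vect}(X)} + \tau_{\mathscr{M}_{g}}^{\vee} + \tau_{\mathscr{F}_{g}}^{\vee} + \{S_{g},-\}$, which is precisely the CE differential of $\mathfrak{g}_{g}$ with coefficients in $\mathrm{Obs}^{\mathrm{cl}}(X,\mathscr{F}_{g})$. If anything you are more explicit than the paper about \emph{why} general covariance furnishes the dg $\mathfrak{g}_{g}$-module structure on $\mathrm{Obs}^{\mathrm{cl}}$; the paper's proof simply asserts the differential match and defers the compatibility discussion (the $L_{\infty}$ action via $S^{\mathfrak{g}_{g}}$) to the remarks immediately following the proposition.
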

\begin{proof}
By the equivalence of categories in Theorem \ref{deformationtheory} which we are taking for granted, the formal moduli space defined by the formal neighborhood of $[g] \in [\mathscr{M/D}]$ is equivalent to the dg Lie algebra 
	$$
	\mathfrak{g}_{g} = \Gamma(X, T_{X}) \xrightarrow{L_{\bullet}g} \Gamma(X, \mathrm{Sym}^{2}(T^{\vee}_{X}))[-1].
	$$
The dg algebra of functions on this formal neighborhood is thus
	$$
	C^{\bullet} (\mathfrak{g}_{g}) \cong C^{\bullet}(\mathrm{Vect}(X), \widehat{\mathrm{Sym}}(T_{g}^{\vee}\mathscr{M})).
	$$
The ring of functions on the fiber part of the pullback is simply $C^{\bullet}(\mathrm{Vect}(X), \mathrm{Obs}^{\mathrm{cl}}(X, \mathscr{F}_{g}))$, since it is $\mathscr{O}(\mathscr{F}_{g})$ with the differential $\{S_{g},-\}$ and the implicit action of $\mathrm{Vect}(X)$ on the theory and thus on its observables. Hence, the underlying dg ring of functions on $\widehat{[\mathscr{F}/\mathscr{D}]}_{g}$ is the underlying dg ring of $C^{\bullet}(\mathfrak{g}_{g}, \mathrm{Obs}^{\mathrm{cl}}(X, \mathscr{F}_{g}))$. Both dg rings have Chevalley-Eilenberg differential 
	\begin{equation}
d_{CE} = [-,-]^{\vee}_{\mathrm{Vect}(X)} + \tau_{\mathscr{M}_{g}}^{\vee} + \tau_{\mathscr{F}_{g}}^{\vee} + \{S_{g}, - \}.
\end{equation}
Here, the first three terms are the usual Chevalley-Eilenberg differential concerned with the dual of the bracket on $\mathrm{Vect}(X)$ and the actions of $\mathrm{Vect}(X)$ on $\widehat{\mathrm{Sym}}(T_{g}^{\vee}\mathscr{M})$ and $\mathscr{O}(\mathscr{F}_{g})$, and the fourth term is the differential on the free field theory over $[\widehat{\mathscr{M}}_{g}/\mathrm{Vect}(X)]$. Since the underlying rings agree and the CE differentials do, too, this gives the result.
\end{proof}

\begin{rmk}
	As in the case of ordinary manifolds, the ring of functions on the bundle is a module over the ring of functions on the base space. In fact, the veracity of the above claim can almost be taken as a definition: in the case where we treat the BV field theory $\mathscr{F}$ perturbatively, Proposition \ref{themainthm} simply computes the function ring of the formal moduli problem representing perturbative fields parameterized by a formal neighborhood in $[\mathscr{M/D}]$. As it stands, the statement also includes polynomial functions of nonperturbative free fields along the fibers.\footnote{The fibers thus constitute a ``non-formal" part of the total moduli problem.}

	Proposition \ref{themainthm} implicitly supplies a natural action of $\mathfrak{g}_{g}$ on $\mathrm{Obs}^{\mathrm{cl}}(X, \mathscr{F}_{g})$: this allows us to conclude that Noether's Theorem as presented in Theorem 12.4.1 of \cite{cosgwill2} applies.  We have not given the precise details of a ``full" $L_{\infty}$ action of $\mathfrak{g}_{g}$ on $\mathrm{Obs}^{\mathrm{cl}}(X, \mathscr{F}_{g})$, but its existence is implicit in the theorem: $\{S_{g},-\}$ as part of $d_{CE}$ above contains information about the formal neighborhood of $[g] \in [\mathscr{M/D}]$. We will provide a thorough description of how this goes momentarily. 
\end{rmk}
\end{const}

\begin{rmk}[Further remarks on functoriality]\label{morefunctors}
Before getting into explicit computations, we would like to mention in the vein of Subsection \ref{functors} that because $\mathfrak{g}_{g}$ is a sheaf on $\mathbf{Riem}_{n}$ (its diffeomorphism equivariance can quickly be checked), the equivariant observables similarly define a factorization algebra, as in Proposition \ref{functorial}:
$$
C^{\bullet}(\mathfrak{g}_{g}(-), \mathrm{Obs^{cl}}(-,\mathscr{F})) : \mathbf{Riem}_{n} \to \mathbf{dgVect}.
$$	
This provides yet another factorization algebra when evaluated on the site of Riemannian manifolds. Thus, considering the stacky geometry of $[\mathscr{M/D}]$ for a fixed $X$ and invoking $\mathbf{Riem}_{n}$-naturality after the fact once again provides an interesting construction (and generalization) of objects introduced, for example, in \cite{cosgwill2}, while simultaneously opening avenues of comparison to prevailing literature. 
\end{rmk}

\noindent \textbf{Note:} From here on out, we will be treating both free\footnote{Any dg Lie algebra is an $L_{\infty}$ algebra where the only nontrivial bracket is $\ell_{1}$.} and interacting theories as $L_{\infty}$ algebras, since the reliance on $L_{\infty}$ structures for defining actions becomes more important. In practice, this means we will be using $\mathscr{L}$ (alias $\mathscr{F}[-1]$) to denote the fields. 

\begin{rmk}
	Recall that when $\mathfrak{g}$ is a Lie algebra and $R$ is a $\mathfrak{g}$-module, $H^{0}(C^{\bullet}(\mathfrak{g}, R)) = R^{\mathfrak{g}}$, the $\mathfrak{g}$-invariants of $R$. Analogously, albeit with slightly more care to compute, we have:
	\begin{equation}
		H^{0}(C^{\bullet}(\mathfrak{g}_{g}, \mathrm{Obs}^{\mathrm{cl}}(X, \mathscr{L}_{g}))) = \{ F \in \mathrm{Obs}^{\mathrm{cl}}(X, \mathscr{L}_{g}) : F(g+\varepsilon L_{V}g) - F(g) = 0 \}.
	\end{equation}
	A prime example of such an $F$ is the action functional $S_{g}$ of any generally covariant theory. Moreover, if $V$ is a Killing field for $g$, the equality in the conditions on the right side above holds trivially: this is a shadow of the fact that a moduli stack ``remembers" stabilizers where the coarse quotient would not.
	
	Although it is meaningful (and a good sanity check) to compute cohomology groups, we stick to Noether's philosophy of focusing on the cochain complexes themselves. In our case, this means understanding what the equivariant observables are providing. There is a guiding definition which, when unpacked carefully, tells us the value of what we found above:
\end{rmk}

\begin{defn}[Definition 12.2.12 in \cite{cosgwill2}]\label{Linfinityaction}
	For $\mathfrak{g}$ a dg Lie or $L_{\infty}$ algebra and $\mathscr{L}$ an (elliptic) $L_{\infty}$ algebra encoding a Batalin-Vilkovisky classical field theory, an \textbf{action of $\mathfrak{g}$ on $\mathscr{L}$} is any of the following data: 
	(i) An $L_{\infty}$ structure on $\mathfrak{g} \oplus \mathscr{L}$ such that the exact sequence
	\begin{equation}
		\mathscr{L} \to \mathfrak{g} \ltimes \mathscr{L} \to \mathfrak{g}
	\end{equation}
	is a sequence of maps of $L_{\infty}$ algebras, the structure maps $\mathfrak{g}^{\otimes n} \otimes \mathscr{L}^{\otimes m} \to \mathscr{L}$ are polydifferential operators on the $\mathscr{L}$-variables, and the action preserves the pairing $\omega$. \\
	(ii) An $L_{\infty}$ morphism $\mathfrak{g} \to C^{\bullet}_{\mathrm{loc}}(\mathscr{L})[-1]$. \\
	(iii) A degree 1 element $S^{\mathfrak{g}}$ in the dg Lie algebra 
	$$
	\mathrm{Act}(\mathfrak{g},\mathscr{L}) := C^{\bullet}_{\mathrm{red}}(\mathfrak{g}) \otimes C^{\bullet}_{\mathrm{loc}}(\mathscr{L})[-1]
	$$
	which satisfies the Maurer-Cartan equation 
	\begin{equation}
		d_{\mathfrak{g}}S^{\mathfrak{g}} + d_{\mathscr{L}}S^{\mathfrak{g}} + \frac{1}{2} \{S^{\mathfrak{g}},S^{\mathfrak{g}}\} = 0:
	\end{equation}
	this can be viewed as an \textit{equivariant} classical master equation. 
\end{defn}

\begin{rmk}
	By $C^{\bullet}_{\mathrm{loc}}(\mathscr{L})$ above we mean observables for $\mathscr{L}$ that are local in the sense of Definition \ref{locals}: $C^{\bullet}_{\mathrm{loc}}(\mathscr{L})[-1]$ is the formal moduli version of symplectic vector fields, which control symmetries and deformations of a classical field theory. $C^{\bullet}_{\mathrm{red}}(\mathfrak{g})$ is defined as the kernel of the augmentation map $C^{\bullet}(\mathfrak{g}) \to \mathbf{R}$.\footnote{Thorough details concerning these two rings is provided in Chapters 3 and 4 of \cite{cosgwill2}.}
	
	Moreover, since $S^{\mathfrak{g}}$ is local in the fields $\mathscr{L}$, it defines a derivation of $\mathrm{Obs}^{\mathrm{cl}}(X, \mathscr{L})$ via $\{S^{\mathfrak{g}},- \}$: this is precisely what is used to define the action of $\mathfrak{g}_{g}$ on $\mathrm{Obs}^{\mathrm{cl}}(X, \mathscr{L}_{g})$ when computing the equivariant classical observables in Proposition \ref{themainthm}.
\end{rmk}

\begin{rmk}\label{equivariantfunctional}
	The facet of the preceding definition we will hone in on is the third one: finding a functional $S^{\mathfrak{g}}$ which satisfies an equivariant classical master equation provides a concrete computational representation of the action of $\mathfrak{g}$ on $\mathscr{L}$ and a more complete picture of the Chevalley-Eilenberg description of how the formal moduli stack acts on the theory.
	
	We would like to encode both deformations by $h \in T_{g}\mathscr{M} = \Gamma(X, \mathrm{Sym}^{2}(T_{X}^{\vee}))$ and an action of vector fields $V \in \mathrm{Vect}(X)$ on the BV field theory, since these are the degree 1 and 0 parts (respectively) of the dg Lie algebra $\mathfrak{g}_{g}$ representing the formal neighborhood of $g$ as an element of the stack $[\mathscr{M/D}]$. Any action functional $S_{g}$ for a BV theory can be written as 
	$$
	S_{g}(\phi) = \int_{X}\phi D_{g}(\phi),
	$$
	where the differential operator may be a nonlinear function in $\phi$. Denoting the antifields for the theory as $\psi$, we thus posit the following:
	\begin{equation}\label{stressenergytensor}
		S^{\mathfrak{g}_{g}} = \int_{X} \phi \big(D_{g+\varepsilon h} - D_{g} \big)(\phi) + \int_{X}(L_{V}\phi) \psi.
	\end{equation}
	On the right side, we interpret $D_{g+\varepsilon h}$ as a formal power series in the metric perturbation $h$, and by $L_{V}$ we mean the ``natural action" of vector fields on the fields, which are usually tensorial in nature (hence the notation). In accordance with (ii) in Definition \ref{Linfinityaction}, the $L_{\infty}$ morphism $\mathfrak{g} \to C^{\bullet}_{\mathrm{loc}}(\mathscr{L})[-1]$ is thus given by sending $(V,h) \in \mathfrak{g}_{g}$ to:
	$$
	\{S^{\mathfrak{g}_{g}},-\} \in C^{\bullet}_{\mathrm{loc}}(\mathscr{L})[-1] \subset\ \mathrm{Obs}^{\mathrm{cl}}(X, \mathscr{L}_{g})[-1],
	$$
	where the latter is interpreted as symplectic vector fields on $B\mathscr{L}$, the formal derived critical locus as seen in Remark \ref{derivedcriticallocus}. By means of general covariance, which implies that either the dg or $L_{\infty}$ structure prescribed by $D_{g}(\phi)$ is diffeomorphism equivariant, $S^{\mathfrak{g}_{g}}$ satisfies the necessary classical master equation. Strictly speaking, what we need in the preceding is equivariance with respect to the action by vector fields: in the case of the free scalar field, this comes from the fact that the Laplace-Beltrami operator satisfies (modulo $\varepsilon^{2}$)
	$$
	\Delta_{g+ \varepsilon L_{V}g} = \Delta_{g} + \varepsilon[L_{V},\Delta_{g}].\footnote{Details for this are provided in Equation (\ref{infgencov}) and the surrounding commentary below.}
	$$
\end{rmk}

\begin{rmk}
	What we have presented so far allows us to provide a more precise and meaningful expression for the Chevalley-Eilenberg differential for $C^{\bullet}(\mathfrak{g}_{g}, \mathrm{Obs}^{\mathrm{cl}}(X, \mathscr{L}_{g}))$ in Proposition \ref{themainthm}:
	\begin{equation}
	d_{CE}	= [-,-]^{\vee}_{\mathrm{Vect}(X)} + \{S^{\mathfrak{g}_{g}},-\} + \{S_{g},-\}.
	\end{equation}
This provides the usual dual to the action of vector fields on themselves; however, we have here the $L_{\infty}$ action of $\mathfrak{g}_{g}$ on the observables as well as the usual differential $\{S_{g},-\}$ of the observables on their own. If we consider, for example, the interacting scalar field with polynomial potential as in Example \ref{pertcon1}, the latter two brackets above would combine to result in bracketing with:
\begin{equation}
	S^{\mathrm{tot}} := S_{g} + S^{\mathfrak{g}_{g}} = \int_{X} \varphi Q_{g + \varepsilon h}\varphi + \sum_{n \geq 2} \int_{X} \frac{\lambda_{n}}{n!} \varphi^{n} \mathrm{vol}_{g+ \varepsilon h} + \int_{X} (L_{V}\varphi) \psi.
\end{equation}
In Definitions 1 and 2 of \cite{getzler}, Getzler defines covariance by supplying something like $S^{\mathrm{tot}}$ above and demanding that it satisfy a Maurer-Cartan equation for a \textit{curved} Lie (super) algebra: I would be curious to see how the connection between the two could be made completely precise.
\end{rmk}

To expound more on all of the above, we must introduce the \textit{stress-energy tensor}.

\subsection{The Stress-Energy Tensor}\label{stressenergymomentum} We'd like to consider the stress-energy tensor for the free scalar BV theory: its generalization to the polynomial self-interaction in Lemma \ref{polynomialinteraction} is computationally simple. This section is intended to see how an example of Definition \ref{Linfinityaction} plays out as well as connect between the above work to how things are ``usually done" in physics.

To begin, let us consider an arbitrary first order deformation of the Laplacian $\Delta_{g}$ on a Riemannian manifold $(X,g)$: in other words, let $g_{t}$ be a one-parameter family of metrics such that $g_{0} = g$ and let us compute
$$
\frac{d}{d t}\Delta_{g_{t}} \varphi \Big|_{t=0}. 
$$
Writing $\Delta_{g_{t}}$ in coordinates and not evaluating at $t=0$ for now, we have:
\begin{equation}\label{difflap}
\frac{d}{dt}\Big(\frac{1}{\sqrt{\mathrm{det}g_{t}}} \partial_{\mu}(\sqrt{\mathrm{det}g_{t}} g_{t}^{\mu\nu}\partial_{\nu}\varphi) \Big).
\end{equation}
Recall that for a one-parameter family of invertible matrices $A(t)$, we have
$$
\frac{d}{dt}\mathrm{det}A(t) = \Tr (A(t)^{-1}A'(t))\mathrm{det}A(t).
$$
Using this and a few other manipulations, expression (\ref{difflap}) reduces to 
\begin{equation}\label{deriv}
	\frac{-1}{2}\Tr(g_{t}^{-1}\partial_{t}g_{t})\Delta_{g_{t}}\varphi + \frac{1}{\sqrt{\mathrm{det}g_{t}}}\partial_{\mu}\Big( \frac{\sqrt{\mathrm{det}g_{t}}}{2}\Tr(g_{t}^{-1}\partial_{t}g_{t})g_{t}^{\mu\nu}\partial_{\nu}\varphi +  \sqrt{\mathrm{det}g_{t}}\partial_{t}g_{t}^{\mu\nu}\partial_{\nu}\varphi \Big).
\end{equation}
Denote the derivative of $g_{t}$ at $g_{0} = g$ as  $\delta g := \partial_{t}g_{t}|_{t=0}$ (this is the traditional notation in physics, although we could call it $h$ as a degree 1 element of $\mathfrak{g}$). Evaluating at $t=0$ gives:

\begin{lem}\label{derlaplace}
	The first order deformation of the Laplacian $\Delta_{g}$ with respect to the metric $g$ is
	\begin{equation}\label{Lapdef}
\frac{d}{d t}\Delta_{g_{t}} \varphi \Big|_{t=0} = \frac{-1}{2}\Tr(g^{-1}\delta g)\Delta_{g}\varphi + \frac{1}{\sqrt{\mathrm{det}g}}\partial_{\mu} \Big(\sqrt{\mathrm{det}g}\big(\frac{1}{2} \Tr(g^{-1}\delta g) g^{\mu\nu}\partial_{\nu} \varphi + \delta g^{\mu\nu}\partial_{\nu}\varphi\big) \Big).
	\end{equation}
Moreover, if we assume the deformation $\delta g \in T_{g}\mathscr{M}$ is induced by an isometry of $g$, then the first order deformation of the Laplacian is identically zero. 
\end{lem}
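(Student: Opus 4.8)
The plan is to dispatch the two assertions in turn. For the closed form (\ref{Lapdef}) the computation is already staged at (\ref{difflap}): I would differentiate that coordinate expression by the product rule, using only the two standard identities $\frac{d}{dt}\sqrt{\det g_{t}} = \tfrac12\Tr(g_{t}^{-1}\partial_{t}g_{t})\sqrt{\det g_{t}}$ (the square-root form of the determinant-derivative formula recalled in the text, which also yields $\frac{d}{dt}(\det g_{t})^{-1/2} = -\tfrac12\Tr(g_{t}^{-1}\partial_{t}g_{t})(\det g_{t})^{-1/2}$) and $\partial_{t}g_{t}^{\mu\nu} = -g_{t}^{\mu\alpha}(\partial_{t}g_{t,\alpha\beta})g_{t}^{\beta\nu}$ (from differentiating $g_{t}^{\mu\alpha}g_{t,\alpha\nu} = \delta^{\mu}_{\nu}$). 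The derivative of (\ref{difflap}) then breaks into exactly three contributions, according to whether $\partial_{t}$ falls on the outer $1/\sqrt{\det g_{t}}$, on the $\sqrt{\det g_{t}}$ inside the divergence, or on $g_{t}^{\mu\nu}$; collecting the last two under a single $\partial_{\mu}(\cdots)$ produces (\ref{deriv}), and setting $t=0$ and writing $\delta g := \partial_{t}g_{t}|_{t=0}$ gives (\ref{Lapdef}). This is routine bookkeeping; the only point requiring attention is keeping track of which $\sqrt{\det g_{t}}$ factor is being differentiated and remembering that $\delta g^{\mu\nu} = -g^{\mu\alpha}(\delta g)_{\alpha\beta}g^{\beta\nu}$, not the naive coordinatewise reciprocal.

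For the ``moreover'' I would read ``$\delta g$ induced by an isometry of $g$'' as $\delta g = L_{V}g$ with $V$ a Killing field of $g$ --- equivalently, $V$ lies in $H^{-1}(\mathbf{T}_{g}[\mathscr{M}/\mathscr{D}]) = \mathrm{Lie}(\mathrm{Stab}(g))$ in the notation of the Construction around (\ref{tancpx}). By definition of a Killing field $L_{V}g = 0$, so $\delta g = 0$; and since every term on the right-hand side of (\ref{Lapdef}) is manifestly \emph{linear} in the $(0,2)$-tensor $\delta g$ --- it enters only through $\Tr(g^{-1}\delta g)$ and through $\delta g^{\mu\nu} = -g^{\mu\alpha}(\delta g)_{\alpha\beta}g^{\beta\nu}$ --- the whole expression vanishes identically. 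Making this linearity observation explicit is what renders the vanishing immediate, so I would foreground it.

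I would also record the conceptual reformulation, since it is the link to the surrounding discussion in Section \ref{stressenergymomentum}. Because the first-order deformation depends only on $g$ and $\delta g$, it may be evaluated along the particular curve $g_{t} = \phi_{t}^{*}g$ with $\phi_{t} = \exp(tV)$; the general-covariance relation $\Delta_{f^{*}g}(f^{*}\varphi) = f^{*}(\Delta_{g}\varphi)$ of Remark \ref{scalarcovariance} gives $\Delta_{g_{t}}\varphi = \phi_{t}^{*}\big(\Delta_{g}(\phi_{-t}^{*}\varphi)\big)$, and differentiating at $t=0$ yields the operator commutator $[L_{V},\Delta_{g}]\varphi$ --- precisely the $\varepsilon$-linear term in $\Delta_{g+\varepsilon L_{V}g} = \Delta_{g} + \varepsilon[L_{V},\Delta_{g}]$ from Remark \ref{equivariantfunctional}. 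When $V$ is Killing the $\phi_{t}$ are isometries, so $\phi_{t}^{*}g = g$ and $\Delta_{g_{t}} \equiv \Delta_{g}$ is constant in $t$, whence the derivative is zero; equivalently $[L_{V},\Delta_{g}] = 0$, the classical fact that the Laplace--Beltrami operator commutes with isometries.

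The hard part will not be analytic but interpretive: the real content is that (\ref{Lapdef}) is linear in $\delta g$ and that ``isometry'' must be taken to mean ``Killing'' (so that $\delta g = 0$), together with the consistency of this with the operator identity $\frac{d}{dt}\Delta_{g_{t}}|_{t=0} = [L_{V},\Delta_{g}]$. I would therefore be careful in the write-up to state the linearity remark first and then present the commutator viewpoint as the bridge to the stress-energy computations that follow.
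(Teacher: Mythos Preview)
Your proposal is correct and follows essentially the same route as the paper: the formula (\ref{Lapdef}) is obtained exactly as the text does it, by applying the product rule to the coordinate expression (\ref{difflap}) with the determinant-derivative identity, arriving at (\ref{deriv}) and then evaluating at $t=0$. Your handling of the ``moreover'' clause---observing that (\ref{Lapdef}) is linear in $\delta g$ and that a Killing $V$ gives $\delta g = L_{V}g = 0$---is the intended reading, and the paper does not spell this out separately; your added commutator reformulation $[L_{V},\Delta_{g}]$ anticipates material the paper develops later around Remark \ref{equivariantfunctional} and Lemma \ref{perturbfields}, so it is a welcome bridge rather than a departure.
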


\begin{rmk}
Often in the physics literature, we write
$$
\Tr(g^{-1}\delta g) = g^{\mu\nu}\delta g_{\mu\nu},
$$
and we may occasionally adopt that notation. Moreover, the above computation is done with the action functional (\ref{scalfunctional}) in mind. Thus, any difference with the stress-energy tensor computations using the functional
\begin{equation}\label{physfunc}
\int_{X} g^{\mu\nu}\partial_{\mu}\varphi\partial_{\nu}\varphi \mathrm{vol}_{g} = \int_{X} g^{\mu\nu}\partial_{\mu}\varphi\partial_{\nu}\varphi\sqrt{\mathrm{det}g} d^{n}x,
\end{equation}
which is just as common in the physics literature, differs only by boundary terms.
\end{rmk}

First, we shall provide a general definition of the stress-energy tensor for any theory.

\begin{const}
	Let $S_{g} \in C^{\bullet}(\mathscr{L}_{g})$ be an action functional for the dg space of BV fields $\mathscr{L}_{g}$ which depends on a fixed background metric $g \in \mathscr{M}$. It can thus be written as
	$$
	S_{g}(\phi) = \int_{X} L_{g}(\phi),
	$$
	where $\phi \in \mathscr{L}$ and $L_{g}(\phi)$ is a Lagrangian density.  If we let $g_{t}$ be a one-parameter family of metrics such that $g_{0} = g$, we can perform computations similar to those in Lemma \ref{derlaplace} to compute:
	\begin{equation}\label{funcder}
	\frac{\delta}{\delta g} S_{g}(\phi) := \frac{d}{dt}S_{g_{t}}(\phi) \Big|_{t=0} = \int_{X} \frac{d}{dt}L_{g_{t}}(\phi) \Big|_{t=0}.
	\end{equation}
	The notation invoked on the left hand side is common in physics literature, and defined this way in \cite{wald}. Up to boundary terms which we can safely ignore, (\ref{funcder}) can be written as 
	\begin{equation}\label{dgT}
	\int_{X} \delta g^{\mu\nu} T_{\mu\nu}(g, \phi) \mathrm{vol}_{g},
	\end{equation}
	for some $T_{\mu\nu}(g, \phi)$ (or simply $T_{\mu\nu})$ which depends on the fields $\phi$ and on the metrics $g$.

	\begin{defn}\label{T}
	$T_{\mu\nu}$ is the \textbf{stress-energy} (or \textbf{energy-momentum}) \textbf{tensor}	of a field theory on $X$ with fields $\phi \in \mathscr{F}$ and action functional $S_{g}$ depending on $g \in \mathrm{Met}(X)$. 
	\end{defn}
\end{const}

\begin{rmk}
Before moving on, we must take note of an important fact: the stress-energy tensor coupled to the metric perturbation as in Equation (\ref{dgT}) is precisely the first order in $\varepsilon$ part of the power series in $h$ found in Equation (\ref{stressenergytensor}).
\end{rmk}

\begin{ex}
	To compute the stress-energy tensor of free massless scalar field, we will begin by noting that according to the definition, we must compute
	$$
	\int_{X} \varphi \frac{d}{dt}Q_{g_{t}}\varphi \Big|_{t=0},
	$$
	where $Q_{g}\varphi = \Delta_{g}\varphi \mathrm{vol}_{g}$. Lemma \ref{derlaplace} is useful, since we have already done the necessary work on the first piece. However, note that $Q_{g}\varphi$ is written in coordinates as 
	$$
	\partial_{\mu}(\sqrt{\mathrm{det}g} g^{\mu\nu}\partial_{\nu}\varphi)d^{n}x,
	$$
	so that we have in fact stripped away some of the complexity of the computation by pairing with the Riemannian volume form. Hence, we can use Lemma \ref{derlaplace} and toss away the first term to get
	\begin{equation}\label{SET}
		\int_{X} \varphi \frac{d}{dt}Q_{g_{t}}\varphi \Big|_{t=0} = \int_{X} \varphi \partial_{\mu} \Big(\sqrt{\mathrm{det}g}\big(\frac{1}{2} \Tr(g^{-1}\delta g) g^{\mu\nu}\partial_{\nu} \varphi + \delta g^{\mu\nu}\partial_{\nu}\varphi\big) \Big)d^{n}x.
	\end{equation}
	This is not yet in the preferred form in (\ref{dgT}), but if we integrate by parts and invoke that 
	$
	\Tr(g^{-1}\delta g) = g^{\mu\nu}\delta g_{\mu\nu} = g_{\mu\nu}\delta g^{\mu\nu},
	$
the above becomes 
	\begin{equation}\label{SET2}
		\int_{X} \delta g^{\mu\nu} \big(-\partial_{\mu}\varphi \partial_{\nu} \varphi - \frac{1}{2}g_{\mu\nu} (g^{\rho\sigma} \partial_{\rho}\varphi \partial_{\sigma}\varphi ) \big) \mathrm{vol}_{g},
	\end{equation}
	where we changed the labelling of indices in the second term to omit confusion. Thus, the stress-energy tensor for our example is $T_{\mu\nu} = -\partial_{\mu}\varphi \partial_{\nu} \varphi - \frac{1}{2}g_{\mu\nu} (g^{\rho\sigma} \partial_{\rho}\varphi \partial_{\sigma}\varphi )$. We would have computed this without any by-parts maneuvers had we started with the action functional (\ref{physfunc}) more common in physics literature, but it is a good exercise to see how these agree.
\end{ex}

The above is the traditional trajectory one takes to finding the stress-energy tensor; however, since our theory is generally covariant and so we can use facts about equivariant vector bundles to simplify things, let us consider what that buys us. To begin, let $f_{t}$ be a one-parameter subgroup of diffeomorphisms. General covariance implies that
\begin{equation}\label{infgencov}
\frac{d}{dt}\int_{X} (f_{t}^{*}\varphi)\Delta_{	f_{t}^{*}g} (f_{t}^{*}\varphi)\mathrm{vol}_{f_{t}^{*}g} \Big|_{t=0} = 0. 
\end{equation}
Unfolding the left hand side, this equation becomes
$$
\int_{X} (L_{V}\varphi)\Delta_{g}\varphi\mathrm{vol}_{g} + \int_{X} \varphi \Delta_{g}(L_{V}\varphi)\mathrm{vol}_{g} + \int_{X} \varphi (\frac{d}{dt}\Delta_{f^{*}_{t}g}\big|_{t=0})\varphi\mathrm{vol}_{g} + \int_{X} \varphi \Delta_{g}\varphi(\frac{d}{dt}\mathrm{vol}_{f^{*}_{t}g}\big|_{t=0}) = 0.
$$
Here, we assumed that $V$ generates the flow $f_{t}$, and used the fact that $\frac{d}{dt}(f^{*}_{t}\varphi)|_{t=0} = L_{V}\varphi$. This equation is an integrated linear approximation to the equivariance property computed in Lemma \ref{scalequiv}: it states concretely that a simultaneous first order perturbation along the $\mathscr{D}$-orbit in $\mathscr{M}$ and in $\mathscr{L}_{g}$ is trivial. 

\begin{rmk}
The third and fourth terms on the left hand side are exactly those that comprise the integral of the stress-energy tensor in the special case that the derivative is computed in the direction of the $\mathscr{D}$-orbit. This grants us two key insights: 

(1) Computationally, the above amounts to the metric perturbation (an element of $T_{g}\mathscr{M}$) coming from an infinitesimal diffeomorphism (i.e. a vector field). But we have seen this before: this is saying that $\delta g  \in T_{g}\mathscr{M}$ is in the image of the differential in the dg Lie algebra $\mathfrak{g}_{g}$ in Example \ref{metdgla2}. Hence, $\delta g^{\mu\nu} = L_{V}g^{\mu\nu}$ (the computation works fine even though $g^{\mu\nu}$ is technically the inverse). With this, Equation (\ref{SET2}) becomes:
$$
\int_{X}L_{V}g^{\mu\nu}T_{\mu\nu} \mathrm{vol}_{g}. 
$$ 
A standard result from Riemannian geometry is that $L_{V}g^{\mu\nu} = \nabla^{\mu}V^{\nu} + \nabla^{\nu}V^{\mu}$, and since $T_{\mu\nu}$ is symmetric by definition, the above must be
$$
\int_{X} (\nabla^{\mu}V^{\nu})T_{\mu\nu} \mathrm{vol}_{g} = -\int_{X} V^{\nu}(\nabla^{\mu}T_{\mu\nu}) \mathrm{vol}_{g},
$$
where we invoked integration by parts  and the fact that $\nabla^{\mu}\mathrm{vol}_{g} = 0$ in the equality. Then, standard computations for generally covariant theories (which can be found in Appendix E of \cite{wald}) show that for on-shell fields (here meaning $\varphi$ such that $\Delta_{g}\varphi = 0$), the above integral is identically zero. For this to be true, it must be the case that 
\begin{equation}
	\nabla^{\mu}T_{\mu\nu} = 0.
\end{equation}
In the language of Noether's Theorem, the stress-energy tensor $T_{\mu\nu}$ is the conserved current associated to general covariance, a symmetry of a field theory coupled to a metric. 

In our regime, this implies that the conservation law $\nabla^{\mu}T_{\mu\nu} = 0$ is what is ultimately responsible for allowing us to define an $L_{\infty}$ action of the dg Lie algebra $\mathfrak{g}_{g}$ associated to the formal neighborhood of $[g] \in [\mathscr{M/D}]$ on observables $\mathrm{Obs}^{\mathrm{cl}}(X, \mathscr{L}_{g})$ for our generally covariant BV field theory in the sense of Definition \ref{Linfinityaction}. However, what we have shown above is only up to first order in the metric perturbation! The differential $\{S^{\mathfrak{g}},-\}$ we defined previously in principle contains ``higher conservation laws" associated to higher $L_{\infty}$ brackets read off from higher order terms in the power series $h \in T_{g}\mathscr{M}$. The author would be interested in assigning a physical interpretation to this. 

(2) Additionally, since the third and fourth terms are (up to a sign) the same as the first two, this means considering the first two alone should give us all the relevant data of the stress-energy tensor for a generally covariant field theory: we could even find a second order vector field equivariance property analogous to the one stated at the end of Remark \ref{equivariantfunctional} (we do just that in Section \ref{higherorders} of the Appendix).
	
\end{rmk}

\begin{const}
	Let us consider the ``infinitesimal general covariance" property more formally. Insight (1) suggests that the action functionals $S_{g}(\varphi)$ and 
	$$
	S_{g+\varepsilon L_{V}g}(\varphi) = \frac{-1}{2}\int_{X} \varphi \Delta_{g}\varphi \mathrm{vol}_{g} - \frac{\varepsilon}{2} \int_{X} L_{V}g^{\mu\nu} T_{\mu\nu} \mathrm{vol}_{g} =: S_{g}(\varphi) + \varepsilon I_{g}(L_{V}g, \varphi),
	$$
	where this equality holds modulo $\varepsilon^{2}$, should produce the same dynamics: this is true because for on-shell fields, the second term is zero. In other words, if we were to make sense of the differential $Q_{g + \varepsilon L_{V}g}$ for the BV space of fields, it should be appropriately equivalent to $Q_{g}$. Moreover, $Q_{g}$ induces the differential $\{S_{g}, -  \}$ on $\mathrm{Obs}^{\mathrm{cl}}(X, \mathscr{F}_{g})$, so that we would like $\{S_{g + \varepsilon L_{V}g}, -  \} = \{S_{g},- \} + \varepsilon\{I_{g}(L_{V}g), - \}$, the induced differential on $\mathrm{Obs}^{\mathrm{cl}}(X, \mathscr{F}_{g + \varepsilon L_{V}g})$ from $Q_{g + \varepsilon L_{V}g}$, to be similarly equivalent. To give all of the above hands and legs, we must rigorously define $\mathscr{F}_{g + \varepsilon L_{V}g}$ and its observables in the first place. 
	
	Let $\mathbb{D}_{2} = \mathbf{R}[\varepsilon]/(\varepsilon^{2})$ denote the (real) dual numbers. We can tensor $\mathscr{F}_{g} = C^{\infty}(X) \xrightarrow{Q_{g}} \mathrm{Dens}(X)[-1]$ with $\mathbb{D}_{2}$ to get $\mathscr{F}_{g} \otimes \mathbb{D}_{2}$, whose elements can be written as $\varphi_{0} + \varepsilon \varphi_{1}$ in degree 0 and similarly for degree 1. The differential $Q_{g + \varepsilon L_{V}g}$ looks like
	$$
	Q_{g} + \varepsilon D.
	$$
	It remains only to find $D$, which will depend on $g$ and $V$ and must be so that 
	\begin{center}
\begin{tikzcd}
\mathscr{F}_{g} \otimes \mathbb{D}_{2} = C^{\infty}(X) \otimes \mathbb{D}_{2} \arrow[r, "Q_{g} + \varepsilon 0"] \arrow[d, "\mathrm{Id} + \varepsilon L_{V}"]
& \mathrm{Dens}(X)[-1] \otimes \mathbb{D}_{2} \arrow[d, "\mathrm{Id} + \varepsilon L_{V}"] \\
\mathscr{F}_{g} \otimes \mathbb{D}_{2} = C^{\infty}(X) \otimes \mathbb{D}_{2} \arrow[r, "Q_{g} + \varepsilon D"]
& \mathrm{Dens}(X)[-1] \otimes \mathbb{D}_{2}
\end{tikzcd}
\end{center}
commutes. The downward-pointing arrows are $\mathrm{Id} + \varepsilon L_{V}$ since we are still assuming the diffeomorphism $f$ is generated by the vector field $V$: concretely, this is the first order approximation to the commuting square in Lemma \ref{scalequiv}. Thus, we are trying to suss out a neat form of the first-order perturbation of $Q_{g}$ with respect to the metric when the perturbation is along a diffeomorphism orbit. Our computations from Equation (\ref{infgencov}) suggest that we try $D = [L_{V}, Q_{g}]$. 
\end{const}

\begin{lem}\label{perturbfields}
	Let $\widetilde{\mathscr{F}}_{g} := (\mathscr{F}_{g} \otimes \mathbb{D}_{2}, Q_{g})$ and $\widetilde{\mathscr{F}}_{g + \varepsilon L_{V}g} := (\mathscr{F}_{g} \otimes \mathbb{D}_{2}, Q_{g} + \varepsilon [L_{V}, Q_{g}])$. Then the map $\mathrm{Id} + \varepsilon L_{V} : \widetilde{\mathscr{F}}_{g} \to \widetilde{\mathscr{F}}_{g + \varepsilon L_{V}g}$ is a cochain isomorphism (i.e. it is an equivalence of free BV field theories).
\end{lem}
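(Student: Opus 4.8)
The plan is to establish three things in sequence: that $\mathrm{Id} + \varepsilon L_{V}$ is a cochain map between the two complexes, that it is invertible, and that it intertwines the $(-1)$-shifted symplectic pairings — together these are exactly what an equivalence of free BV field theories requires. First I would pin down the two differentials. Extending $Q_{g}$ and $L_{V}$ to be $\mathbb{D}_{2}$-linear, and recalling (from the Construction preceding the lemma, together with general covariance as in Lemma \ref{scalequiv}) that differentiating $Q_{f_{t}^{*}g}(f_{t}^{*}\varphi) = f_{t}^{*}(Q_{g}\varphi)$ at $t=0$ for $f_{t}$ the flow of $V$ yields $\tfrac{d}{dt}Q_{f_{t}^{*}g}\big|_{t=0} = L_{V}\circ Q_{g} - Q_{g}\circ L_{V} = [L_{V},Q_{g}]$, the operator $Q_{g}+\varepsilon[L_{V},Q_{g}]$ is the correct first-order truncation of the family along $L_{V}g\in T_{g}\mathscr{M}$. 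The cochain-map condition is precisely the commutativity of the square displayed just before the lemma; expanding both composites and discarding the $\varepsilon^{2}$ term gives $(\mathrm{Id}+\varepsilon L_{V})\circ Q_{g} = Q_{g}+\varepsilon L_{V}Q_{g}$ on one side and $(Q_{g}+\varepsilon[L_{V},Q_{g}])\circ(\mathrm{Id}+\varepsilon L_{V}) = Q_{g}+\varepsilon Q_{g}L_{V}+\varepsilon(L_{V}Q_{g}-Q_{g}L_{V}) = Q_{g}+\varepsilon L_{V}Q_{g}$ on the other, so the square commutes. No separate check that the differentials square to zero is needed, since both complexes are two-term. Invertibility is then immediate: $(\mathrm{Id}+\varepsilon L_{V})(\mathrm{Id}-\varepsilon L_{V}) = \mathrm{Id}-\varepsilon^{2}L_{V}^{2} = \mathrm{Id}$, so $\mathrm{Id}-\varepsilon L_{V}$ is a two-sided inverse, and the inverse of a bijective cochain map is automatically a cochain map.

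The one step that genuinely uses the geometry of $X$ rather than just the algebra of dual numbers is the compatibility with the $(-1)$-shifted symplectic pairing $\omega(\varphi,\mu) = \int_{X}\varphi\mu$ (now $\mathbb{D}_{2}$-valued). For $\varphi$ in degree $0$ and $\mu$ in degree $1$ one computes
\[
\omega\big((\mathrm{Id}+\varepsilon L_{V})\varphi,\ (\mathrm{Id}+\varepsilon L_{V})\mu\big) \;=\; \omega(\varphi,\mu) \;+\; \varepsilon\!\int_{X}\!\big((L_{V}\varphi)\mu + \varphi(L_{V}\mu)\big) \;=\; \omega(\varphi,\mu) + \varepsilon\!\int_{X} L_{V}(\varphi\mu),
\]
using the Leibniz rule for the Lie derivative on the density $\varphi\mu$, and then $\int_{X}L_{V}(\varphi\mu) = \int_{X} d\,\iota_{V}(\varphi\mu) = 0$ by Cartan's formula (the density $\varphi\mu$ is top-degree, so $d(\varphi\mu)=0$) together with Stokes' theorem, as $X$ is compact without boundary. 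Hence $\omega$ is preserved exactly, not merely to first order.

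The main obstacle is conceptual rather than computational: justifying that $\widetilde{\mathscr{F}}_{g+\varepsilon L_{V}g}$ really is the restriction of the $\mathscr{D}$-equivariant bundle $(\mathscr{F},Q)\to\mathscr{M}$ to the first-order neighborhood of $g$ in the orbit direction — that is, that identifying the deformation parameter $\varepsilon$ with the flow time and $[L_{V},Q_{g}]$ with $\tfrac{d}{dt}Q_{f_{t}^{*}g}|_{t=0}$ is the right reading of "$Q_{g+\varepsilon L_{V}g}$", and that calling $\mathrm{Id}+\varepsilon L_{V}$ the first-order approximation of $f^{*}$ is legitimate. Once that identification is granted — which is exactly what the preceding Construction and the relation $\Delta_{g+\varepsilon L_{V}g} = \Delta_{g}+\varepsilon[L_{V},\Delta_{g}]$ (mod $\varepsilon^{2}$) assert — the three verifications above are all short, and together they give that $\mathrm{Id}+\varepsilon L_{V}$ is an isomorphism of free BV field theories.
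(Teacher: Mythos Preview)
Your proof is correct and, in fact, supplies precisely the details the paper chooses to omit: the remark following the lemma states ``We omit the proof: it is straightforward, albeit tedious.'' Your verification that the square commutes and that $\mathrm{Id}-\varepsilon L_{V}$ is a two-sided inverse is exactly the tedium the paper has in mind, and your additional check that $\omega$ is preserved (via Cartan's formula and Stokes on the closed manifold $X$) is a useful strengthening, since the parenthetical ``equivalence of free BV field theories'' ought to include compatibility with the shifted symplectic structure, even though the paper does not flag this separately.
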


\begin{rmk}
	We omit the proof: it is straightforward, albeit tedious. The above is the perturbative realization of general covariance: intuitively, the free BV scalar field coupled to a metric is equivalent to the free BV scalar coupled to an infinitesimally close metric in the same diffeomorphism orbit. This lemma also states that for the free scalar field with differential $Q_{g}$ on its BV space of fields, the first order deformation of $Q_{g}$ along the $\mathscr{D}$-orbit starting at $g \in \mathscr{M}$ is exactly 
	$$
	D = [L_{V},Q_{g}].
	$$ 
	This provides a nice coordinate-free form of the stress-energy tensor. 

\end{rmk}

\begin{rmk}
Such a lemma holds for any BV theory which is generally covariant by our definition: the only caveat is that the bookkeeping required to prove lemmas like those above may be more painstaking. The issues are that the Lie derivative $L_{V}$ manifests differently on different choices of fields, so one must be careful, and the bookkeeping may be more painstaking with higher $L_{\infty}$ terms. Additionally, certain BV fields have more than two terms in their cochain complexes: the computations in that case are more cumbersome, but only in the sense of needing to check multiple squares commute. This happens for example in Example \ref{perturbativeYM}.
\end{rmk}

Our goal was not only to show that these two ``infinitesimally close" spaces of fields were equivalent, but to show that their associated observables were similarly equivalent. This is what we do next. We need the following lemma:

\begin{lem}\label{sym}
If $\alpha : (V, d_{V}) \to (W, d_{W})$ is an isomorphism of cochain complexes, then there is an induced isomorphism $\alpha : (\mathrm{Sym}(V), d_{V}) \to (\mathrm{Sym}(W), d_{W})$ of cochain complexes, where the differentials $d_{V}$ and $d_{W}$ are extended to the respective symmetric algebras as derivations.
\end{lem}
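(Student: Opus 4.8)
The plan is to construct the map explicitly, check it is a cochain map, and then check it is invertible, reducing everything to the fact that a linear cochain isomorphism extends functorially to symmetric algebras. Since $\alpha : (V, d_V) \to (W, d_W)$ is an isomorphism of graded vector spaces, for each $k \geq 0$ it induces an isomorphism $\alpha^{\otimes k} : V^{\otimes k} \to W^{\otimes k}$, which is $S_k$-equivariant because $\alpha$ acts diagonally; hence it descends to an isomorphism on the $k$-th graded-symmetric powers $\mathrm{Sym}^k(V) \xrightarrow{\cong} \mathrm{Sym}^k(W)$. Taking the direct sum over $k$ (or the product, in the completed setting relevant to $\mathscr{O}(\mathscr{F})$) gives a graded algebra isomorphism $\alpha : \mathrm{Sym}(V) \to \mathrm{Sym}(W)$, and in fact an algebra homomorphism since it is the unique algebra extension of $\alpha$ on generators.

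The first real step is to verify that this extension intertwines the derivations. Both $d_V$ on $\mathrm{Sym}(V)$ and $d_W$ on $\mathrm{Sym}(W)$ are, by hypothesis, the unique derivations extending the given differentials on the generators $V$ and $W$ respectively (with the appropriate Koszul signs for the graded Leibniz rule). The composite $\alpha \circ d_V \circ \alpha^{-1}$ is then a derivation of $\mathrm{Sym}(W)$, and on generators $w \in W$ it computes to $\alpha(d_V(\alpha^{-1}(w))) = d_W(w)$ precisely because $\alpha$ is a cochain map on $V \to W$. Since a derivation of a free (graded-)commutative algebra is determined by its values on generators, $\alpha \circ d_V \circ \alpha^{-1} = d_W$, i.e. $\alpha \circ d_V = d_W \circ \alpha$ on all of $\mathrm{Sym}(V)$. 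The second step is to note that $\alpha^{-1} : (W, d_W) \to (V, d_V)$ is also an isomorphism of cochain complexes, so the same construction produces an algebra map $\mathrm{Sym}(W) \to \mathrm{Sym}(V)$ extending $\alpha^{-1}$, and by uniqueness of algebra extensions this is a two-sided inverse to $\alpha : \mathrm{Sym}(V) \to \mathrm{Sym}(W)$. Therefore $\alpha$ is an isomorphism of cochain complexes, as claimed.

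The only genuinely delicate point — and the one I expect to be the main obstacle to a fully rigorous write-up rather than to the idea — is bookkeeping the graded (Koszul) signs: one must fix the convention that the derivation $d_V$ on $\mathrm{Sym}(V)$ acts by $d_V(v_1 \cdots v_n) = \sum_i \pm \, v_1 \cdots d_V(v_i) \cdots v_n$ with signs determined by moving $d_V$ past $v_1, \dots, v_{i-1}$, and check that $\alpha^{\otimes k}$, being built from a degree-zero map $\alpha$, introduces no extra signs and so commutes with these. In the completed setting (products over $k$, as in Definition~\ref{functionals}) one additionally checks that the extension and its inverse are continuous for the relevant Fréchet topologies, which is immediate since $\alpha$ and $\alpha^{-1}$ are continuous and the symmetric-power functors preserve this. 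Given that these are exactly the routine verifications the paper elsewhere leaves as exercises, I would present the argument at the level of the three bullet points above and relegate the sign check to a parenthetical remark.
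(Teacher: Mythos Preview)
Your proof is correct and is exactly the standard functoriality argument one would expect. The paper does not actually prove this lemma at all: it is stated without proof as an elementary fact and immediately used (via the remark that follows and Theorem~\ref{akeythm}); your write-up supplies precisely the routine verification the paper leaves implicit, including the correct observation that a derivation on a free graded-commutative algebra is determined on generators.
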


\begin{rmk}
It is similarly true that $\mathrm{Sym}(V^{\vee})$ and $\mathrm{Sym}(W^{\vee})$ are isomorphic cochain complexes: the differentials on $V^{\vee}$ and $W^{\vee}$ are induced by those on $V$ and $W$, and using this lemma once more gives $(\mathrm{Sym}(V^{\vee}), d_{V}) \cong (\mathrm{Sym}(W^{\vee}), d_{W})$. (We abuse notation so that $d_{V}$ and $d_{W}$ are the differentials induced from those on $V$ and $W$, respectively.)
\end{rmk}

One might expect that because the na\"{i}ve algebraic symmetric powers of $\mathscr{F}_{g}$ are not what we use to define observables, we should be wary; however, the completed projective tensor product we used to define functionals is the necessary one in the case of infinite-dimensional vector spaces for these constructions to carry over. We can now state a key theorem:

\begin{thm}\label{akeythm}
	We have the following isomorphism of classical observables:
	\begin{equation}
	\mathrm{Obs}^{\mathrm{cl}}(X,\widetilde{\mathscr{F}}_{g}) \cong \mathrm{Obs}^{\mathrm{cl}}(X, \widetilde{\mathscr{F}}_{g + \varepsilon L_{V}g}),
	\end{equation}
	where the isomorphism is induced by the isomorphism $\mathrm{Id} + \varepsilon L_{V}$ from Lemma \ref{perturbfields}. 
\end{thm}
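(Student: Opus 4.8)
The plan is to leverage the two algebraic lemmas just stated—Lemma \ref{perturbfields} and Lemma \ref{sym}—and to reduce the theorem to the bookkeeping check that the differential induced on the (completed) symmetric powers of the continuous dual is precisely the BV differential $\{S,-\}$ on each side.

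First I would unwind the definition of the objects. Since both $\widetilde{\mathscr{F}}_g$ and $\widetilde{\mathscr{F}}_{g+\varepsilon L_V g}$ are \emph{free} BV theories—the underlying graded bundle is $C^\infty(X)\oplus\mathrm{Dens}(X)[-1]$, now tensored with $\mathbb{D}_2$, and the action is purely quadratic—the remark following Definition \ref{BVdefn} gives $\{S,-\}=Q$ on $\mathscr{O}(\mathscr{F})$, with $Q$ extended as a derivation. Hence, writing $\widehat{\mathrm{Sym}}$ for the symmetric algebra built from the completed projective tensor product of Definition \ref{functionals},
\[
\mathrm{Obs}^{\mathrm{cl}}(X,\widetilde{\mathscr{F}}_g)=\bigl(\widehat{\mathrm{Sym}}(\widetilde{\mathscr{F}}_g^{\vee}),\,Q_g\bigr),
\qquad
\mathrm{Obs}^{\mathrm{cl}}(X,\widetilde{\mathscr{F}}_{g+\varepsilon L_V g})=\bigl(\widehat{\mathrm{Sym}}(\widetilde{\mathscr{F}}_{g+\varepsilon L_V g}^{\vee}),\,Q_g+\varepsilon[L_V,Q_g]\bigr),
\]
where in each case the differential is the unique derivation extending the one induced on the dual. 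Everything here is $\mathbb{D}_2$-linear, and so is $\mathrm{Id}+\varepsilon L_V$, so it suffices to produce a $\mathbb{D}_2$-linear isomorphism of differential graded algebras between the two right-hand sides.

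Next I would dualize. Lemma \ref{perturbfields} says $\mathrm{Id}+\varepsilon L_V$ is a cochain isomorphism with inverse $\mathrm{Id}-\varepsilon L_V$; since $\widetilde{\mathscr{F}}_g$ and $\widetilde{\mathscr{F}}_{g+\varepsilon L_V g}$ are nuclear Fréchet (Example \ref{sectionsfrech}) and $\mathbb{D}_2$ is finite-dimensional over $\mathbf{R}$, passing to strong duals is exact, and $(\mathrm{Id}+\varepsilon L_V)^{\vee}$ is again a cochain isomorphism intertwining the differential induced by $Q_g+\varepsilon[L_V,Q_g]$ on one dual with the one induced by $Q_g$ on the other. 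Applying Lemma \ref{sym}—in the form of the remark immediately after it, for $\mathrm{Sym}$ of duals—yields a dg-algebra isomorphism
\[
\widehat{\mathrm{Sym}}\bigl((\mathrm{Id}+\varepsilon L_V)^{\vee}\bigr):\ \bigl(\widehat{\mathrm{Sym}}(\widetilde{\mathscr{F}}_{g+\varepsilon L_V g}^{\vee}),\,Q_g+\varepsilon[L_V,Q_g]\bigr)\ \xrightarrow{\ \sim\ }\ \bigl(\widehat{\mathrm{Sym}}(\widetilde{\mathscr{F}}_g^{\vee}),\,Q_g\bigr),
\]
once one verifies that the derivation-extension of a dualized cochain map is intertwined by the induced symmetric-algebra map—this is exactly Lemma \ref{sym} read off on derivations. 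Finally, because Lemma \ref{perturbfields} asserts that $\mathrm{Id}+\varepsilon L_V$ is an \emph{equivalence of free BV field theories}, it preserves the $(-1)$-shifted symplectic pairing $\omega$; one then checks that the induced isomorphism of observables also intertwines the Poisson (anti-)brackets induced by $\omega$, not merely the dg-commutative structure.

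I expect the only genuine obstacle to be functional-analytic rather than structural: one must confirm that Lemma \ref{sym}, stated algebraically, really does transport to the completed projective (nuclear Fréchet) setting—that is, that $\widehat{\mathrm{Sym}}(-)$ is functorial for topological isomorphisms and that the derivation extension of $(\mathrm{Id}+\varepsilon L_V)^{\vee}$ is continuous—and that the differential thereby induced on $\widehat{\mathrm{Sym}}(\widetilde{\mathscr{F}}_g^{\vee})$ is indeed $\{S_g,-\}$, with no stray sign or correction term coming from the degree shift $[-1]$ on $\mathrm{Dens}(X)$ or from the $\varepsilon$-linear piece. The remark preceding the theorem already flags the first point (the completed projective tensor product is the correct one for these constructions to carry over), so the task is to make that precise; the sign bookkeeping is then routine given the conventions of Section \ref{BVtheory}.
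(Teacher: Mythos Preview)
Your proposal is correct and follows essentially the same approach as the paper: invoke Lemma \ref{perturbfields} for the cochain isomorphism $\mathrm{Id}+\varepsilon L_V$, then apply Lemma \ref{sym} (and its remark on duals) in the nuclear Fr\'echet setting using the completed projective tensor product to obtain the isomorphism of observables. You are in fact more explicit than the paper---spelling out dualization, $\mathbb{D}_2$-linearity, and preservation of the shifted symplectic pairing---whereas the paper's proof is a single sentence asserting that Lemma \ref{sym} carries over to the infinite-dimensional setting.
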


	\begin{proof}
		Since Lemma \ref{sym} holds for infinite dimensional cochain complexes with the definition of $\mathrm{Sym}(\widetilde{\mathscr{F}}_{g})$ as in Definition \ref{functionals} (i.e. with the completed projective tensor product), we indeed have that the isomorphism $\mathrm{Id} + \varepsilon L_{V}$ from Lemma \ref{perturbfields} induces an isomorphism of $(\mathscr{O}(\widetilde{\mathscr{F}}_{g}), \{S_{g}, - \})$ and $(\mathscr{O}(\widetilde{\mathscr{F}}_{g + \varepsilon L_{V}g}), \{S_{g}, - \} + \varepsilon \{I_{g}(L_{V}g), - \})$.  This is the result.
	\end{proof}

		Recall that although we have done the precise computations in the case of the massless free scalar field, the same statement holds in the case of any BV theory with differential $Q_{g}$.

\begin{rmk}
As mentioned earlier, $\mathfrak{g}_{g}$ is a sheaf on $\mathbf{Riem}_{n}$, and so identical computations as in Theorem \ref{akeythm} imply the analogous equivalence of factorization algebras for the equivariant observables of Proposition \ref{themainthm}.
\end{rmk}

\begin{rmk}
	This result follows almost directly from a theory exhibiting general covariance; however, having isomorphisms written down explicitly and recognizing their naturality when compared to the non-perturbative definition of general covariance provides a sanity check, not to mention an enhanced perspective on quantities like the stress-energy tensor.
	
Checking this theorem over a fixed $g \in \mathscr{M}$ and invoking $\mathscr{D}$-equivariance implies that $\{S_{g}, - \}$ is the differential over the entire diffeomorphism orbit $\mathscr{D} \cdot g \subset \mathscr{M}$. Similarly, seeing how $\{S_{g},-\}$ varies over a formal neighborhood of $g$ (i.e. expanding $\{S_{g+ \varepsilon h},-\}$ in consecutive orders of $\varepsilon h$) really grants us a view of the formal neighborhood of all of $\mathscr{D} \cdot g$: this is precisely equivalent to considering the formal neighborhood of $g$ as an element of the quotient stack $[\mathscr{M/D}]$. 
\end{rmk}

\begin{rmk}
Note that although Theorem \ref{akeythm} is computed on a fixed $X$ for simplicity, it holds at the level of factorization algebras, in the sense that $\mathrm{Obs}^{\mathrm{cl}}(-,\widetilde{\mathscr{F}}_{g})$ and  $\mathrm{Obs}^{\mathrm{cl}}(-, \widetilde{\mathscr{F}}_{g + \varepsilon L_{V}g})$ define equivalent factorization algebras $\mathbf{Riem}_{n} \to \mathbf{dgVect}$.
\end{rmk}

A few remarks on higher order versions of this theorem are made in Appendix \ref{higherorders}.

\section{Appendix}

\subsection{A detailed example}\label{findim}
The following is a detailed example of how Chevalley-Eilenberg cochains arise as functions on a formal neighborhood around a point in a stack: it is meant to supplement what was shown in Lemma \ref{CECs}. Fix coordinates $(x_{1}, \ldots, x_{n})$ on $\mathbf{R}^{n}$ and consider an action $P: G \to \textrm{Diff}(\mathbf{R}^{n})$ for a finite-dimensional Lie group $G$. The total derivative of this map is $\rho : \mathfrak{g} \to \textrm{Vect}(\mathbf{R}^{n}) \cong C^{\infty}(\mathbf{R}^{n}) \otimes \mathbf{R}^{n}$, which for $\alpha \in \mathfrak{g}$ has some coordinate expression: 
$$
\alpha \mapsto \sum_{i=1}^{n} f(x_{i}, \alpha)\partial_{i},
$$
where we use the shorthand $\partial / \partial x_{i} = \partial_{i}$. If we restrict to a formal neighborhood of the origin, $\widehat{\mathbf{R}}^{n}_{0}$, and compute its space of functions, we get the usual Taylor series of functions about the origin, $C^{\infty}(\widehat{\mathbf{R}}^{n}_{0}) \cong \widehat{\textrm{Sym}}(T_{0}^{\vee}\mathbf{R}^{n}) \cong \mathbf{R}\llbracket x_{1}, \ldots, x_{n} \rrbracket$, which we will denote $\mathbf{R} \llbracket \mathbf{x} \rrbracket$ when convenient. Thus, restricting the preceding derivative to the formal neighborhood of $0$ gives us $\rho_{0} : \mathfrak{g} \to \textrm{Vect}(\widehat{\mathbf{R}}^{n}_{0}) \cong \mathbf{R} \llbracket \mathbf{x} \rrbracket \otimes \widehat{\mathbf{R}}^{n}_{0}$, which looks like: 
$$
\alpha \mapsto \sum_{i=1}^{n} \hat{f}_{0}(x_{i}, \alpha)\partial_{i},
$$
where $\hat{f}_{0}$ denotes the Taylor expansion of $f$ at $0$. This defines an action of $\mathfrak{g}$ on $\mathbf{R} \llbracket \mathbf{x} \rrbracket$ by derivations, and so we can thus define $C^{\bullet}(\mathfrak{g}, \mathbf{R} \llbracket \mathbf{x} \rrbracket)$.

Fixing a basis $\{ \alpha_{1}, \ldots, \alpha_{m} \}$ for $\mathfrak{g}$ (assuming finite dimension $m$), denote the dual basis for $\mathfrak{g}^{\vee}$ as $\{ \alpha^{1}, \ldots, \alpha^{m} \}$. With these coordinates, we can write $C^{\bullet}(\mathfrak{g}, \mathbf{R} \llbracket \mathbf{x} \rrbracket)$ as $\mathbf{R} \llbracket \alpha^{1}, \ldots, \alpha^{m}, x_{1}, \ldots, x_{n} \rrbracket$, where the $\alpha^{k}$ are in degree $1$ and the $x_{k}$ in degree $0$. Thus, it is sufficient to see what the differential $d_{CE}$ does on an element of the form $\alpha^{k} \otimes x_{l}$, for $\alpha^{k} \in \mathfrak{g}^{\vee}[-1]$ and $x_{l} \in \mathbf{R} \llbracket \mathbf{x} \rrbracket$, to classify its behavior. Momentarily viewing $\alpha^{k}$ as a degree $1$ element of just $C^{\bullet}(\mathfrak{g}) = \textrm{Sym}(\mathfrak{g}^{\vee}[-1])$, and noting that $d_{CE} : \mathfrak{g}^{\vee}[-1] \to \textrm{Sym}^{2}(\mathfrak{g}^{\vee}[-1])$ is dual to the bracket $[-,-] : \textrm{Sym}^{2}(\mathfrak{g}^{\vee}[-1]) \to \mathfrak{g}^{\vee}[-1]$, we have:
$$
d_{CE}\alpha^{k} = \frac{-1}{2} \sum_{i,j = 1}^{m} c^{k}_{ij} \alpha^{i} \wedge \alpha^{j},
$$
where $c^{k}_{ij}$ are the structure constants for $\mathfrak{g}$. Concurrently, for $x_{l} \in \mathbf{R} \llbracket \mathbf{x} \rrbracket$,
$$
d_{CE}x_{l} = \sum_{i=1}^{m} \alpha^{i} \otimes \alpha_{i} \cdot x_{l}.
$$
Therefore, by requiring the usual derivation rules, we get,
$$
d_{CE}(\alpha^{k} \otimes x_{l}) = \frac{-1}{2} \sum_{i,j = 1}^{m} c^{k}_{ij} \alpha^{i} \wedge \alpha^{j} \otimes x_{l} + \sum_{i = 1}^{m} \alpha^{k} \wedge \alpha^{i} \otimes \alpha_{i} \cdot x_{l},
$$
which we extend to the rest of $C^{\bullet}(\mathfrak{g}, \mathbf{R} \llbracket \mathbf{x} \rrbracket)$ with the Leibniz rule. A coordinateless way of writing this is $d_{CE} = [-,-]_{\mathfrak{g}}^{\vee} + \rho^{\vee}_{0}$, where $\rho^{\vee}_{0}$ encodes a dual to the action map $\rho_{0} : \textrm{Vect}(\widehat{\mathbf{R}}^{n}_{0}) \to \mathfrak{g}^{\vee} \otimes \textrm{Vect}(\widehat{\mathbf{R}}^{n}_{0})$ as described implictly above. In this example, $d$ is in fact a vector field, specificially
\begin{equation}
    d_{CE} = \frac{-1}{2}c^{k}_{ij}\alpha^{i} \wedge \alpha^{j} \frac{\partial}{\partial \alpha^{k}} + \alpha^{i} \otimes (\alpha_{i} \cdot x_{l}) \frac{\partial}{\partial x_{l}},
\end{equation}
on the formal neighborhood of $0$ in the stack $[\mathbf{R}^{n}/G]$, where we have used the Einstein summation convention over repeated indices in the last step.

To make things easier to grasp, let us consider the case of $SO(2)$ acting on $\mathbf{R}^{2}$ via rotations. Then $\mathfrak{g} = \mathfrak{so}(2)$ and the Taylor series ring about the origin is $\mathbf{R} \llbracket x,y \rrbracket$. The representation map $\rho_{0} : \mathfrak{so}(2) \to \textrm{Der}(\mathbf{R} \llbracket x,y \rrbracket) \cong \mathbf{R} \llbracket x,y \rrbracket \otimes \mathbf{R}^{2}$ is 
$$
\begin{pmatrix}
0 & -1\\
1 & 0
\end{pmatrix}  \mapsto  y\partial_{x} - x\partial_{y},
$$ 
from which we can define $C^{\bullet}(\mathfrak{so}(2), \mathbf{R} \llbracket x,y \rrbracket)$. We will leave it as an exercise to the reader to show that $H^{0}(\mathfrak{so}(2), \mathbf{R} \llbracket x,y \rrbracket)$ is the set of rotation-invariant Taylor series around $0$. This is not surprising: more generally, in the case of $SO(n)$ acting on $\mathbf{R}^{n}$, $H^{0}(\mathfrak{so}(n), \mathbf{R} \llbracket x_{1}, \ldots, x_{n} \rrbracket)$ is the set of $SO(n)$-invariant Taylor series around the origin in $\mathbf{R}^{n}$. 

Another enlightening exercise is to consider the appropriate Chevalley-Eilenberg cochains coming from formal neighborhoods of points away from the origin; e.g. the zeroth cohomology group of the cochains around $(x_{0}, 0)$ is isomorphic to $\mathbf{R} \llbracket x-x_{0} \rrbracket$. Notice there that the vector fields coming from the action at these non-fixed points have constant coefficient terms.

\subsection{A remark on higher orders}\label{higherorders}
We would like to make sense of Theorem \ref{akeythm} in the case that we do not cut off the orders of $\varepsilon$ after only a linear perturbation. The linear perturbation gives the necessary data to understand the stress-energy tensor in the usual way; however retaining higher orders of $\varepsilon$ to compute ``higher" stress-energy tensors may be relevant, and the BV formalism gives an ideal way of interpreting and packaging that data. To put it plainly, we'd like to expand $\{S_{g+ \varepsilon h},-\}$ in more powers of $\varepsilon h$. 

\begin{const}
A concrete jumping-off point here would be to consider that for a generally covariant theory, we have 
	\begin{equation}\label{infgencov2}
\frac{d^{k}}{dt^{k}}\int_{X} (f_{t}^{*}\varphi)\Delta_{	f_{t}^{*}g} (f_{t}^{*}\varphi)\mathrm{vol}_{f_{t}^{*}g} \Big|_{t=0} = 0
\end{equation}
for any $k > 0$. This is the general form of Equation (\ref{infgencov}). For now, let us stick with $k=2$. The above should have an analogous unpacking to the one following Equation (\ref{infgencov}); however, we then need to make sense of 
$
\frac{d^{2}}{dt^{2}}(f_{t}^{*}\varphi) \big|_{t=0}. 
$
A short exercise in differential geometry gives us that  
\begin{equation}\label{LVLV}
\frac{d^{2}}{dt^{2}}(f_{t}^{*}\varphi) \big|_{t=0} = \frac{1}{2} L_{V} (L_{V}\varphi),
\end{equation}
as one might expect; the same equation holds for any $k > 0$, and the right side is in fact equal to $\frac{1}{k!}L_{V}^{k}\varphi$, where $L_{V}^{k}$ denotes taking the Lie derivative with respect to the vector field $V$ $k$ times. 

We can now consider a similar computation to the one in Lemma \ref{perturbfields}, replacing $\mathbb{D}_{2}$ with $\mathbb{D}_{3} := \mathbf{R}[\varepsilon]/(\varepsilon^{3})$ and using the above identity, to figure out what operator $D_{2}$ makes the following square commute: 

\[\begin{tikzcd}
	{\mathscr{F}_{g} \otimes \mathbb{D}_{3} = C^{\infty}(X) \otimes \mathbb{D}_{3}} &&&&& {\mathrm{Dens}(X)[-1] \otimes \mathbb{D}_{3}} \\
	\\
	\\
	{\mathscr{F}_{g} \otimes \mathbb{D}_{3} = C^{\infty}(X) \otimes \mathbb{D}_{3}} &&&&& {\mathrm{Dens}(X)[-1] \otimes \mathbb{D}_{3}}
	\arrow["{\mathrm{Id} + \varepsilon L_{V} + \frac{\varepsilon^{2}}{2} L_{V}^{2}}", from=1-1, to=4-1]
	\arrow["{Q_{g} + \varepsilon 0 + \varepsilon^{2}0}", from=1-1, to=1-6]
	\arrow["{\mathrm{Id} + \varepsilon L_{V} + \frac{\varepsilon^{2}}{2} L_{V}^{2}}", from=1-6, to=4-6]
	\arrow["{Q_{g} + \varepsilon D_{1} + \varepsilon^{2}D_{2}}", from=4-1, to=4-6],
\end{tikzcd}\]
where we have renamed $D = D_{1}$ from above to emphasize the order of $\varepsilon$ it is associated to. The result is the following:
\end{const}	

\begin{lem}
The above square commutes if we choose
$$
D_{2} = \frac{1}{2}[L_{V}^{2}, Q_{g}] - [L_{V}, Q_{g}]L_{V}.
$$	
Moreover, $\widetilde{\mathscr{F}}_{g} := (\mathscr{F}_{g} \otimes \mathbb{D}_{3}, Q_{g})$ and $ \widetilde{\mathscr{F}}_{g + \varepsilon L_{V}g} := (\mathscr{F}_{g} \otimes \mathbb{D}_{3}, Q_{g} + \varepsilon [L_{V}, Q_{g}] + \varepsilon^{2}(\frac{1}{2}[L_{V}^{2}, Q_{g}] - [L_{V}, Q_{g}]L_{V}) )$, are cochain isomorphic via the map $\mathrm{Id} + \varepsilon L_{V} + \frac{\varepsilon^{2}}{2}L_{V}^{2}$. 
\end{lem}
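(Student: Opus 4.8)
The plan is to treat the whole statement as a first-order-after-first-order unpacking exactly parallel to the proof of Lemma \ref{perturbfields}, but carried one power of $\varepsilon$ further, over the base ring $\mathbb{D}_{3} = \mathbf{R}[\varepsilon]/(\varepsilon^{3})$. Write $\Phi := \mathrm{Id} + \varepsilon L_{V} + \tfrac{\varepsilon^{2}}{2}L_{V}^{2}$ for the vertical map, understood as acting with the appropriate Lie derivative on each of $C^{\infty}(X)\otimes\mathbb{D}_{3}$ and $\mathrm{Dens}(X)[-1]\otimes\mathbb{D}_{3}$, and set $D_{1} := [L_{V},Q_{g}]$ as in Lemma \ref{perturbfields}. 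The square commutes precisely when $\Phi \circ Q_{g} = (Q_{g} + \varepsilon D_{1} + \varepsilon^{2}D_{2}) \circ \Phi$ as operators on $\mathscr{F}_{g}\otimes\mathbb{D}_{3}$; since $\varepsilon^{3}=0$, verifying this reduces to matching the coefficients of $\varepsilon^{0},\varepsilon^{1},\varepsilon^{2}$. Conceptually it is worth noting first that $\Phi$ is the second-order truncation of the pullback $f_{\varepsilon}^{*}$ along the flow of $V$ (by Equation (\ref{LVLV}) and its higher analogues), so the square we are checking is just the order-$\varepsilon^{2}$ truncation of the general covariance square of Lemma \ref{scalequiv}, and $Q_{g}+\varepsilon D_{1}+\varepsilon^{2}D_{2}$ is forced to be the second-order Taylor expansion of $Q_{g+\varepsilon L_{V}g}$, in line with Equation (\ref{infgencov2}) for $k=2$.

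Next I would do the coefficient matching. The $\varepsilon^{0}$ coefficient gives $Q_{g}=Q_{g}$; the $\varepsilon^{1}$ coefficient gives $L_{V}Q_{g} = D_{1} + Q_{g}L_{V}$, i.e. $D_{1}=[L_{V},Q_{g}]$, which is exactly Lemma \ref{perturbfields} and justifies renaming $D$ as $D_{1}$. The only new content is the $\varepsilon^{2}$ coefficient: the left side contributes $\tfrac12 L_{V}^{2}Q_{g}$, while the right side contributes $Q_{g}\cdot\tfrac12 L_{V}^{2} + D_{1}L_{V} + D_{2}$. Equating and solving for $D_{2}$ gives $D_{2} = \tfrac12 L_{V}^{2}Q_{g} - \tfrac12 Q_{g}L_{V}^{2} - D_{1}L_{V} = \tfrac12[L_{V}^{2},Q_{g}] - [L_{V},Q_{g}]L_{V}$, which is the asserted formula. (Alternatively, one can obtain $D_{2}$ by differentiating $\int_{X}(f_{t}^{*}\varphi)\Delta_{f_{t}^{*}g}(f_{t}^{*}\varphi)\mathrm{vol}_{f_{t}^{*}g}$ twice and reading off the terms, which is the substance of Equation (\ref{infgencov2}); this is a useful cross-check.)

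For the ``moreover'' clause, the commuting square just established says precisely that $\Phi$ is a cochain map from $\widetilde{\mathscr{F}}_{g}=(\mathscr{F}_{g}\otimes\mathbb{D}_{3},Q_{g})$ to $\widetilde{\mathscr{F}}_{g+\varepsilon L_{V}g}=(\mathscr{F}_{g}\otimes\mathbb{D}_{3}, Q_{g}+\varepsilon D_{1}+\varepsilon^{2}D_{2})$. Invertibility of $\Phi$ is immediate: $\Phi=\mathrm{Id}+N$ with $N=\varepsilon L_{V}+\tfrac{\varepsilon^{2}}{2}L_{V}^{2}$ satisfying $N^{3}=0$ over $\mathbb{D}_{3}$, and a one-line computation shows $\Phi^{-1}=\mathrm{Id}-\varepsilon L_{V}+\tfrac{\varepsilon^{2}}{2}L_{V}^{2}$ (the truncation of $\mathrm{exp}(-\varepsilon L_{V})$). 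Finally one should note that $Q_{g}+\varepsilon D_{1}+\varepsilon^{2}D_{2}$ squares to zero modulo $\varepsilon^{3}$: by the commuting square it equals $\Phi Q_{g}\Phi^{-1}$, hence its square is $\Phi Q_{g}^{2}\Phi^{-1}=0$. Thus $\widetilde{\mathscr{F}}_{g+\varepsilon L_{V}g}$ is a genuine two-term free BV theory and $\Phi$ a cochain isomorphism, exactly as in Lemma \ref{perturbfields}.

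The computation is routine; the step I expect to take the most care is the bookkeeping around $L_{V}$, which by a standard abuse of notation denotes two distinct operators — the Lie derivative on functions and on densities — neither of which commutes with $Q_{g}$, so that $[L_{V},Q_{g}]$ and $[L_{V}^{2},Q_{g}]$ must genuinely be read as nonzero polydifferential operators (e.g. the $\varepsilon^{1}$ identity is really $L_{V}^{\mathrm{Dens}}Q_{g}-Q_{g}L_{V}^{\mathrm{fns}}=D_{1}$). The only real subtlety is tracking the $\tfrac12$ factors in the $\varepsilon^{2}$ coefficient, which enter twice — once from the $\tfrac{\varepsilon^{2}}{2}L_{V}^{2}$ entry of $\Phi$ on the source side and once from the corresponding entry on the target side — and must cancel correctly against $D_{1}L_{V}$ to leave the stated $D_{2}$.
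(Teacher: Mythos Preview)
Your proposal is correct and follows exactly the approach the paper sets up: the paper does not give an explicit proof of this lemma (just as it omits the proof of Lemma \ref{perturbfields}, calling it ``straightforward, albeit tedious''), but the construction preceding the lemma makes clear that one is meant to match coefficients of $\varepsilon^{0},\varepsilon^{1},\varepsilon^{2}$ in the commuting-square equation, which is precisely what you do. Your additional observations---that $\Phi$ is invertible with inverse the truncated $\exp(-\varepsilon L_{V})$, and that $(Q_{g}+\varepsilon D_{1}+\varepsilon^{2}D_{2})^{2}=0$ follows from $\Phi Q_{g}\Phi^{-1}$---are useful sanity checks that go slightly beyond what the paper spells out.
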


\begin{rmk}
	The operator $D_{2} = \frac{1}{2}[L_{V}^{2}, Q_{g}] - [L_{V}, Q_{g}]L_{V}$ thus represents a sort of ``higher" stress-energy tensor for a generally covariant theory, in the same way that $D_{1} = [L_{V},Q_{g}]$ did so in the first order case. It also satisfies some conservation property (analogous to $\nabla^{\mu}T_{\mu\nu} = 0$): otherwise, we would not have this cochain isomorphism of field theories. However, it would be harder to pin down a physical interpretation of the associated conservation law. 
	\end{rmk}
	
	\begin{rmk}
	Additionally, we could now update Theorem \ref{akeythm} so that it holds up to second order in $\varepsilon$: the isomorphism of observables is induced from the isomorphism $\mathrm{Id} + \varepsilon L_{V} + \frac{\varepsilon^{2}}{2}L_{V}^{2}$ of the field theories. The proof is otherwise the same. It may be clear to the reader by now that these results can be generalized to arbitrarily high orders of $\varepsilon$. In that case, we can expand the differential as
	$$
	Q_{g + \varepsilon L_{V}g} = Q_{g} + \varepsilon [L_{V}, Q_{g}] + \varepsilon^{2} (\frac{1}{2}[L_{V}^{2}, Q_{g}] - [L_{V}, Q_{g}]L_{V}) + \varepsilon^{3}D_{3} + \ldots .  
	$$
	on the fields--as long as the metric perturbation is induced by a vector field--and pick out $D_{k}$ for all $k \in \mathbf{N}$ so that we get an analogous commutative square, with the isomorphism 
	$$
	\mathrm{Id} + \sum_{k=1}^{\infty} \frac{\varepsilon^{k}}{k!}L_{V}^{k}.
	$$ 
Thus, the isomorphism of observables 
$
	\mathrm{Obs}^{\mathrm{cl}}(X,\widetilde{\mathscr{F}}_{g}) \cong \mathrm{Obs}^{\mathrm{cl}}(X,\widetilde{\mathscr{F}}_{g + \varepsilon L_{V}g})
$
remains true to all orders of $\varepsilon$, since we see from the above that all of the appropriate $D_{k}$ must exist, regardless of how difficult they are to compute or interpret physically. To see more explicitly, one would need to consider expansions of $\{S_{g + \varepsilon h}, - \}$ to all orders of $\varepsilon h$: a first step in this case would be to generalize Lemma \ref{derlaplace} to higher derivatives with respect to $t$.

\begin{rmk}
As well as checking for agreement with a generally covariant theory when $h = L_{V}g$ as we did above, we may want to consider perturbations in the direction of various geometric flows. For example, it would be fruitful to consider field theories over metrics related by the Ricci flow, and a first step here would be to perturb a fixed metric $g$ in the direction of that flow. Ricci flow $\partial_{t}g = -\mathrm{Ric}(g)$ is itself a ``generally covariant flow" in the sense that the equation is diffeomorphism equivariant:\footnote{This flow is in fact the gradient flow of the diffeomorphism equivariant Einstein-Hilbert action functional.} I would be interested to see how viewing it as a flow on the moduli stack of metrics modulo diffeomorphism would provide some advantages.
\end{rmk}
	
\end{rmk}

\subsection{Acknowledgements} My gratitude primarily goes to my advisor Owen Gwilliam for suggesting this course of study and for steadfastly supporting me during the writing and revision process. I would also like to thank Eugene Rabinovich for his patience in working out many details in the BV framework with me, Andreas Hayash for being a great interlocutor in the subject of stacks, and Nicholas Teh for actively engaging with me about the content of this paper. Finally, many thanks are due to the anonymous referees for helping me greatly improve this work, and for motivating a broadening of my knowledge generally.

\subsection{Statements and Declarations} \textit{Conflict of Interest Statement:} On behalf of all authors, the corresponding author states that there is no conflict of interest. \\
\textit{Data Availability Statement:} My manuscript has no associated data.

\end{document}